\numberwithin{equation}{section}
\newcommand{\tet}{\theta}
\newcommand{\vy}{\boldsymbol{y}}
\newcommand{\tx}{\tilde{x}}
\newcommand{\vtx}{\boldsymbol{\tilde{x}}}
\newcommand{\vty}{\boldsymbol{\tilde{y}}}
\newlength{\intwidth}
\DeclareRobustCommand{\fpint}[2]
   {\mathop{%
      \text{%
        \settowidth{\intwidth}{$\int$}%
        \makebox[0pt][l]{\makebox[\intwidth]{$-$}}%
        $\int_{#1}^{#2}$}}}
\newcommand{\xxa}{\stackrel {\scriptscriptstyle \times}{\scriptscriptstyle \times} \!}
\newcommand{\xxe}{\! \stackrel {\scriptscriptstyle \times}{\scriptscriptstyle \times}}
\newcommand{\nna}{:\! }
\newcommand{\nne}{\!:}
\newcommand{\F}{\mathrm{F}}
\newcommand{\hrho}{ \rho}
\newcommand{\sign}{{\rm sgn}}
\newcommand{\ee}{{\rm e}}
\newcommand{\ii}{{\rm i}}
\newcommand{\dd}{{\rm d}}
\newcommand{\id}{I}
\newcommand{\ttet}{\tilde{\tet}}
\newcommand{\eps}{\epsilon}
\newcommand{\R}{{\mathbb R}}
\newcommand{\C}{{\mathbb C}}
\newcommand{\Z}{{\mathbb Z}}
\newcommand{\cH}{{\mathcal  H}}
\newcommand{\cC}{{\mathcal  C}}
\newcommand{\cF}{{\mathcal  F}}
\newcommand{\cD}{\mathcal{D}} 
\newcommand{\cA}{\mathcal{A}}
\newcommand{\cU}{\mathcal{U}} 
\newcommand{\im}{\mathrm{Im}}
\newcommand{\re}{\mathrm{Re}}
\newcommand{\cR}{\mathcal{R}}
\newcommand{\pdag}{^{\phantom\dag}}
\newtheorem{theorem}{Theorem}[section]
\newtheorem{lemma}[theorem]{Lemma}
\newtheorem{proposition}[theorem]{Proposition}
\theoremstyle{definition}
\newtheorem{definition}[theorem]{Definition}
\theoremstyle{remark}
\newtheorem{remark}[theorem]{Remark}
\newcommand{\half}{\mbox{$\frac12$}}
\newcommand{\sgn}{{\rm sgn}} 
\newcommand{\vx}{{\boldsymbol{x}}}
\newcommand{\vr}{{\boldsymbol{r}}}
\newcommand{\vm}{{\boldsymbol{m}}}
\title{\Large{\bf{Conformal field theory, solitons, and \\ elliptic Calogero--Sutherland models}}}
\date{\vspace{-0.5cm}\small\vspace{-0.6cm}}
\author[1]{Bjorn K. Berntson\footnote{Current address: Riverlane Research, Cambridge, CB2 3BZ, United Kingdom}}
\author[1,2]{Edwin Langmann}
\author[3]{Jonatan Lenells}
\affil[1]{Department of Physics, KTH Royal Institute of Technology, 10691 Stockholm, Sweden}
\affil[2]{Nordita, KTH Royal Institute of Technology and Stockholm University, 10691 Stockholm, Sweden}
\affil[3]{Department of Mathematics, KTH Royal Institute of Technology, 10044 Stockholm, Sweden}
\begin{document}
\maketitle

\let\oldthefootnote\thefootnote
\renewcommand{\thefootnote}{\fnsymbol{footnote}}
%\footnotetext[1]{Electronic address: {\tt langmann@kth.se}}
\let\thefootnote\oldthefootnote

\begin{abstract}
We construct a non-chiral conformal field theory (CFT) on the torus that accommodates a second quantization of the elliptic Calogero-Sutherland (eCS) model. 
We show that the CFT operator that provides this second quantization defines, at the same time, a quantum version of a soliton equation called the non-chiral intermediate long-wave (ncILW) equation. 
We also show that this CFT operator is a second quantization of a generalized eCS model which can describe arbitrary numbers of four different kinds of particles; 
we propose that these particles can be identified with solitons of the quantum ncILW equation. 
\end{abstract} 

\newcommand{\hu}{\rho}
\newcommand{\hv}{\sigma} 

%\setcounter{tocdepth}{1}
%\tableofcontents

\section{Introduction} 
\label{sec:Intro} 
The present paper was inspired by the work of Abanov and Wiegmann revealing remarkable relations between conformal field theory, soliton equations, and quantum integrable systems of Calogero-Moser-Sutherland type \cite{abanov2005}; see also \cite{stone2007, stone2008,abanov2009}. 
In particular, we substantiate the proposal in \cite{abanov2009} that there should exist an interesting generalization of the Benjamin-Ono equation \cite{benjamin1967,ono1975} related to conformal field theory (CFT) and the elliptic Calogero-Sutherland (eCS) model\footnote{By {\em elliptic Calogero-Sutherland model} we mean the {\em quantum $A$-type elliptic Calogero-Moser-Sutherland model}.} (see \cite{olshanetsky1983} for a review of quantum Calogero-Moser-Sutherland systems).  
A natural candidate for this generalization would be the intermediate long-wave (ILW) equation \cite{joseph1977,kubota1978,kodama1981}, but this is not the equation we find. 
Instead, we obtain a two-component equation with ILW-type non-local terms;
%introduced in \cite{berntson2020} and called the (periodic\footnote{In the rest of this paper, {\em ncILW equation} is short for {\em periodic ncILW equation}.}) {\em non-chiral ILW} (ncILW) equation. 
this two-component equation was introduced in \cite{berntson2020} and called the (periodic\footnote{In the rest of this paper, {\em ncILW equation} is short for {\em periodic ncILW equation}.}) {\em non-chiral ILW} (ncILW) equation.
Our main result can be summarized as follows: {\em the same CFT operator that provides a second quantization of the eCS model \cite{langmann2000,langmann2004} defines also a quantum version of the ncILW equation}. We also show that this CFT operator is, in fact, a second quantization of a generalization of the eCS model that can describe arbitrary numbers of four different kinds of quantum particles, extending previous results in \cite{atai2017} for the trigonometric case. Some of the results in this paper were announced in \cite{berntson2020}, and the integrability properties of the classical version of the ncILW equation were established in \cite{berntson2022,berntsonlangmann2022}. 

Throughout this paper, $\ell>0$, $\delta>0$, and $g>0$ are fixed constants. As explained below, these three parameters appear both in the ncILW equation and the eCS model. 
For reasons explained in Section~\ref{sec:eCSgen}, our general results are restricted to rational values of the parameter $g$. 
We also use the abbreviations 
\begin{equation}
\label{constants}  
\kappa\coloneqq \frac{\pi}{2\ell},\quad q\coloneqq \ee^{-2\kappa \delta},\quad 
c_0\coloneqq  \frac1{3}\kappa^2-8\kappa^2\sum_{n=1}^{\infty}\frac{{n}q^{2n}}{1-q^{2n}}.
\end{equation} 
We use units such that Planck's constant, $\hbar$, and the mass of the eCS particles are set to 1. 

The ncILW equation describes the time evolution of two $2\ell$-periodic scalar functions $u$ and $v$ of position $x\in \R$ and time $t\in  \R$ (i.e., $u=u(x,t)=u(x+2\ell,t)$ and similarly for $v$)  as follows, 
\begin{equation} 
\label{ncILW} 
\begin{split} 
&u_t + 2uu_x+\frac{g}{2}( Tu_{xx} +\tilde{T}v_{xx} ) =0,\\
&v_t - 2vv_x- \frac{g}{2}( Tv_{xx} +\tilde{T}u_{xx} )=0,
\end{split}
\end{equation} 
where subscripts denote partial derivatives and $T, \tilde T$ are integral transforms defined by
\begin{equation}
\label{TT}
\begin{split} 
&(Tf)(x) \coloneqq  \frac1{\pi}\fpint{-\ell}{\ell} \zeta_1(x'-x)f(x')\,\dd{x}',\\
&(\tilde{T}f)(x) \coloneqq \frac1{\pi}\int_{-\ell}^{\ell} \zeta_1(x'-x+\ii\delta)f(x')\,\dd{x}',
\end{split} 
\end{equation}
with
\begin{equation} 
\label{zeta1} 
\zeta_1(z) \coloneqq \lim_{M\to\infty}\sum_{m=-M}^M \kappa\cot(\kappa(z-2\ii m\delta)) \qquad (z\in\C).
\end{equation}
Since the partial derivative $\frac{\partial}{\partial x}$ commutes with the integral operators $T$ and $\tilde{T}$ when acting on zero-mean functions, the term $Tu_{xx}$ in \eqref{ncILW} can be interpreted either as $(Tu)_{xx}$ or as $T(u_{xx})$, and similarly for $\tilde{T}$.
The function $\zeta_1$ defined in \eqref{zeta1} is a $2\ell$-periodic variant of the  Weierstrass  $\zeta$-function with half-periods $(\ell,\ii\delta)$ (see Appendix~\ref{app:special}).
While one can set $g=2$ without loss of generality in the classical case,\footnote{This is true since \eqref{ncILW} is invariant under $(u(x,t),v(x,t))\to (su(x,t/s),sv(x,t/s))$, $g\to sg$ ($s>0$).} $g$ is an important parameter in the quantum case; the parameter $\delta>0$ is important in both cases with the trigonometric case obtained in the limit $\delta\to\infty$. 
The parameter $\ell$ sets a length scale and, for most of what we do in the present paper, one can set $\ell=\pi$; however, since $\ell\to\infty$ is an important limiting case from a physics point of view, we keep $\ell$ in our equations. 
To ease notation, we also use the redundant parameter $\kappa$ defined in \eqref{constants}. 
Note that $\kappa=1/2$ for $\ell=\pi$, $\zeta_1(z)\to (\pi/\delta)\coth(\pi z/\delta)$ as $\kappa\to 0$, and $\zeta_1(z+\ii\delta)\to (\pi/\delta)\tanh(\pi z/\delta)$ as $\ell\to\infty$. 

The eCS model is defined by the Hamiltonian\footnote{To be more precise, the Schr\"odinger operator in \eqref{eCS} defines a symmetric operator on a suitable domain, and the eCS Hamiltonian is a particular self-adjoint extension of this symmetric operator.}
\begin{equation} 
\label{eCS} 
H_{N;g}(\vx)\coloneqq -\sum_{j=1}^N \frac12 \frac{\partial^2}{\partial x_j^2} + \sum_{1\leq j<k\leq N} g(g-1) \wp_1(x_j-x_k) 
\end{equation} 
where $\wp_1(z) \coloneqq -\zeta_1'(z)$ is equal to  the Weierstrass elliptic $\wp$-function with half-periods $(\ell,\ii\delta)$, up to an additive constant (see Appendix~\ref{app:special}), with variables $\vx=(x_1,\ldots,x_N)$ on the torus $[-\ell,\ell]^N$ for arbitrary $N\in\Z_{\geq 1}$. We show that an operator constructed in \cite{langmann2000,langmann2004} as a second quantization of the eCS Hamiltonian \eqref{eCS} allows for a natural generalization which defines a quantum version of the ncILW equation \eqref{ncILW}. We also show that this operator is, in fact, a second quantization of a generalization of the eCS model where each particle $j=1,\ldots,N$ has two labels $r_j=\pm$ and $m_j\in\{1,-1/g\}$, and where the corresponding Schr\"odinger operator can be written as
\begin{equation} 
\label{eCS2} 
H^{(\vr,\vm)}_{N;g}(\vx)\coloneqq -\sum_{j=1}^N \frac1{2m_j} \frac{\partial^2}{\partial x_j^2} + \sum_{1\leq j<k\leq N} m_jm_kg(g-1)\wp_{r_j,r_k}(x_j-x_k) 
\end{equation} 
with $\vr=(r_1,\ldots,r_N)\in\{\pm\}^N$, $\vm=(m_1,\ldots,m_N)\in\{1,-1/g\}^N$, and
\begin{equation} 
\wp_{r,r'}(x)\coloneqq \begin{cases} \wp_1(x) & (r=r') \\ \wp_1(x+\ii\delta) & (r=-r'). \end{cases} 
\end{equation} 
Clearly, the standard eCS Hamiltonian \eqref{eCS} corresponds to the special case where all particles have the same labels: $(r_j,m_j)=(+,1)$ for $j=1,\ldots,N$. 
Another interesting special case is $(x_j,r_j,m_j)=(x_j,+,1)$ for $j=1,\ldots,N_1<N$ and $(x_j,r_j,m_j)=(\tx_{j-N_1},+,-1/g)$ for $j=N_1+1,\ldots,N_1+M_1$ with $M_1=N-N_1$,  reducing the operator in \eqref{eCS2} to  (we rename $(N_1,M_1)\to (N,M)$)
\begin{equation} 
\label{deCS} 
H_{N,M;g}(\vx,\vtx) = H_{N;g}(\vx)-gH_{M;1/g}(\vtx) +  \sum_{j=1}^{N} \sum_{k=1}^{M} (1-g) \wp_1(x_j-\tilde{x}_k) .
\end{equation} 
This operator defines the deformed eCS model introduced by Sergeev and Veselov \cite{sergeev2004} which is expected to be integrable (as discussed in Appendix~\ref{app:int}, the current status of this integrability question is somewhat complicated). In general, one can write the operator in \eqref{eCS2} as 
\begin{multline} 
\label{eCSgen}
H_{N_1,M_1,N_2,M_2;g}(\vx,\vtx,\vy,\vty) = H_{N_1,M_1;g}(\vx,\vtx) + H_{N_2,M_2;g}(\vy,\vty) \\  + 
\sum_{j=1}^{N_1}\sum_{k=1}^{N_2}g(g-1)\wp_1(x_j-y_k+\ii\delta) +  \sum_{j=1}^{M_1}\sum_{k=1}^{M_2}g(g-1)\wp_1(\tilde{x}_j-\tilde{y}_k+\ii\delta) \\  + 
\sum_{j=1}^{N_1}\sum_{k=1}^{M_2}(1-g)\wp_1(\tilde{x}_j-y_k+\ii\delta) +  \sum_{j=1}^{N_1}\sum_{k=1}^{M_2}(1-g)\wp_1(x_j-\tilde{y}_k+\ii\delta),     
\end{multline} 
where $N_1,M_1,N_2,M_2\in\Z_{\geq 0}$ are the number of particles with labels $(r_j,m_j)=(+,1)$, $(-,1)$, $(+,-1/g)$, $(-,-1/g)$, respectively, and $N_1+M_1+N_2+M_2=N$. 
This defines a generalization of the deformed eCS model describing arbitrary numbers of four kinds of different particles which we expect to be integrable (arguments in support of this conjecture can be found in Appendix~\ref{app:int}). 
The label $r_j$ can be interpreted as a chirality index with $r_j=-$ and $+$ corresponding to left- and right-movers, respectively, and the label $m_j$ distinguishes two different particle types with $m_j=1$ and $-1/g$ corresponding to electrons and holes, respectively (the latter is a condensed matter physics interpretation elaborated in \cite{berntson2020}). All particles interact via two-body interactions which depend on the particle distances, $|x|$, and while the interaction potential of particles of the same chirality is proportional to $\wp_1(x)$, it is proportional to $\wp_1(x+\ii\delta)$ for particles of opposite chirality;\footnote{Note that $\wp_1(x)=\wp_1(-x)$ and $\wp_1(x+\ii\delta)=\wp_1(x-\ii\delta)$.} the former is singular as $x\to 0$ and repulsive, while the latter is non-singular and attractive (note that $\wp_1(x)\to (\pi/\delta)^2/\sinh(\pi x/\delta)^2$ and $\wp_1(x+\ii\delta)\to -(\pi/\delta)^2/\cosh(\pi x/\delta)^2$ as $\ell\to \infty$). Moreover, in the limit $\delta\to\infty$, $\wp_1(x)\to \kappa^2/\sin(\kappa x)^2$ and $\wp_1(x+\ii\delta)\to 0$; thus, interactions between particles of different chiralities vanish in this limit, and the operator in \eqref{eCSgen} reduces to a sum of two commuting deformed trigonometric Calogero-Sutherland Hamiltonians.  

Our results have applications in different areas of physics. Two condensed matter physics applications in the context of the fractional quantum Hall effect and the hydrodynamic descriptions of interacting quantum many-body systems are proposed in Ref.~\cite{berntson2020}. A third application is related to particle physics: as discussed in our conclusions in Section~\ref{sec:conclusions}, our results suggest that the non-chiral CFT constructed in this paper corresponds to a non-relativistic limit of quantum sine-Gordon theory studied extensively in the mathematical physics literature after seminal work by Coleman \cite{coleman1975} and Mandelstam \cite{mandelstam1975}; see \cite{bauerschmidt2020} for recent work on this topic. If this is true, our results provide a non-relativistic version of the Coleman correspondence between quantum sine-Gordon theory and the massive Thirring model \cite{coleman1975}.

\noindent {\bf Organization of the paper.} 
An overview of our main results is provided in Section \ref{sec:results}. 
In Section \ref{sec:prerequisites}, we introduce the Fock space, Heisenberg algebras, and vertex operators that underlie our constructions. In Section \ref{sec:anyonseCS}, we define anyons as special cases of vertex operators and introduce a CFT operator, denoted by $\cH_{3,\nu}$, which is the central object of this paper. By generalizing the construction of \cite{langmann2004}, we show that $\cH_{3,\nu}$ provides a second quantization of the eCS model.
In Section \ref{sec:ncILW}, we prove our first main result (Theorem \ref{thm:qncILW}), which states that the operator $\cH_{3,\nu}$ also defines a quantum version of the ncILW equation.
In Section \ref{sec:eCSgen}, we prove our second main result (Theorem \ref{thm:eCSgen}), which shows that the operator $\cH_{3,\nu}$ also provides a second quantization of a generalized eCS model describing arbitrary numbers of four types of particles. 
Conclusions are drawn in Section \ref{sec:conclusions}. 
In Appendix \ref{app:special}, we collect some definitions and properties of certain special functions. In Appendix \ref{app:int}, we show that the generalized eCS model is quantum integrable if and only if the deformed eCS model \eqref{deCS} is.
Appendix \ref{app:proofs} contains proofs of some results stated in the main text.
In Appendix \ref{app:fermions}, we use boson-fermion correspondence to derive a fermion representation for the operator $\cH_{3,\nu}$.

\noindent {\bf Notation.} We denote as $\Z$, $\R$, and $\C$ the sets of all integers, real numbers, and complex numbers, respectively. 
We denote as $\Z_{\neq 0}$, $\Z_{>0}$, $\Z_{\geq 0}$ the sets of non-zero, positive, and non-negative integers, respectively, and similarly for $\R$.  
We write $\ii\coloneqq \sqrt{-1}$, and $\bar z$ is the complex conjugate of $z\in\C$.
We sometimes write $\partial_x$ instead of $\frac{\partial}{\partial x}$ etc. We write $r=\pm$ short for $r=+,-$ and $\{\pm\}$ short for the set $\{+,-\}$. 
 We use the common abbreviations for commutators and anti-commutators: $[A,B]\coloneqq AB-BA$ and $\{A,B\}\coloneqq AB+BA$.

\section{Summary of results}
\label{sec:results}
We summarize our results, suppressing some technical details which we discuss later to make the results presented here mathematically precise.  

\subsection{Construction of anyons}
\label{subsec:anyons} 
Loosely speaking, by {\em anyons} we mean quantum fields $\phi_\nu(x)$ labeled by a real statistics parameter $\nu$ and a position variable $x\in[-\ell,\ell]$, which act on some Fock space $\cF$ with vacuum $\Omega$, and which obey the (formal) exchange relations
\begin{equation}\label{phinuphinu} 
\phi_{\nu}(x)\phi_{\nu'}(x') = \ee^{\mp \ii \pi \nu\nu' } \phi_{\nu'}(x') \phi_{\nu}(x)
\end{equation} 
for $x\gtrless x'$, together with the relations $\phi_\nu(x)^\dag=\phi_{-\nu}(x)$ and the condition   
\begin{equation} 
\label{normalization} 
\lim_{x\to y}2\pi (x-x')^{\nu^2}\langle\Omega,\phi_\nu(x)\phi_\nu(x')^\dag\Omega\rangle =1, 
\end{equation} 
where $\dag$ and $\langle\cdot,\cdot\rangle$ are the Hilbert space adjoint and inner product in $\cF$, respectively; see \cite{carey1999}.
In particular, for $\nu^2=1$, the anyons are (standard chiral) fermions, for $\nu^2$ odd integers $\geq 3$ they are composite fermions, and for $\nu^2$ even integers they are bosons. 

Our starting point is a known second quantization\footnote{To avoid misunderstanding, we stress that this is different from conventional second quantization discussed in textbooks on quantum-many body physics.} of the eCS model given by a Hermitian operator $\cH_{3,\nu}$ on $\cF$ and characterized by the following commutator relations with products of such anyon operators, 
\begin{equation} 
\label{2nd} 
[\cH_{3,\nu},\phi_\nu(x_1)\cdots \phi_\nu(x_N)]\Omega = \left( H_{N,\nu^2}(\vx) + \half N\nu^4 c_0\right) \phi_\nu(x_1)\cdots \phi_\nu(x_N)\Omega
\end{equation} 
for arbitrary $N\in\Z_{\geq 1}$, with $H_{N,\nu^2}(\vx)$ the eCS Hamiltonian in \eqref{eCS} for $g=\nu^2$ \cite{langmann2000,langmann2004}. 
Note that this is similar to conventional second quantization in that one operator, $\cH_{3,\nu}$, acting on a Fock space accounts for an arbitrary number, $N$, of particles in a quantum mechanical model; however,  it is more powerful since it is naturally adapted to the integrability of the underlying eCS model. 

In the trigonometric limit $\delta\to\infty$, one can construct such a second quantization using the Fock space of chiral fermions in 1+1 spacetime dimensions, $\cF=\cF_{\mathrm{c}}$ \cite{carey1999}.  
For finite $\delta$, a Fock space $\cF$ accommodating such a second quantization can be obtained as a subspace of the tensor product of two copies of this chiral Fock space, $\cF_{\mathrm{c}}\otimes\cF_{\mathrm{c}}$, and this construction has a physical interpretation as a CFT at finite temperature $1/2\delta$ \cite{langmann2004} (note that $2\delta$ here corresponds to the parameter $\beta$ in \cite{langmann2004}). 
The key to the results in the present paper is a different physical interpretation: the latter Fock space also accommodates a non-chiral CFT, i.e., the chiral fermions in  $\cF_{\mathrm{c}}\otimes\cF_{\mathrm{c}}$ can be naturally combined into non-chiral (Dirac) fermions, and using the full Fock space $\cF_{\mathrm{c}}\otimes\cF_{\mathrm{c}}$, one not only has the anyons $\phi_\nu(x)$ described above but, in addition, algebraically independent anyons which we denote as $\phi_{-,\nu}(x)$. Moreover, we realized that there is a natural extension of $\cH_{3,\nu}$ such that \eqref{2nd} remains true as it stands if the anyons $\phi_\nu(x_j)$ are replaced by $\phi_{-,\nu}(x_j)$ for $j=1,\ldots,N$. 

Thus, the conceptual change (as compared to \cite{langmann2004}) is that we work on the larger Fock space $\cF=\cF_{\mathrm{c}}\otimes\cF_{\mathrm{c}}$ and, by that, increase the degrees of freedom. As discussed in Remark \ref{rem:vacua}, this change is natural from a condensed matter physics point of view since it allows us to interpret this non-chiral CFT as a Luttinger model \cite{mattis1965}.  
To emphasize that both anyon operators on this larger space $\cF$ are on equal footing, we use the symbol $\phi_{+,\nu}(x)$ instead of $\phi_\nu(x)$ in the following, allowing us to write $\phi_{r,\nu}(x)$ for both anyons, with $r=\pm$ a chirality index. We also use the shorthand notation 
\begin{equation} 
\label{phiN} 
\phi^N_{r,\nu}(\vx)\coloneqq \phi_{r,\nu}(x_1)\cdots \phi_{r,\nu}(x_N)\quad (r=\pm, \; N\in\Z_{\geq 0}),
\end{equation} 
where $\phi^0_{r,\nu}(\vx)\coloneqq I$ is the identity operator. This allows us to write these two second quantizations of the eCS model on $\cF$ as follows, 
\begin{equation} 
\label{2nd2} 
[\cH_{3,\nu},\phi^N_{r,\nu}(\vx)]\Omega = \left( H_{N;\nu^2}(\vx)+\half N\nu^4 c_0\right) \phi^N_{r,\nu}(\vx)\Omega \quad (r=\pm). 
 \end{equation} 
 
 To introduce further notation, we recall that one can generate the fermion Fock space $\cF_{\mathrm{c}}\otimes\cF_{\mathrm{c}}$ from a vacuum $\Omega$ using chiral bosons $\rho_\pm(x)$ with the (formal) commutator relations 
\begin{equation} 
\label{CCR} 
[\rho_r(x),\rho_{r'}(x')]=-2\pi\ii r\delta_{r,r'}\partial_x\delta(x-x')\quad (r,r'=\pm) 
\end{equation} 
for $x,x'\in[-\ell,\ell]$, with $\partial_x\coloneqq \frac{\partial}{\partial x}$, $\delta_{r,r'}$ the Kronecker delta, and $\delta(x-x')$ the $2\ell$-periodic Dirac delta.\footnote{We slightly abuse terminology since, strictly speaking, the chiral bosons are $\partial_x^{-1}\rho_\pm(x)$ with the anti-derivative hiding zero modes (including so-called Klein factors) which are important to make them mathematically precise.}  Using these chiral bosons, the anyon field operators discussed above are (formally) given by 
\begin{equation}
\label{phirnu}
\phi_{r,\nu}(x) = \; \xxa \ee^{-\ii r\nu \partial_x^{-1}\rho_r(x)}\xxe\quad (r=\pm)
\end{equation} 
where, here and in the following, $\xxa\cdots\xxe$ indicates normal ordering; see Definition~\ref{def:anyons}.

\subsection{Quantum ncILW equation}
Our first main result (Theorem \ref{thm:qncILW}) shows that the CFT operator $\cH_{3,\nu}$ discussed above can be written as 
\begin{equation} 
\label{cH3nu}
\cH_{3,\nu} = \frac1{4\pi}\int_{-\ell}^{\ell}\sum_{r=\pm} \xxa \bigg[\frac{\nu}{3}\rho_r^3  + \frac{(\nu^2-1)}{2}\big(  \rho_r T\hrho_{r,x} + \rho_{-r}\tilde{T}\rho_{r,x}  \bigr)\bigg]\xxe \dd{x} 
\end{equation} 
with the integral operators $T$ and $\tilde{T}$ defined in \eqref{TT}, and $\rho_r$ and $\rho_{r,x}$ short for $\rho_r(x)$ and $\partial_x\rho_r(x)$, respectively; this formula is made precise in Theorem \ref{thm:qncILW}$(a)$. This defines a quantum version of the ncILW equation \eqref{ncILW}; to see this, we compute the Heisenberg equations of motion $\rho_{r,t}=\ii[\cH_{3,\nu},\rho_r]$ ($r=\pm$), and by changing the normalization: $\hat u \coloneqq \nu\rho_+/2$ and $\hat v \coloneqq \nu\rho_-/2$, we obtain 
\begin{equation} 
\label{q-ncILW} 
\begin{split} 
\hat{u}_t &+ 2\xxa\hat{u}\hat{u}_x\xxe + \frac12(\nu^2-1)[T\hat{u}_{xx}+\tilde{T}\hat{v}_{xx}]=0,\\
\hat{v}_t &- 2\xxa\hat{v}\hat{v}_x\xxe - \frac12(\nu^2-1)[T\hat{v}_{xx}+\tilde{T}\hat{u}_{xx}]=0, 
\end{split} 
\end{equation} 
where the hats on $u$ and $v$ distinguish the variables from their classical counterparts in \eqref{ncILW}; see Theorem~\ref{thm:qncILW}$(b)$ for a precise formulation. In our units, $\nu^2-1$ in \eqref{q-ncILW} should be interpreted as $\nu^2-\hbar$ with Planck's constant $\hbar$ and thus, in the classical limit $\hbar\to 0$, where the operators $\hat u$ and $\hat v$ become functions and normal ordering $\xxa\cdots\xxe$ can be ignored, \eqref{q-ncILW} reduces to the ncILW equation in \eqref{ncILW} with $g=\nu^2$. 
We recently proved that the classical ncILW equation in \eqref{ncILW} is an integrable soliton equation \cite{berntson2022,berntsonlangmann2022}. As discussed further below, we conjecture that the quantum version of this model defined by \eqref{cH3nu} is integrable as well. 

\subsection{Second quantization of a generalized eCS model}
Our second main result (Theorem~\ref{thm:eCSgen}) concerns a generalization of the eCS model.
To motivate this result, we note that $\cH_{3,\nu}$ in \eqref{cH3nu} obeys 
\begin{equation} 
\label{cHsymmetry} 
\cH_{3,\nu} = -\nu^2 \cH_{3,-1/\nu}. 
\end{equation} 
This suggests to replace $\nu\to -1/\nu$ in \eqref{2nd2} and then use \eqref{cHsymmetry} to replace $\cH_{3,-1/\nu}$ by $-\cH_{3,\nu}/\nu^2$  to obtain  
\begin{equation} 
\label{2nd3} 
-\frac1{\nu^2} [\cH^{3,\nu},\phi^N_{r,-1/\nu}(\vx)]\Omega= \left( H_{N;1/\nu^2}(\vx)+\half(N/\nu^{4})c_0 \right) \phi^N_{r,-1/\nu}(\vx)\Omega\quad (r=\pm). 
\end{equation} 
As we will show, this is true provided that the coupling parameter, $g>0$,  is a rational number and thus, in such a case, the operator $\cH_{3,\nu}$ provides a four-fold second quantization of the eCS model. 
In fact, we prove the following general result including all these second quantizations of the eCS model as special cases: 
\begin{multline} 
\label{2ndgen}
[\cH_{3,\nu},\phi^{N_1}_{+,\nu}(\vx)\phi^{M_1}_{+,-1/\nu}(\vtx)\phi^{N_2}_{-,\nu}(\vy)\phi^{M_2}_{-,-1/\nu}(\vty)]\Omega \\ = 
\left( H_{N_1,M_1,N_2,M_2;\nu^2}(\vx,\vtx,\vy,\vty)+c_{N_1,M_1,N_2,M_2;\nu^2}\right) \phi^{N_1}_{+,\nu}(\vx)\phi^{M_1}_{+,-1/\nu}(\vtx)\phi^{N_2}_{-,\nu}(\vy)\phi^{M_2}_{-,-1/\nu}(\vty)\Omega
\end{multline} 
for arbitrary $N_1,M_1,N_2,M_2\in\Z_{\geq 0}$, with $H_{N_1,M_1,N_2,M_2;g}(\vx,\vtx,\vy,\vty)$ the generalized eCS Hamiltonian in \eqref{eCSgen} and the constant 
\begin{equation} 
\label{cNMNM}
c_{N_1,M_1,N_2,M_2;g} \coloneqq
\half\big((N_1+N_2)g^2-(M_1+M_2)/g\big)c_0. 
\end{equation} 
It is interesting to note that by expressing the generalized eCS Hamiltonian in the form \eqref{eCS2}, we can write \eqref{2ndgen}--\eqref{cNMNM} as 
\begin{equation} 
[\cH_{3,\nu},\phi_{r_1,m_1\nu}(x_1)\cdots \phi_{r_N,m_N\nu}(x_N)]\Omega 
= \left( H^{(\vr,\vm)}_{N;\nu^2}(\vx) + c^{(\vm)}_{N;\nu}\right)\phi_{r_1,m_1\nu}(x_1)\cdots \phi_{r_N,m_N\nu}(x_N)\Omega 
\end{equation}  
with
\begin{equation} 
c^{(\vm)}_{N;\nu} = \half \nu \sum_{j=1}^N (\nu m_j)^3  c_0. 
\end{equation} 
This makes manifest that each particle in the generalized eCS Hamiltonian \eqref{eCS2} corresponds to an anyon $\phi_{r_j,m_j\nu}(x_j)$ for $r_j=\pm$ and $m_j=1,-1/g$. 
This notation is not only useful for stating the result concisely but also for proving it; see Theorem~\ref{thm:eCSgen} for the precise statement. 

\section{Prerequisites} \label{sec:prerequisites}
We follow \cite[Section~2]{langmann2004}, but with some notational changes; see Remark~\ref{rem:notation1} and Remark~\ref{rem:notation2}. 

\subsection{Fock space, Klein factors and, Heisenberg algebras}
\label{subsec:cFcA} 
We define the Fock space, $\cF$,  and the algebra of quantum field theory operators, $\cA$,  underlying our constructions.

We consider the $*$-algebra $\cA$ with identity $\id$ and star operation $\dag$, generated by operators $a_{r,n}$ and $R_r$ ($r=\pm, n\in\Z)$ and characterized by the relations 
\begin{equation} 
\label{aR1}
\begin{split} 
[a_{r,n},a_{r',m}]=n\delta_{n,-m}\delta_{r,r'}\id,\quad [a_{r,n},R_{r'}] = \delta_{n,0}\delta_{r,r'}R_{r'} , \quad R_+R_-=-R_-R_+ \end{split} 
\end{equation} 
together with 
\begin{equation} 
\label{aR2} 
\quad a_{r,n}^\dag =a_{r,-n},\quad R_r^\dag = R_r^{-1} 
\end{equation} 
for all $r,r'=\pm$ and $n,m\in\Z$. 
We assume that this algebra $\cA$ is represented on a Hilbert space $\cF$ such that the following conditions are fulfilled:  (i) for $A$ an operator on $\cF$, $A^\dag$ is the Hilbert space adjoint, (ii) the following highest weight conditions are fulfilled, 
\begin{equation} 
\label{aR3} 
a_{\pm,n}\Omega=0\quad (n\in\Z_{\geq 0})
\end{equation} 
with $\Omega\in\cF$ the vacuum, (iii) the Hilbert space product $\langle\cdot,\cdot\rangle$ on $\cF$ is such that 
\begin{equation} 
\label{aR4} 
\langle\Omega,R_+^{\mu_+}  R_-^{\mu_-}\Omega\rangle = \delta_{\mu_+,0}\delta_{\mu_-,0}\quad (\mu_\pm\in \Z).
\end{equation} 
These conditions not only fully characterize the representation of $\cA$, but at the same time provide a means to construct the Hilbert space $\cF$. 
Indeed, using \eqref{aR1}--\eqref{aR4}, one can check that the elements  
\begin{equation}
\label{eta} 
\eta= \prod_{r=\pm } \prod_{n=1}^{\infty}  \frac{a_{r,-n}^{m_{r,n}}}{\sqrt{m_{r,n}! \,n^{m_{r,n}}}} \cdot  R_+^{\mu_+}R_-^{\mu_-}
\Omega \quad (m_{r,n}\in \Z_{\geq 0},\mu_r\in \Z)
\end{equation} 
where only finitely many coefficients $m_{r,n}$ are allowed to be non-zero, are orthonormal. 
(We slightly abuse notation here by identifying the algebra $\cA$ with its representation on $\cF$). 
Denoting by $\cD$ the space of finite linear combinations of such elements $\eta$ with complex coefficients, the Hilbert space $\cF$ is obtained from $\cD$ by norm completion. 
These definitions imply that $R_+$ and $R_-$ are unitary operators on $\cF$ such that $R_\pm^\mu$ is well-defined if and only if $\mu\in\Z$; we refer to $R_\pm$ as {\em Klein factors} (see \cite{langmann2015} for further details).
We note the relations
\begin{equation} 
\label{RRRR} 
R_+^{\mu_+} R_-^{\mu_-} R_+^{\nu_+} R_-^{\nu_-} = (-1)^{\mu_-\nu_+}R_+^{\mu_++\nu_+}R_-^{\mu_-+\nu_-}= 
(-1)^{\mu_-\nu_+ - \mu_+\nu_-}R_+^{\nu_+} R_-^{\nu_-} R_+^{\mu_+} R_-^{\mu_-}\
\end{equation} 
for arbitrary $\mu_\pm,\nu_\pm\in\Z$, and $R_\pm^0=I$. 
Moreover, the elements $a_{\pm,n}$ are the generators of two commuting Heisenberg algebras. 

We introduce the operators
\begin{equation} 
Q_\pm\coloneqq \nu_0 a_{\pm,0} 
\end{equation} 
with $\nu_0>0$ a constant specified further below; see \eqref{eq:r0s0}.
Since all $\eta$ in \eqref{eta} are eigenstates of $Q_\pm$ with eigenvalues $\nu_0\mu_\pm$, $Q_\pm$ are self-adjoint operators with eigenvalues in $\nu_0\Z$.  
Thus, for all $\alpha\in\R$,  $\ee^{\ii\alpha Q_\pm}$ are well-defined unitary operators on $\cF$.
We refer to the $Q_\pm$ as {\em charges}. The Klein factors are charge-raising operators in the sense that
\begin{equation}
\label{RQ}
[Q_r,R_{r'}]=\delta_{r,r'} \nu_0 R_{r'}\quad (r,r'=\pm) 
\end{equation}
where $\delta_{r,r'}$ is the Kronecker delta. 
We also note the identity 
\begin{equation} 
\label{expQR}
\ee^{\ii\alpha Q_r}R_{r'}^{\mu}=\ee^{\ii\delta_{r,r'}\alpha\mu\nu_0}R_{r'}^{\mu}\ee^{\ii\alpha Q_r}\quad (\alpha\in\R, \, \mu\in\Z, \, r= \pm, \, r'=\pm),
\end{equation} 
which follows from the commutator relations in \eqref{RQ}.

\begin{remark} 
\label{rem:notation1}
We use the index $r=\pm$ here instead of $A=1,2$ in \cite{langmann2004} to emphasize a different physical interpretation, as discussed in Section~\ref{subsec:anyons2}.
Note that $a_{+,n}$ and $a_{-,n}$ here correspond to $\hat{\rho}_{1}(n)$ and $-\hat{\rho}_{2}(n)$ in \cite{langmann2004}, respectively; the implications of the minus sign in the latter relation are explained in Remark~\ref{rem:notation2}. 
\end{remark} 

\subsection{Normal ordering}
\label{subsec:normalordering} 
We discuss a normal ordering scheme in $\cA$ and state some related technical results we need. 

\begin{definition}
\label{def:normalorder} 
The normal ordering operation $\xxa\cdot\xxe$ is a linear map $\cA\to \cA$ such that
\begin{equation}\label{normalordering0}  
\xxa AB \xxe \; = \; \xxa BA\xxe \quad \quad (A,B\in\cA),
\end{equation} 
defined as follows:
on monomial elements $M= R_+^{\mu_+} R_-^{\mu_-} a_{r_1,n_1}\cdots a_{r_k,n_k}$, 
$\mu_\pm\in\Z$, $k\in \Z_{\geq 0}$, $n_1,\ldots,n_k\in \mathbb{Z}$ and $r_1,\ldots,r_k\in \{\pm \}$,   
it is defined inductively by the rules
\begin{equation}\label{normalordering1}  
\xxa R_+^{\mu_+} R_-^{\mu_-}\xxe \; \coloneqq R_+^{\mu_+} R_-^{\mu_-} \quad (\mu_+,\mu_-\in\Z) 
\end{equation} 
and  
\begin{equation}
\label{normalordering2}
\begin{split}
\xxa  M a_{\pm,n} \xxe \; =\; \xxa  a_{\pm,n} M \xxe  &\coloneqq \begin{cases}
\xxa M \xxe a_{\pm,n} & (n\in\Z_{>0}) \\
\frac12\big(  \xxa M \xxe a_{\pm,0}+ a_{\pm,0} \xxa M\xxe  \big) & (n=0) \\
a_{\pm,n}\xxa M \xxe  & (n\in\Z_{<0}),
\end{cases}
\end{split}
\end{equation} 
and this definition is extended to non-monomial elements of $\cA$ by linearity.
\end{definition}

This definition implies that $\xxa A B\xxe \; = \; \xxa \xxa A\xxe \xxa B\xxe \xxe $, i.e., within the normal ordering operation one can insert further normal orderings without changing the result. 
For later reference we note that \eqref{normalordering0} and $R_-R_+=-R_+R_-$ imply 
\begin{equation}\label{RRRR1} 
\xxa R_-^{\mu_-} R_+^{\mu_+}\xxe  = R_+^{\mu_+}R_-^{\mu_-}=(-1)^{\mu_+\mu_-}R_-^{\mu_-} R_+^{\mu_+} \quad(\mu_+,\mu_-\in\Z). 
\end{equation} 
Definition \ref{def:normalorder} and \eqref{aR1}--\eqref{aR4} imply that, for fixed vectors $\eta,\eta'\in\mathcal{D}$, the expression
\begin{equation*}
\langle \eta, \xxa a_{r_1,n_1}\cdots a_{r_k,n_k}\xxe \eta'\rangle
\end{equation*}
is non-zero for only finitely many choices of $k\in \Z_{\geq 0}$, $n_1,\ldots,n_k\in\Z_{\neq 0}$, and $r_1,\ldots,r_k\in \{\pm\}$. It follows that the expression
\begin{equation}\label{Sdef}
S\coloneqq s_0\, \id +\sum_{k=1}^{\infty}\sum_{r_1,\ldots,r_k=\pm}\sum_{n_1,\ldots,n_k\in \Z_{\neq 0}} s^{r_1,\ldots,r_k}_{n_1,\ldots,n_k} \xxa a_{r_1,n_1}\cdots a_{r_k,n_k}\xxe,  
\end{equation}
for arbitrary complex numbers $s_0$ and $s^{r_1,\ldots,r_k}_{n_1,\ldots,n_k}$, defines a sesquilinear form on $\mathcal{D}$ according to $(\eta,\eta')\mapsto \langle \eta,S\eta'\rangle$. 

The following result ensures that all quantum field theory operators we use are well-defined. 
\begin{lemma}
\label{lem:normalordering}
For arbitrary $\mu_\pm\in \Z$, $\alpha_\pm \in\C$, and $S=\; \xxa S\xxe$ as in \eqref{Sdef},
\begin{equation}
\label{Rmu1Rmu2ealphaQ}
\xxa R_+^{\mu_+} R_-^{\mu_-} \ee^{\alpha_+ Q_+}\ee^{\alpha_- Q_-} S\xxe \; = 
\ee^{\alpha_+ Q_+/2}\ee^{\alpha_- Q_-/2} R^{\mu_+}_+R^{\mu_-}_- \ee^{\alpha_+ Q_+/2}\ee^{\alpha_- Q_-/2} S
\end{equation}
defines a sesquilinear form on $\cD$. 
\end{lemma}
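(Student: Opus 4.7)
The plan is to reduce the identity to an elementary binomial computation for the zero modes. Since every $a_{r,n}$ appearing in $S$ has $n\neq 0$, the commutation relations \eqref{aR1} give $[S,R_\pm]=0$ and $[S,Q_\pm]=0$. Combined with the $n\neq 0$ cases of \eqref{normalordering2}, which push creation modes to the left and annihilation modes to the right without touching Klein factors or zero modes, a short induction yields $\xxa A\,S\xxe = \xxa A\xxe\,S$ for every monomial $A$ built from $R_\pm^{\pm 1}$ and $a_{\pm,0}$. Applied termwise to the power-series expansion of $\ee^{\alpha_+ Q_+}\ee^{\alpha_- Q_-}$, this reduces the identity to the case $S=\id$.

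For the case $S=\id$, I would establish by induction on $p+q$, using \eqref{normalordering1}, the $n=0$ rule in \eqref{normalordering2}, and $[Q_+,Q_-]=0$, the binomial identity
\begin{equation*}
\xxa R_+^{\mu_+}R_-^{\mu_-}\, Q_+^p Q_-^q \xxe
\;=\; 2^{-(p+q)}\sum_{j=0}^{p}\sum_{k=0}^{q}\binom{p}{j}\binom{q}{k}\, Q_+^{j}Q_-^{k}\, R_+^{\mu_+}R_-^{\mu_-}\, Q_+^{p-j}Q_-^{q-k},
\end{equation*}
whose inductive step uses Pascal's identity to combine the two terms produced by the symmetric zero-mode rule. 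Multiplying by $\alpha_+^p\alpha_-^q/(p!\,q!)$, summing over $p,q\geq 0$, and using the convolution identity $\sum_{j=0}^p (\alpha/2)^j/j!\cdot(\alpha/2)^{p-j}/(p-j)! = \alpha^p/p!$ factorizes the result into the claimed $\ee^{\alpha_+ Q_+/2}\ee^{\alpha_- Q_-/2}\,R_+^{\mu_+}R_-^{\mu_-}\,\ee^{\alpha_+ Q_+/2}\ee^{\alpha_- Q_-/2}$.

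Well-definedness of both sides as a sesquilinear form on $\cD$ follows from the structure of the basis \eqref{eta}: each basis vector is an eigenvector of $Q_\pm$ with eigenvalue in $\nu_0\Z$, so $\ee^{\alpha_\pm Q_\pm/2}$ acts on it as a finite numerical scalar; the product $R_+^{\mu_+}R_-^{\mu_-}$ maps basis vectors to $\pm$ basis vectors via \eqref{RRRR}; and $S$ sends any $\eta'\in\cD$ into $\cD$ by the finiteness observation following \eqref{Sdef}. Hence the right-hand side applied to $\eta'\in\cD$ yields a finite linear combination of basis vectors, and the identity holds termwise at the level of matrix elements $\langle\eta,\cdot\,\eta'\rangle$ with $\eta,\eta'\in\cD$; the same argument shows that the left-hand side, interpreted via the power-series expansion and Definition \ref{def:normalorder}, has only finitely many nonzero matrix elements.

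The main obstacle is the combinatorial bookkeeping in the zero-mode induction: one must track how many of the $p+q$ zero-mode insertions end up to the left versus to the right of the Klein factors. The redeeming feature is that the factor $\tfrac12$ in the symmetric zero-mode rule is exactly what produces the binomial weights $2^{-(p+q)}\binom{p}{j}\binom{q}{k}$, whose exponential generating function is precisely the desired symmetric splitting $\ee^{\alpha Q/2}\cdot\ee^{\alpha Q/2}$ around the Klein factor, so the resummation step is essentially automatic once the binomial identity is in hand.
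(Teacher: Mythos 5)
Your proposal is correct and follows essentially the same route as the paper: both rest on the binomial identity $\xxa a_{\pm,0}^{n} W \xxe = 2^{-n}\sum_{j=0}^n \binom{n}{j} a_{\pm,0}^j W a_{\pm,0}^{n-j}$ coming from the symmetric $n=0$ normal-ordering rule, followed by resummation of the exponential series and the observation that $S$ commutes with $R_\pm$ and $Q_\pm$. The only differences are organizational (you reduce to $S=\id$ first and track both zero modes at once, while the paper treats one exponential at a time against a generic normal-ordered $W$), and your explicit well-definedness argument is a welcome addition.
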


\begin{proof}
Let $W$ be such that $\xxa W \xxe \; =  W$. 
Applying the identity 
\begin{equation*}
\xxa a_{\pm,0}^{n} W \xxe
= \frac{1}{2}\big(\xxa a_{\pm,0}^{n-1} W \xxe a_{\pm,0} + a_{\pm,0}\xxa a_{\pm,0}^{n-1} W \xxe\big)
\end{equation*}
repeatedly, we obtain
\begin{equation*}
\xxa a_{\pm,0}^{n} W \xxe \; = \frac{1}{2^n} \sum_{j=0}^n \binom{n}{j} a_{\pm,0}^j W a_{\pm,0}^{n-j}
\end{equation*}
with the usual binomial coefficients $\binom{n}{j}$. Hence, for arbitrary $\alpha\in \C$, using the Taylor series of the exponential function,
\begin{align*}
\xxa \ee^{\alpha Q_\pm}  W\xxe
& =  \xxa \sum_{n=0}^\infty \frac{(\alpha \nu_0 a_{\pm,0})^n}{n!} W\xxe
=  \sum_{n=0}^\infty \frac{(\alpha  \nu_0)^n}{n!} \xxa a_{\pm,0}^n W\xxe
	\\
& = \sum_{n=0}^\infty \frac{(\alpha \nu_0)^n}{n!}  \frac{1}{2^n} \sum_{j=0}^n \binom{n}{j} a_{\pm,0}^j W a_{\pm,0}^{n-j} 
	\\
& = \sum_{n=0}^\infty \sum_{j=0}^n  \frac{(\alpha \nu_0 a_{\pm,0}/2)^j}{j!} W \frac{(\alpha \nu_0 a_{\pm,0}/2)^{n-j}}{(n-j)!} 
	\\ \label{ealphaQRmu1W}
& = \Bigg(\sum_{j=0}^\infty \frac{(\alpha \nu_0 a_{\pm,0}/2)^j}{j!} \Bigg) W \Bigg(\sum_{j=0}^\infty \frac{(\alpha \nu_0 a_{\pm,0}/2)^{j}}{j!} \Bigg)
= \ee^{\alpha Q/2} W \ee^{\alpha Q/2}.
\end{align*} 
In particular, $\xxa \ee^{\alpha_\pm Q_\pm}  R_\pm^{\mu_\pm}\xxe \; = \ee^{\alpha_\pm Q_\pm/2} R_\pm^{\mu_\pm}\ee^{\alpha_\pm Q_\pm/2}$ and, since $[R_\pm,Q_\mp]=[R_\pm,S]=[Q_\pm,S]=[Q_+,Q_-]=0$, the result follows. 
\end{proof}

\subsection{Bogoliubov transformation and vertex operators} 
We define a one-parameter family of representations of the algebra $\cA$ which reduces to the defining representation above in a limiting case. We also define and study corresponding vertex operators. 

\label{subsec:BT} 
Let
\begin{equation} 
\label{BT} 
b_{r,n}\coloneqq c_n a_{r,n} - s_n a_{-r ,-n} \quad (n\in\Z_{\neq 0}), \quad 
b_{r ,0}\coloneqq a_{r,0}\quad (r=\pm)   
\end{equation} 
with
\begin{equation} 
\label{cnsn} 
c_n\coloneqq \sqrt{\frac{1}{1-q^{2|n|}}},\quad s_n\coloneqq \sqrt{\frac{q^{2|n|}}{1-q^{2|n|}}} \quad (n\in\Z_{\neq 0}) 
\end{equation} 
and $q$ in \eqref{constants}. 
Using the relation $c_n^2-s_n^2=1$, it is easy to check that the operators $b_{r,n}$ satisfy the same relations as the operators $a_{r,n}$, i.e., 
\begin{equation} 
\label{bR1R2}
[b_{r,n},b_{r',m}]=n\delta_{n,-m}\delta_{r,r'}\id,\quad [b_{r,n},R_{r'}] = \delta_{n,0}\delta_{r,r'}R_{r'} ,\quad b_{r,n}^\dag = b_{r,-n} 
\end{equation} 
for all $r,r'=\pm$ and $n,m\in\Z$. Put differently, for each $q$ in the range $0\leq q<1$,  $\rho(a_{r,n})\coloneqq b_{r,n}$ and $\rho(R_r) \coloneqq R_r$ for $r=\pm$ and $n\in\Z$ defines a $*$-representation $\rho$ of the algebra $\cA$ on $\cF$. 

\begin{remark}\label{rem:notation2}
Our $b_{+,n}$ corresponds to $\pi(\hat\rho(n))$ in \cite[Eq.~(19)]{langmann2004}; recall the identifications $a_{+,n}=\hat\rho_1(n)$ and $a_{-,n}=-\hat\rho_2(n)$, which lead to the opposite signs of the $s_n$-terms in the Bogoliubov transformations \eqref{BT} and  \cite[Eq.~(19)]{langmann2004}.    
\end{remark}

\begin{remark}\label{rem:vacua} 
It is interesting to note the following condensed matter physics interpretation of the construction above: there exists a Luttinger model Hamiltonian with particular interactions such that this Hamiltonian is diagonalized by the Bogoliubov transformation \eqref{BT}--\eqref{cnsn}; see \cite[Section~III.A]{berntson2020}. 
By using well-known results about the Luttinger model \cite[Section~IV]{mattis1965}, one can construct a unitary operator, $\cU$, on $\cF$ with the following properties: (i) $b_{r,n}=\cU^\dag a_{r,n}\cU$ ($r=\pm$, $n\in\Z$) and (ii) the ground state of this Luttinger model Hamiltonian is $\tilde\Omega\coloneqq \cU^\dag\Omega$, where $\tilde\Omega$ satisfies the highest weight conditions 
\begin{equation} 
\label{bR3} 
b_{\pm,n}\tilde\Omega=0\quad (n\in\Z_{\geq 0}).
\end{equation} 
The conditions \eqref{bR3} are analogous to \eqref{aR3} but involve the Bogoliubov-transformed Heisenberg algebra operators $b_{\pm,n}$ instead of $a_{\pm,n}$.
Thus, in our model, we have two different vacua: the vacuum $\Omega$, which can be interpreted as the ground state of a non-interacting fermion model, and the vacuum $\tilde\Omega$, which can be interpreted as the ground state of a Luttinger model. We emphasize that the normal ordering prescriptions we use in this paper are with respect to the non-interacting vacuum $\Omega$. 
\end{remark}

We now introduce a class of operators that will be useful for us. 

\begin{definition}[Vertex operators] 
\label{def:vertexoperators} 
For arbitrary integer vectors $\mu=(\mu_+,\mu_-)\in\Z^2$ and complex-valued sequences $\alpha=(\alpha_{r,n})_{r=\pm,n\in \Z}$, let 
\begin{equation}\label{Phidef}
\Phi_\mu(\alpha)\coloneqq \; \xxa R_+^{\mu_+}R_-^{{\mu_-}}\ee^{\ii J(\alpha)}\xxe 
\end{equation}
with 
\begin{equation}\label{Jdef}
J(\alpha)\coloneqq  \sum_{r=\pm} \Bigg( \alpha_{r,0}Q_r +\sum_{n\in \Z_{\neq 0}}\alpha_{r,n}b_{r,-n}  \Bigg)  = \sum_{r=\pm} \Bigg( \alpha_{r,0}Q_r +\sum_{n\in \Z_{\neq 0}}( \alpha_{r,n}c_n - \alpha_{-r,-n}s_n) a_{r,-n}  \Bigg). 
\end{equation} 
\end{definition} 

We call such operators $\Phi_\mu(\alpha)$ {\em (regularized) vertex operators}. The following is a technical result about vertex operators which we will invoke repeatedly.

\begin{lemma}\label{lem:Phi}
$(a)$ For arbitrary integer vectors $\mu=(\mu_+,\mu_-)\in\Z^2$ and complex-valued sequences $\alpha=(\alpha_{r,n})_{r=\pm,n\in \Z}$, the vertex operator $\Phi_\mu(\alpha)$ in \eqref{Phidef} is a well-defined sesquilinear form on $\mathcal{D}$ such that
\begin{equation}\label{Phinormalordered}
\Phi_\mu(\alpha)=\ee^{\ii \sum_{r=\pm}\alpha_{r,0} Q_r/2} R_+^{\mu_+}R_-^{\mu_-} \ee^{\ii \sum_{r=\pm}\alpha_{r,0} Q_r/2}\ee^{\ii J^+(\alpha)}\ee^{\ii J^-(\alpha)}
\end{equation} 
where
\begin{equation}\label{projectionsdefinition}
\begin{split} 
J^{+}(\alpha)\coloneqq & \sum_{r=\pm}\sum_{n=1}^{\infty} ( \alpha_{r,n}c_n - \alpha_{-r,-n}s_n) a_{r,-n},\\ 
J^{-}(\alpha)\coloneqq & \sum_{r=\pm}\sum_{n=1}^{\infty} ( \alpha_{r,-n}c_n - \alpha_{-r,n}s_n) a_{r,n},
\end{split} 
\end{equation}
i.e., $J^{+}(\alpha)$ and $J^{-}(\alpha)$ are the creation and annihilation parts of $J(\alpha)$, respectively, satisfying 
\begin{equation} 
J^-(\alpha)\Omega=J^+(\alpha)^\dag\Omega=0.
\end{equation}

$(b)$ The vertex operators $\Phi_\mu(\alpha)$ for  $\alpha=(\alpha_{r,n})_{r=\pm,n\in \Z}$ such that
\begin{equation}\label{HScondition}
\sum_{r=\pm}\sum_{n\in \Z}  |n||\alpha_{r,n}|^2   <\infty
\end{equation}
generate a $*$-algebra of well-defined sesquilinear forms on the domain $\mathcal{D}$, with the star relation given by 
\begin{equation}\label{Phiadjoint} 
\Phi_\mu(\alpha)^\dag = (-1)^{\mu_+\mu_-}\Phi_{-\mu}(-\alpha^*) 
\end{equation}
where 
\begin{equation}
\label{eq:star} 
(\alpha^*)_{r,n}\coloneqq \overline{\alpha_{r,-n}}
\end{equation} 
with the bar indicating complex conjugation, and the multiplication rule 
\begin{equation} 
\label{Phimult}
\Phi_\mu(\alpha)\Phi_{\mu'}(\beta)=\chi_{\mu,{\mu'}}(\alpha,\beta)\Phi_{\mu+{\mu'}}(\alpha+\beta) 
= \frac{\chi_{\mu,\mu'}(\alpha,\beta)}{\chi_{\mu',\mu}(\beta,\alpha)}\Phi_{\mu'}(\beta)\Phi_\mu(\alpha)
\end{equation} 
where $(\mu+{\mu'})_\pm \coloneqq \mu_\pm+{\mu'}_\pm$, $(\alpha+\beta)_{r,n} \coloneqq \alpha_{r,n}+\beta_{r,n}$, 
\begin{equation}\label{cocycle}
\chi_{\mu,{\mu'}}(\alpha,\beta) \coloneqq (-1)^{\mu_-{\mu'}_+}\ee^{\ii\sum_{r=\pm} (\alpha_{r,0}{\mu'}_r-\beta_{r,0}\mu_r) \nu_0/2}
\ee^{-[J^-(\alpha),J^+(\beta)]}
\end{equation}
and
\begin{equation}\label{J-J+}
[J^-(\alpha),J^+(\beta)]= \sum_{r=\pm} \sum_{n=1}^{\infty}n\big(c_n^2\alpha_{r,-n}\beta_{r,n}+s_n^2\alpha_{r,n}\beta_{r,-n}-c_ns_n(\alpha_{-r,n}\beta_{r,n}+\alpha_{-r,-n}\beta_{r,-n}) \big). 
\end{equation} 
Moreover, 
\begin{equation} 
\label{Phiexpectation} 
\langle \Omega, \Phi_{\mu}(\alpha)\Omega \rangle = \delta_{\mu_+,0}\delta_{\mu_-,0}. 
\end{equation} 

$(c)$ The product of an arbitrary number, $N$, of vertex operators $\Phi_{\mu_j}(\alpha_j)$ $(j=1,\ldots,N)$ in the $*$-algebra defined in $(b)$ above is related to its normal ordered form as follows, 
\begin{equation}\label{NPhi}
\Phi_{\mu_1}(\alpha_1)\cdots \Phi_{\mu_N}(\alpha_N) = \prod_{1\leq j<k\leq N}\chi_{\mu_j,\mu_k}(\alpha_j,\alpha_k)  \xxa \Phi_{\mu_1}(\alpha_1)\cdots \Phi_{\mu_N}(\alpha_N) \xxe 
\end{equation} 
with $\chi_{\mu,\mu'}(\alpha,\beta)$ given in \eqref{cocycle}. Moreover, the vacuum expectation value of such a product, 
\begin{equation}\label{NPhiexp}
\langle\Omega, \Phi_{\mu_1}(\alpha_1)\cdots \Phi_{\mu_N}(\alpha_N) \Omega \rangle , 
\end{equation} 
is non-zero only if $\sum_{j=1}^N (\mu_j)_r=0$ for $r=\pm$, and if this is the case it is equal to  
\begin{equation} 
 \prod_{1\leq j<k\leq N}\chi_{\mu_j,\mu_k}(\alpha_j,\alpha_k) .
\end{equation} 

$(d)$ If $\alpha$ satisfies the condition in \eqref{HScondition} and is real in the sense that $\alpha=\alpha^*$ (see Remark~\ref{rem:star} below), then $\Phi_\mu(\alpha)$ is proportional to a unitary operator on $\cF$. 
\end{lemma}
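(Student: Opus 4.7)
For part $(a)$, I would start from the definition \eqref{Phidef} and apply Lemma \ref{lem:normalordering} with $S= \, \xxa \ee^{\ii J(\alpha)}\xxe$ to factor the zero modes out of the normal ordering, producing the combination $\ee^{\ii\sum_r \alpha_{r,0} Q_r/2} R_+^{\mu_+} R_-^{\mu_-} \ee^{\ii\sum_r \alpha_{r,0} Q_r/2}$. Within the surviving normal-ordered exponential of the nonzero-mode part of $J(\alpha)$, the creation modes (terms with $a_{r,-n}$, $n>0$) and annihilation modes (terms with $a_{r,n}$, $n>0$) mutually commute inside $\xxa\cdots\xxe$, so the exponential splits as $\ee^{\ii J^+(\alpha)}\ee^{\ii J^-(\alpha)}$, yielding \eqref{Phinormalordered}. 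Well-definedness as a sesquilinear form on $\cD$ follows from \eqref{Sdef}, since any matrix element of $\ee^{\ii J^\pm(\alpha)}$ between two vectors of $\cD$ truncates to a finite sum (only finitely many $a_{r,n}$'s with $n>0$ act nontrivially on a given $\eta\in\cD$).

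For part $(b)$, the adjoint formula \eqref{Phiadjoint} follows by conjugating \eqref{Phinormalordered} using $a_{r,n}^\dag = a_{r,-n}$, $Q_r^\dag = Q_r$, $R_r^\dag = R_r^{-1}$ (which swaps $J^+(\alpha)\leftrightarrow J^-(\alpha^*)$), and then reordering $R_-^{-\mu_-}R_+^{-\mu_+}$ back into the canonical order using \eqref{RRRR}, producing the sign $(-1)^{\mu_+\mu_-}$. The multiplication rule \eqref{Phimult} is the heart of the lemma and I would derive it in the following sequence: first use \eqref{Phinormalordered} on both factors; then move $\ee^{\ii J^-(\alpha)}$ to the right through $R_+^{\mu'_+}R_-^{\mu'_-}\ee^{\ii\sum \beta_{r,0}Q_r/2}$ (nonzero-mode $a$'s commute with Klein factors by \eqref{aR1}; they commute with $Q_{r'}$); next use \eqref{expQR} to pass $\ee^{\ii\alpha_{r,0}Q_r/2}$ through $R_{r'}^{\mu'_{r'}}$, accumulating the phase $\ee^{\ii\sum_r \alpha_{r,0}\mu'_r \nu_0/2}$ and similarly for the $\beta$ side; then reorder the Klein factors by \eqref{RRRR}, picking up $(-1)^{\mu_-\mu'_+}$; and finally reorder $\ee^{\ii J^-(\alpha)}\ee^{\ii J^+(\beta)}$ by the Baker--Campbell--Hausdorff identity, which applies because $[J^-(\alpha),J^+(\beta)]$ is a c-number (a sum of $[a_{r,n},a_{r',-n}]=n\delta_{r,r'}\id$ contributions); computing this commutator from \eqref{projectionsdefinition} using \eqref{aR1} and $c_n^2-s_n^2=1$ reproduces exactly \eqref{J-J+}, with absolute convergence of the series ensured by \eqref{HScondition} via the Cauchy--Schwarz inequality $\sum_n |n|\cdot|\alpha_{r,\pm n}\beta_{r',\pm n}| \le (\sum |n||\alpha|^2)^{1/2}(\sum |n||\beta|^2)^{1/2}$. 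Collecting these factors produces \eqref{cocycle}; the second equality in \eqref{Phimult} is immediate from symmetry. Finally, \eqref{Phiexpectation} follows from \eqref{Phinormalordered} together with $Q_r\Omega=0$, $J^-(\alpha)\Omega=0$, $\langle\Omega,J^+(\alpha)=0$, and \eqref{aR4}.

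For part $(c)$, I would proceed by induction on $N$: each time I move one factor past another toward normal-ordered position I pick up exactly one cocycle $\chi_{\mu_j,\mu_k}(\alpha_j,\alpha_k)$ with $j<k$ from \eqref{Phimult}, yielding \eqref{NPhi}. The vacuum expectation value formula then follows from taking $\langle\Omega,\cdot\,\Omega\rangle$ of the normal-ordered side: the product of exponentials of annihilation operators acts trivially to the right on $\Omega$ and the product of creation exponentials acts trivially to the left, leaving $\langle\Omega, R_+^{\sum_j(\mu_j)_+} R_-^{\sum_j(\mu_j)_-}\Omega\rangle$, which is nonzero (and equal to $1$) precisely when the charge-conservation conditions $\sum_j(\mu_j)_r=0$ hold, by \eqref{aR4}. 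For part $(d)$, if $\alpha=\alpha^*$, then $\alpha_{r,0}\in\R$ so the zero-mode exponentials are unitary, and \eqref{Phiadjoint} reads $\Phi_\mu(\alpha)^\dag=(-1)^{\mu_+\mu_-}\Phi_{-\mu}(-\alpha)$; applying \eqref{Phimult} twice to compute $\Phi_\mu(\alpha)^\dag\Phi_\mu(\alpha)$ and $\Phi_\mu(\alpha)\Phi_\mu(\alpha)^\dag$ gives the same positive scalar multiple of $\Phi_{(0,0)}(0)=\id$, so $\Phi_\mu(\alpha)$ is a scalar multiple of a unitary operator on $\cF$.

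The main obstacle I anticipate is part $(b)$: getting the cocycle $\chi_{\mu,\mu'}(\alpha,\beta)$ exactly right requires careful bookkeeping of three independent sources of phase -- the Klein-factor sign $(-1)^{\mu_-\mu'_+}$, the zero-mode exchange phases from \eqref{expQR}, and the BCH c-number $\ee^{-[J^-(\alpha),J^+(\beta)]}$ -- and verifying that the convergence of \eqref{J-J+} is genuinely controlled by \eqref{HScondition} uniformly in $r,r'$ (including the cross terms $c_ns_n\alpha_{-r,\pm n}\beta_{r,\pm n}$, which require that $c_ns_n\to 0$ fast enough, which follows from $q<1$).
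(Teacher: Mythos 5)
Your proposal is correct and follows essentially the same route as the paper: Lemma~\ref{lem:normalordering} plus the $J^\pm$ splitting for $(a)$, the adjoint/Klein-factor/\eqref{expQR}/Baker--Campbell--Hausdorff bookkeeping for $(b)$, and induction on $N$ for $(c)$ (where, as the paper makes explicit, the induction step rests on the additivity $\chi_{\mu_1+\cdots+\mu_N,\mu_{N+1}}(\alpha_1+\cdots+\alpha_N,\alpha_{N+1})=\prod_j\chi_{\mu_j,\mu_{N+1}}(\alpha_j,\alpha_{N+1})$, which your ``one cocycle per pair'' bookkeeping implicitly uses). The only genuine variation is in $(d)$: you deduce unitarity-up-to-scalar from $\Phi^\dag\Phi=\Phi\Phi^\dag=cI$ via the multiplication rule, whereas the paper factors $\Phi_\mu(\alpha)=\ee^{\text{phase}}\ee^{[J^-(\alpha),J^+(\alpha)]/2}\,R_+^{\mu_+}R_-^{\mu_-}\ee^{\ii J(\alpha)}$ and observes the last factor is manifestly unitary; both arguments are valid and both locate the role of \eqref{HScondition} in the finiteness of $[J^-(\alpha),J^+(\alpha)]$.
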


For the convenience of the reader, we give a concise self-contained proof of this lemma further below. Before that, we give a few remarks.

\begin{enumerate} 
\item The results summarized in Lemma~\ref{lem:Phi} are well-known, even though they are usually formulated in a different language (some relevant literature can be tracked down from \cite{langmann2004,langmann2015}, for example). 

\item Lemma~\ref{lem:Phi}$(a)$ and $(c)$ ensure that the computations in the rest of this paper are well-defined. In a nutshell, the strategy we use is as follows. 
As will be seen, anyons are vertex operators $\Phi_\mu(\alpha)$ for sequences  $\alpha$ such that $\alpha_{r,n}\propto \ee^{2\ii\kappa x}/\ii n$ ($n\in\Z_{\neq 0}$, $x\in\R$) and, for such sequences, the condition in \eqref{HScondition} is {\em not} satisfied; this corresponds to the fact that anyons are not operators but operator-valued distributions. However, we can make precise sense of anyons  by regularizing the pertinent sequences $\alpha$ as follows,
\begin{equation*} 
\alpha_{r,n}\to \ee^{-2\kappa \eps|n|}\alpha_{r,n} \quad (n\in\Z_{\neq 0}) 
\end{equation*} 
with $\eps>0$ a regularization parameter. The regularized vertex operators are proportional to unitary operators which can be multiplied without problems; at the end of the computations, the limit can be taken where all regularization parameters are removed. This can be regarded as a generalization of the well-known strategy to make mathematical sense of the $2\ell$-periodic Dirac delta function $\delta(x)$ as the limit of the following $C^\infty$-functions as $\eps\to 0^+$, 
\begin{equation} 
\label{deltaeps}
\delta(x;\eps) \coloneqq \frac1{2\ell}\sum_{n\in\Z} \ee^{2\kappa (\ii n x-|n|\eps)}\quad (x\in\R,\eps>0). 
\end{equation} 
\item As explained in more detail below, the multiplication rule \eqref{Phimult}--\eqref{J-J+} is a simple consequence of \eqref{expQR}, \eqref{RRRR1},  
and the Baker-Campbell-Hausdorff formula for operators $A$ and $B$ which have a $\C$-number commutator: 
\begin{equation}\label{BCH}
[A,B]=c I \Rightarrow \ee^A \ee^B= \ee^{c/2}\ee^{A+B}=\ee^c \ee^B \ee^A \qquad (c\in \C). 
\end{equation}
In practical computations with vertex operators, it is often convenient to use  \eqref{expQR}, \eqref{RRRR1} and \eqref{BCH} directly (rather than \eqref{Phimult}--\eqref{J-J+}). 
\item\label{rem:star}  To motivate the involution $*$ defined above, we recall that the pair of sequences $\alpha=(\alpha_{r,n})_{r=\pm,n\in\Z}$ can be naturally identified with a pair of complex-valued functions on the circle via inverse Fourier transformation, 
\begin{equation*}
\check\alpha_r(x) =\sum_{n\in \Z} \alpha_{r,n}\ee^{2\kappa \ii nx}\quad (r=\pm,x\in[-\ell,\ell]). 
\end{equation*}
With that identification, $\alpha^*$ corresponds to the complex conjugated functions $\overline{\check\alpha_r(x)}$. In particular, $\alpha=\alpha^*$ if and only if the functions $\check\alpha_r(x)$ are real-valued.

\item The set of pairs $(\mu,\alpha)$ with $\mu=(\mu_+,\mu_-)\in\Z^2$ and $\C$-valued sequences $\alpha=(\alpha_{r,n})_{r=\pm,n\in\Z}$ satisfying $\alpha=\alpha^*$ and \eqref{HScondition}  is an Abelian group under addition. The function $\chi_{\mu,{\mu'}}(\alpha,\beta)$ defined in \eqref{cocycle} is a non-trivial $2$-cocycle of this group, and $(\mu,\alpha)\mapsto \Phi_\mu(\alpha)$ is a projective representation of this group.
This group is an important example of a loop group (where one usually interprets $\alpha$ as a pair of functions, as explained in Remark~\ref{rem:star} above).
\end{enumerate} 

\begin{proof}[Proof of Lemma~\ref{lem:Phi}] 
$(a)$ This follows from  Lemma~\ref{lem:normalordering} (interpreting the exponentials as Taylor series). 

$(b)$ The definition \eqref{Jdef} of $J(\alpha)$ implies that $J(\alpha)^\dag  =  J(\alpha^*)$, which implies $J^\pm(\alpha)^\dag=J^{\mp}(\alpha^*)$; see \eqref{projectionsdefinition}.  
Hence, using \eqref{Phinormalordered}, \eqref{aR2}, and \eqref{RRRR1},
\begin{multline*} 
\Phi_\mu(\alpha)^\dag = \big(\ee^{\ii \sum_{r=\pm}\alpha_{r,0} Q_r/2} R_+^{\mu_+}R_-^{\mu_-} \ee^{\ii \sum_{r=\pm}\alpha_{r,0} Q_r/2}\ee^{\ii J^+(\alpha)}\ee^{\ii J^-(\alpha)}\big)^\dag \\
= 
\ee^{-\ii \sum_{r=\pm}\overline{\alpha_{r,0}} Q_r/2}R_-^{-\mu_-}R_+^{-\mu_+}\ee^{-\ii \sum_{r=\pm}\overline{\alpha_{r,0}} Q_r/2}\ee^{-\ii J^+(\alpha^*)}\ee^{-\ii J^-(\alpha^*)} 
 = (-1)^{\mu_+\mu_-}\Phi_{-\mu}(-\alpha^*).
\end{multline*} 
The multiplication rule \eqref{Phimult} is implied by   \eqref{RRRR}, \eqref{expQR}, \eqref{Phinormalordered} and the Baker-Campbell-Hausdorff formula \eqref{BCH}; 
indeed, these formulas imply 
\begin{multline}\label{productof2NOoperators}
\Phi_\mu(\alpha)\Phi_{\mu'}(\beta) = \; \xxa R_+^{\mu_+}R_-^{\mu_-} \ee^{\ii J(\alpha)}\xxe \xxa R_+^{{\mu'}_+}R_-^{{\mu'}_-} \ee^{\ii J(\beta)}\xxe 
\\ = \;  (-1)^{\mu_-{\mu'}_+}\ee^{\ii\sum_{r=\pm} (\alpha_{r,0}{\mu'}_r-\beta_{r,0}\mu_r ) \nu_0/2} \ee^{-[J^-(\alpha),J^+(\beta)]}
\xxa R_+^{\mu_+ +{\mu'}_+}R_-^{\mu_++{\mu'}_-} \ee^{\ii J(\alpha+\beta)}\xxe
\end{multline}
which gives \eqref{Phimult}. On the other hand, from \eqref{aR1} and \eqref{projectionsdefinition}, we deduce that
\begin{multline}
[J^-(\alpha),J^+(\beta)] 
= \sum_{r,r'=\pm}\sum_{n,m=1}^\infty  (\alpha_{r,-n}c_n - \alpha_{-r,n}s_n)(\beta_{r',m}c_m - \beta_{-r',-m}s_m)\underbrace{[a_{r,n}, a_{r',-m}]}_{\delta_{r,r'}n\delta_{n,m}}\\ 
= \sum_{r=\pm}\sum_{n=1}^{\infty} n\big(c_n^2\alpha_{r,-n}\beta_{r,n}+s_n^2\alpha_{-r,n}\beta_{-r,-n}-c_ns_n(\alpha_{-r,n}\beta_{r,n}+\alpha_{r,-n}\beta_{-r,-n})\big).
\end{multline}
Changing $r\to -r$ in two of the terms, we obtain \eqref{J-J+}.
Finally, \eqref{Phiexpectation} follows from \eqref{aR4}   and \eqref{Phinormalordered}. 

$(c)$ We note that, by \eqref{Phinormalordered}, 
\begin{equation}\label{NPhi2} 
\xxa \Phi_{\mu_1}(\alpha_1)\cdots \Phi_{\mu_N}(\alpha_N)\xxe \; = \Phi_{\mu_1+\cdots+\mu_N}(\alpha_1+\cdots+\alpha_N), 
\end{equation} 
and therefore \eqref{NPhi} is equivalent to 
\begin{equation}\label{NPhi1} 
\Phi_{\mu_1}(\alpha_1)\cdots \Phi_{\mu_N}(\alpha_N) = \prod_{1\leq j<k\leq N}\chi_{\mu_j,\mu_k}(\alpha_j,\alpha_k) \Phi_{\mu_1+\cdots+\mu_N}(\alpha_1+\cdots+\alpha_N)
\end{equation} 
for all $N=2,3,\ldots$. This can be proved by induction: For $N=2$, \eqref{NPhi1} is the first identity in \eqref{Phimult}, and the induction step from $N$ to $N+1$ is implied by the first identity in \eqref{Phimult} since 
\begin{equation} 
\chi_{\mu_1+\cdots+\mu_N,\mu_{N+1}}(\alpha_1+\cdots+\alpha_N,\alpha_{N+1}) = \prod_{j=1}^N \chi_{\mu_j,\mu_{N+1}}(\alpha_j,\alpha_{N+1}) 
\end{equation} 
by \eqref{cocycle} and the fact that the map $\alpha\mapsto J^-(\alpha)$ is linear. Combining \eqref{NPhi1} with \eqref{Phiexpectation} we obtain the stated result about the vacuum expectation value in \eqref{NPhiexp}.

$(d)$ To see this, note that \eqref{expQR}, \eqref{Phinormalordered}, and \eqref{BCH} imply that
\begin{equation} 
\Phi_\mu(\alpha) = \ee^{\sum_{r=\pm}\ii\nu_0\alpha_{r,0}\mu_r/2}\ee^{[J^-(\alpha),J^+(\alpha)]/2} U_\mu(\alpha)
\end{equation} 
 with 
\begin{equation} 
U_\mu(\alpha) \coloneqq R_+^{\mu_+}R_-^{\mu_-} \ee^{\ii J(\alpha)} 
\end{equation} 
(recall that $\sum_{r=\pm}\alpha_{r,0}Q_r + J^+(\alpha)+J^-(\alpha)=J(\alpha)$). 
Assuming that $\alpha=\alpha^*$, we have 
\begin{equation} 
[J^-(\alpha),J^+(\alpha)]= \sum_{r=\pm} \sum_{n=1}^{\infty}n\big(c_n^2+s_n^2)|\alpha_{r,n}|^2 - 2 \sum_{r=\pm} \sum_{n=1}^{\infty}nc_n s_n \re(\alpha_{r,n} \alpha_{-r,n}),
\end{equation} 
where the sums are finite if and only if the condition in \eqref{HScondition} holds true\footnote{Since $c_n^2=1+s_n^2$  and $s_n$ goes to zero like $q^n$ as $n\to\infty$,  $\eqref{HScondition}$ controls the convergence of these sums.} and, in this case, $U_\mu(\alpha)$ is a unitary operator: $U_\mu(\alpha)U_\mu(\alpha)^\dag =U_\mu(\alpha)^\dag U_\mu(\alpha)=I$. 
\end{proof}

\section{Anyons and the eCS model}
\label{sec:anyonseCS} 
We give a precise meaning to anyons and the second quantization of the eCS model. This generalizes results in \cite{langmann2004}. 

\subsection{Anyons}
\label{subsec:anyons2}
Anyons can be obtained as special cases of vertex operators as follows. 

\begin{definition}[Anyons] 
\label{def:anyons} 
The (regularized) anyons with statistics parameter $\nu\in\nu_0\Z$ and position $x\in[-\ell,\ell]$ are given by 

\begin{equation}\label{anyondefinition}
\phi_{r,\nu}(x;\epsilon)\coloneqq \; \xxa R_r^{\nu/\nu_0} \ee^{-\ii\nu (2r\kappa Q_r x+K_r(x;\epsilon))} \xxe\quad (r=\pm)
\end{equation} 
with
\begin{equation}\label{Kr} 
K_r(x;\epsilon)\coloneqq \sum\limits_{n\in\Z_{\neq 0}}\frac{1}{\ii n} \ee^{2\kappa(\ii  nrx-|n|\epsilon)} b_{r,n} . 
\end{equation}
\end{definition} 

To motivate this definition, we note that the anyons $\phi_\eps(x)$ defined in \cite[Eq.\ (37)]{langmann2004} can be identified with our anyons $\phi_{+,\nu}(x;\eps)$.\footnote{Note that $Q$, $\nu$, and $\mu\nu$ in  \cite[Eq.\ (37)]{langmann2004} correspond to $Q_+/\nu_0$, $\nu_0$ and, $\nu$ here, respectively.} As is clear from our presentation here, there is another independent set of anyons $\phi_{-,\nu}(x;\eps)$ which are on the same footing as the anyons $\phi_{+,\nu}(x;\eps)$ except for a sign change in the argument $x$: the naive choice for these other anyons would be 
\begin{equation} 
\label{def:tildephi} 
\tilde\phi_\eps(x) \coloneqq \phi_{-,\nu}(-x;\eps), 
\end{equation} 
without this sign change. The reason why we make this sign change becomes clear from the following result summarizing the properties of the anyons defined in \eqref{anyondefinition}: if and only if we make this sign change, we obtain anyon correlation functions which are translationally invariant; see \eqref{correlationfunction}. As will be elaborated in Section~\ref{sec:ncILW}, this sign change has far-reaching consequences for the physics interpretation of our results. 

We now summarize various properties of the anyons.

\begin{proposition}\label{prop:anyons} 
$(a)$ The anyons $\phi_{r,\nu}(x;\eps)$ are proportional to unitary operators on $\cF$, and they satisfy  
\begin{equation}\label{phiphip} 
\phi_{r,\nu}(x;\eps)\phi_{r',\nu'}(x';\eps') = (-1)^{\delta_{r,-}\delta_{r',+}\nu\nu'/\nu_0^2} \tet_{r,r'}(x-x';\eps+\eps')^{\nu\nu'} \xxa \phi_{r,\nu}(x;\eps)\phi_{r',\nu'}(x';\eps') \xxe
\end{equation} 
for all $\nu,\nu'\in\nu_0\Z$, $r,r'\in\{\pm\}$, $x,x'\in[-\ell,\ell]$, and $\eps,\eps'>0$, with the special functions 
\begin{equation} 
\label{tetrr} 
\tet_{r,r'}(x;\eps) \coloneqq \begin{cases} \ttet_1(r\kappa x,q;\kappa\eps) \quad (r=r') \\ \ttet_4(\kappa x,q;\kappa\eps) \quad (r=-r') \end{cases} \quad (r,r'=\pm), 
\end{equation} 
\begin{equation} 
\label{ttet1ttet4}
\begin{split} 
\ttet_1(x,q;\eps)\coloneqq &\;  (\ee^{-\ii  x}-\ee^{\ii x-2\eps})\prod_{m=1}^\infty \big(1-q^{2m}\ee^{2\ii x-2\eps}\big)\big(1-q^{2m}\ee^{-2\ii x-2\eps}\big),\\
\ttet_4(x,q;\eps)\coloneqq &\;  \prod_{m=1}^\infty \big(1-q^{2m-1}\ee^{2\ii x-2\eps}\big)\big(1-q^{2m-1}\ee^{-2\ii x-2\eps}\big).
\end{split} 
\end{equation} 

\noindent $(b)$ The anyons obey the exchange relations
\begin{equation}\label{exchangerelations} 
\phi_{\nu,r}(x;\eps)\phi_{\nu',r'}(x';\eps') = 
\begin{cases} 
\ee^{-\ii \pi\nu\nu' \sign(r(x-x');\eps+\eps')}\phi_{\nu',r'}(x';\eps')\phi_{\nu,r}(x;\eps)\quad &(r'=r)\\
(-1)^{\nu\nu'/\nu_0^2}\phi_{\nu',r'}(x';\eps')\phi_{\nu,r}(x;\eps)\quad &(r'=-r)
\end{cases} 
\end{equation} 
with 
\begin{equation}\label{sgndefinition}
\mathrm{sgn}(x;\epsilon)\coloneqq 
-\frac1{\ii\pi}\log\left(\frac{1-\ee^{2\kappa(\ii x-\eps)}}{\ee^{2\kappa\ii x}-\ee^{-2\kappa\eps}}\right) \quad (x\in\R,\eps>0)
\end{equation}
for all $\nu,\nu'\in\nu_0\Z$, $r,r'\in\{\pm\}$, $x,x'\in[-\ell,\ell]$, and $\eps,\eps'>0$, where the branch of the logarithm is fixed so that $\mathrm{sgn}(x;\epsilon)$ is continuous for $x \in \R$ and $\mathrm{sgn}(0;\epsilon) = 0$.

\noindent $(c)$  The anyons have the following adjoints, 
\begin{equation}\label{anyon_adjoints}
\phi_{r,\nu}(x;\eps)^\dag = \phi_{r,-\nu}(x;\eps), 
\end{equation} 
and the following periodicity properties, 
\begin{equation}\label{anyonsperiodicity}
\phi_{r,\nu}(x+2\ell;\eps) = \ee^{-\ii r \pi \nu Q_{r}} \phi_{r,\nu}(x;\eps)  \ee^{-\ii r \pi\nu Q_{r}}= \ee^{-\ii r\pi \nu^{2}}\phi_{r,\nu}(x;\epsilon)\ee^{-2\ii\pi\nu Q_r} , 
\end{equation} 
for all $\nu\in\nu_0\Z$, $x\in[-\ell,\ell]$, and $\eps>0$.  

\noindent $(d)$ A product of an arbitrary number, $N$, of anyons is related to its normal ordered form as follows, 
\begin{multline}\label{normalorderanyons} 
\phi_{\nu_1,r_1}(x_1;\eps_1)\cdots \phi_{\nu_{N},r_{N}}(x_{N};\epsilon_N) \\
= \prod_{1\leq j<k\leq N} (-1)^{\delta_{r_j,-}\delta_{r_k,+}\nu_j\nu_k/\nu_0^2} \tet_{r_j,r_k}(x_j-x_k;\eps_j+\eps_k)^{\nu_j\nu_k}\\ \times 
\xxa \phi_{\nu_1,r_1}(x_1;\eps_1)\cdots \phi_{\nu_{N},r_{N}}(x_{N};\epsilon_N)\xxe.
\end{multline} 
with the special functions in \eqref{tetrr}--\eqref{ttet1ttet4}, for all $\nu_j\in\nu_0\Z$, $r_j\in\{\pm\}$, $x_j\in[-\ell,\ell]$ and $\eps_j>0$ ($j=1,\ldots,N$).  
Thus, the correlation function $\langle\Omega,\phi_{\nu_1,r_1}(x_1;\eps_1)\cdots \phi_{\nu_{N},r_{N}}(x_{N};\eps_{N})\Omega\rangle$
is non-zero only if the following total charges of the right- and left handed particles are both zero, 
\begin{equation} 
Q_\pm\coloneqq \sum_{j=1}^{N}\delta_{\pm ,r_j}\nu_j/\nu_0 ;  
\end{equation} 
more specifically, 
\begin{multline} 
\label{correlationfunction} 
\langle \Omega,\phi_{\nu_1,r_1}(x_1;\eps_1)\cdots \phi_{\nu_{N},r_{N}}(x_{N};\eps_{N})\Omega\rangle  
\\= \delta_{Q_+,0}\delta_{Q_-,0} 
\prod_{1\leq j<k\leq N} (-1)^{\delta_{r_j,-}\delta_{r_k,+} \nu_j\nu_k/\nu_0^2} \tet_{r_j,r_k}(x_j-x_k;\eps_j+\eps_k)^{\nu_j\nu_k}. 
\end{multline} 
\end{proposition}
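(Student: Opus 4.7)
The strategy is to recognize each anyon as a specific regularized vertex operator and then invoke Lemma~\ref{lem:Phi}. From Definitions~\ref{def:vertexoperators} and~\ref{def:anyons}, one has $\phi_{r,\nu}(x;\eps)=\Phi_{\mu^{(r,\nu)}}(\alpha^{(r,\nu,x,\eps)})$ where $\mu^{(r,\nu)}_r=\nu/\nu_0$, $\mu^{(r,\nu)}_{-r}=0$, and $\alpha^{(r,\nu,x,\eps)}$ is supported on the chirality index $r$ with $\alpha^{(r,\nu,x,\eps)}_{r,0}=-2r\kappa\nu x$ and
\[
\alpha^{(r,\nu,x,\eps)}_{r,n}=\frac{\nu}{\ii n}\ee^{2\kappa(-\ii nrx-|n|\eps)}\qquad(n\in\Z_{\neq 0})
\]
(the index change $n\to -n$ converts $b_{r,n}$ in \eqref{Kr} into $b_{r,-n}$ in \eqref{Jdef}). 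The regularization $\eps>0$ implies \eqref{HScondition}, and a direct check yields $(\alpha^{(r,\nu,x,\eps)})^*=\alpha^{(r,\nu,x,\eps)}$; Lemma~\ref{lem:Phi}$(d)$ then immediately furnishes the unitarity claim in $(a)$. The central computation is then to evaluate the cocycle $\chi_{\mu,\mu'}(\alpha,\beta)$ from \eqref{cocycle} for two such anyons.

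For the same chirality $r'=r$, the $c_ns_n$ cross-terms in \eqref{J-J+} drop (the $-r$ entries of $\alpha$ are zero); expanding $c_n^2=\sum_{m\geq 0}q^{2mn}$ and $s_n^2=\sum_{m\geq 1}q^{2mn}$ and summing via $\sum_{n\geq 1}w^n/n=-\log(1-w)$ reconstitutes $\log$ of the product formula for $\ttet_1$ in \eqref{ttet1ttet4}; the zero-mode phase $\ee^{\ii\sum_R(\alpha_{R,0}\mu'_R-\beta_{R,0}\mu_R)\nu_0/2}=\ee^{-\ii r\kappa\nu\nu'(x-x')}$ from \eqref{cocycle} supplies exactly the missing prefactor $\ee^{-\ii\kappa z}$ of $\ttet_1$, giving $\chi=\ttet_1(\kappa r(x-x'),q;\kappa(\eps+\eps'))^{\nu\nu'}$. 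For opposite chirality $r'=-r$, only the $c_ns_n$ terms survive; using $c_ns_n=q^n/(1-q^{2n})=\sum_{m\geq 0}q^{(2m+1)n}$ reconstitutes $\log\ttet_4$, the zero-mode phase is trivial, and the Klein-factor sign $(-1)^{\mu_-\mu'_+}=(-1)^{\delta_{r,-}\delta_{r',+}\nu\nu'/\nu_0^2}$ survives, yielding $\chi=(-1)^{\delta_{r,-}\delta_{r',+}\nu\nu'/\nu_0^2}\ttet_4(\kappa(x-x'),q;\kappa(\eps+\eps'))^{\nu\nu'}$, in agreement with \eqref{tetrr}.

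With the cocycle identified, $(a)$ follows from the first form of \eqref{Phimult} and $(b)$ from the second: for same chirality the ratio $\chi_{\mu,\mu'}(\alpha,\beta)/\chi_{\mu',\mu}(\beta,\alpha)$ collapses to $[\ttet_1(\kappa r(x-x'),\cdot)/\ttet_1(-\kappa r(x-x'),\cdot)]^{\nu\nu'}$, which reduces to $\ee^{-\ii\pi\nu\nu'\sign(r(x-x');\eps+\eps')}$ by the definition \eqref{sgndefinition} after cancelling the even product expansions in \eqref{ttet1ttet4}; for opposite chirality $\ttet_4$ is even and only the Klein-factor sign $(-1)^{\nu\nu'/\nu_0^2}$ remains. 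Part $(c)$: the adjoint follows from \eqref{Phiadjoint} upon verifying $\mu_+\mu_-=0$, $-\mu^{(r,\nu)}=\mu^{(r,-\nu)}$, and $-(\alpha^{(r,\nu,x,\eps)})^*=\alpha^{(r,-\nu,x,\eps)}$; periodicity reduces to $K_r(x+2\ell;\eps)=K_r(x;\eps)$ (since $\ee^{2\kappa\ii nr\cdot 2\ell}=1$), with the zero-mode factor $\ee^{-\ii\nu 2r\kappa Q_r\cdot 2\ell}$ split symmetrically around $R_r^{\nu/\nu_0}$ to produce the first form, and then commuted past $R_r^{\nu/\nu_0}$ via \eqref{expQR} with $\alpha=-r\pi\nu$, $\mu=\nu/\nu_0$ to yield the anomalous phase $\ee^{-\ii r\pi\nu^2}$ and the second form. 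Part $(d)$ is obtained by iterating $(a)$ through Lemma~\ref{lem:Phi}$(c)$: applying \eqref{NPhi} to the pairwise $\chi$-factors produces \eqref{normalorderanyons}, and \eqref{correlationfunction} follows from the vacuum-expectation statement of Lemma~\ref{lem:Phi}$(c)$ upon translating $\sum_j(\mu_j)_\pm=0$ into the stated charge-conservation conditions via $(\mu^{(r_j,\nu_j)})_\pm=\delta_{\pm,r_j}\nu_j/\nu_0$. The principal obstacle is the theta-function identification in the second paragraph: one must interleave carefully the three contributions ($c_n^2$, $s_n^2$, $c_ns_n$), the zero-mode phase, and the Klein-factor cocycle to reproduce the compact product expansions \eqref{ttet1ttet4}, in particular matching the linear-in-$z$ prefactor of $\ttet_1$ against the zero-mode phase.
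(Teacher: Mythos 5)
Your proposal is correct and follows essentially the same route as the paper: anyons are identified as regularized vertex operators with the parameters of \eqref{anyonparameters}, unitarity and parts $(b)$--$(d)$ are read off from Lemma~\ref{lem:Phi} together with \eqref{expQR}, and the heart of the matter is the same geometric-series/$\log(1-w)$ resummation of the $c_n^2$, $s_n^2$, $c_ns_n$ contributions that reconstitutes the product formulas \eqref{ttet1ttet4}, with the zero-mode phase supplying the $\ee^{-\ii\kappa z}$ prefactor of $\ttet_1$ exactly as you say. The only (cosmetic) difference is that you organize this computation through the general cocycle \eqref{cocycle}--\eqref{J-J+} of Lemma~\ref{lem:Phi}$(b)$ and obtain the adjoint from \eqref{Phiadjoint}, whereas the paper carries out the equivalent commutators $[K_r^-,K_{r'}^+]$ and the adjoint directly from the expanded form \eqref{phiexpanded} via Baker--Campbell--Hausdorff.
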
 
Before giving a proof of Propositon~\ref{prop:anyons} further below, we make a few remarks and define special functions needed later on. 

\begin{enumerate} 
\item The function $\sign(x;\eps)$ defined in \eqref{sgndefinition} is a regularized version of the sign function $\sign(x)$ (where $\sign(x)$ is equal to $+1$ and $-1$ for $x>0$ and $x<0$, respectively): $\lim_{\eps\to 0^+}\sign(x;\eps)=\sgn(x)$. One way to see this is to use the Taylor series of $\log(1-z)$ and analytic continuation to derive 
\begin{equation} \label{sgndef2}
\mathrm{sgn}(x;\epsilon) =\frac{x}{\ell} +\frac1\pi \sum\limits_{n\in\Z_{\neq 0}}\frac1{\ii n}\ee^{2\kappa( \ii nx-|n|\epsilon)}, 
\end{equation} 
which implies $\mathrm{sgn}(-x;\epsilon) =-\mathrm{sgn}(x;\epsilon)$ and $\partial_x\mathrm{sgn}(x;\epsilon) = 2\delta(x;\eps)$ with $\delta(x;\eps)$ the regularized Dirac delta defined in \eqref{deltaeps} (recall that $\kappa/\pi=1/2\ell$); the latter two conditions on $\sgn(x;\eps)$ have a unique solution which, in the limit $\eps\to 0^+$, is equal to $\sgn(x)$. Thus, \eqref{exchangerelations} makes precise, and generalizes, the exchange relations in \eqref{phinuphinu}.

\item Anyons are a generalization of fermions; the latter correspond to the special case $\nu=\pm 1$. Another special case important for physics applications 
is that of composite fermions corresponding to $\nu,\nu'=\pm\sqrt{r_0}$ and $\nu_0=1/\sqrt{r_0}$ (for example\footnote{As explained after Lemma~\ref{lem:r0s0}, $\nu_0$ is not unique; for example, if $\sqrt{r_0}$ is an odd integer, one can also choose $\nu_0=1$.}) with $r_0>0$ an odd integer: in these cases, all exchange relations in \eqref{exchangerelations} become anticommutator relations. 

\item\label{rem3}  One important ingredient in Proposition~\ref{prop:anyons} is the special functions $\ttet_1(x,q;\eps)$ and $\ttet_4(x,q;\eps)$ in \eqref{ttet1ttet4}: they appear when normal ordering products of anyons and, at the same time,  they are the building blocks of anyon correlation functions. By writing  
\begin{equation} 
\begin{split}\label{ttet1ttet4mod} 
\ttet_1(x,q;\eps)=& -2\ii\ee^{-\eps}\sin(x+\ii\eps)\prod_{m=1}^\infty \big(1 -2q^{2m}\ee^{-2\eps}\cos(2 x)+ q^{4m}\ee^{-4\eps}\big), \\
\ttet_4(x,q;\eps)=&  \prod_{m=1}^\infty \big(1 -2q^{2m-1}\ee^{-2\eps}\cos(2x)+ q^{4m-2}\ee^{-4\eps}\big)
\end{split} 
\end{equation} 
one can see that $\ttet_1(x,q;\eps)$ and $\ttet_4(x,q;\eps)$, in the limit $\eps\to 0^+$, are proportional to the Jacobi theta functions $\tet_1(x,q)$ and  $\tet_4(x,q)$, respectively. Indeed, comparing \eqref{ttet1ttet4mod} with \eqref{tet1tet4}, we see that
\begin{equation*}
\lim_{\eps\to 0^+} \ttet_1(x,q;\eps) = \frac{\tet_1(x,q)}{\ii q^{1/4} \prod_{m=1}^\infty (1-q^{2m})}, \qquad
\lim_{\eps\to 0^+} \ttet_4(x,q;\eps) = \frac{\tet_4(x,q)}{\prod_{m=1}^\infty (1-q^{2m})}.
\end{equation*} 
Thus, we regard $\ttet_1(x,q;\eps)$ and $\ttet_4(x,q;\eps)$ as regularized versions of the theta functions $\tet_1(x,q)$ and  $\tet_4(x,q)$, respectively (the normalization differences are irrelevant for our purposes). 
\item We note the following relation, 
\begin{equation}\label{ttet4fromttet1}
\ttet_1(\kappa(x+\ii\delta),q;\eps) = q^{-1/2} \ee^{-\kappa \ii x}\ttet_4(\kappa x,q;\eps),
\end{equation} 
which can be verified from the definitions \eqref{ttet1ttet4} using $\ee^{-2\kappa\delta}=q$. This generalizes a well-known relation between the theta functions $\tet_1(x,q)$ and $\tet_4(x,q)$; see \eqref{tet1fromtet4}.  
\end{enumerate} 

For later use, we define functions closely related to the functions $\tet_{r,r'}(x;\eps)$ defined in \eqref{tetrr} which will be important in Sections \ref{subsec:eCS} and \ref{sec:eCSgen}.
 
\begin{definition}\label{def:wprreps} The regularized eCS potentials are 
\begin{equation} 
\wp_{r,r'}(x;\eps)\coloneqq -\partial_x^2\log \tet_{r,r'}(x;\eps)\quad (r,r'=\pm) 
\end{equation} 
for all $x\in[-\ell,\ell]$ and $\eps>0$. 
\end{definition} 
We note that this definition, \eqref{tetrr}, \eqref{ttet1ttet4mod}, and \eqref{ttet4fromttet1} imply\footnote{We use that $\wp_1(-x;\eps)=\wp_1(x;\eps)$.}
\begin{equation} 
\label{wprreps} 
\wp_{r,r'}(x;\eps) = \begin{cases} \wp_1(x;\eps) \quad &(r=r') \\ \wp_1(x+\ii\delta;\eps) \quad &(r=-r') \end{cases} \quad (r,r'=\pm)
\end{equation} 
with 
\begin{equation} 
\label{wp1eps}
\wp_1(x;\eps)\coloneqq -\partial_x^2\log \Big( \sin(\kappa(x+\ii\eps))\prod_{m=1}^\infty \big(1-2q^{2m}\ee^{-2\kappa\eps}\cos(2 \kappa x)+ q^{4m}\ee^{-4\kappa\eps}\big)\Big)
\end{equation} 
a regularized version of the modified Weierstrass function $\wp_1(x)=-\partial_x\zeta_1(x)$: $\lim_{\eps\to 0^+}\wp_1(x;\eps)=\wp_1(x)$; see \eqref{wp1fromtet1}.

\begin{proof}[Proof of Proposition~\ref{prop:anyons}] 
$(a)$ We observe that anyons are vertex operators as in Definition~\ref{def:vertexoperators}: $\phi_{r,\nu}(x;\eps)=\Phi_\mu(\alpha)$ with $\mu =(\mu_+,\mu_-)\in\Z^2$ and $\alpha = (\alpha_{r',n})_{r'=\pm,n\in \Z}$  given by
\begin{equation}\label{anyonparameters}
\mu_{r'}=\delta_{r,r'}\frac{\nu}{\nu_0}, \quad \alpha_{r',0} =-2r\delta_{r,r'}\nu \kappa x,\quad  \alpha_{r',n} =\frac{\delta_{r,r'}}{\ii n}\nu\ee^{-2\kappa(\ii r' nx+|n|\epsilon)} \quad (n\in \Z\setminus\{0\}); 
\end{equation}
since $\alpha=\alpha^*$ (i.e., $ \alpha_{r',n}= \overline{\alpha_{r',-n}}$) and $\sum_{r'=\pm}\sum_{n=1}^\infty n|\alpha_{r',n}|^2<\infty$, Lemma~\ref{lem:Phi}$(d)$ implies that $\phi_{r,\nu}(x;\eps)$ is proportional to a unitary operator. The proof of \eqref{phiphip} is lengthy and is therefore postponed to the end of the proof. 

\noindent $(b)$ 
Let $\mu =(\mu_+,\mu_-)$ and $\alpha = (\alpha_{r',n})_{r'=\pm,n\in \Z}$ be as in \eqref{anyonparameters}, so that $\phi_{r,\nu}(x;\eps)=\Phi_\mu(\alpha)$. 
Similarly, let $\mu' =(\mu'_+,\mu'_-)$ and $\alpha' = (\alpha'_{r'',n})_{r''=\pm,n\in \Z}$  be such that $\phi_{r',\nu'}(x';\eps')=\Phi_{\mu'}(\alpha')$. 
Then the first identity in \eqref{Phimult} and \eqref{phiphip} show that
\begin{equation}\label{lefttoprove} 
\chi_{\mu,{\mu'}}(\alpha,\alpha') = (-1)^{\delta_{r,-}\delta_{r',+}\nu\nu'/\nu_0^2}\tet_{r,r'}(x-x';\eps+\eps')^{\nu\nu'}.
\end{equation} 
Together with the second identity in \eqref{Phimult} and the relation $(-1)^{\delta_{r,-}\delta_{r',+}-\delta_{r',-}\delta_{r,+}}=(-1)^{\delta_{r,-r'}}$, this implies
\begin{equation}\label{phiphip2}
\phi_{r,\nu}(x;\eps)\phi_{r',\nu'}(x';\eps') =  (-1)^{\delta_{r,-r'}\nu\nu'/\nu_0^2}\left(\frac{\tet_{r,r'}(x-x';\eps+\eps')}{\tet_{r',r}(x'-x;\eps+\eps')}\right)^{\nu\nu'}\phi_{r',\nu'}(x';\eps')\phi_{r,\nu}(x;\eps).
\end{equation} 
We use \eqref{tetrr}--\eqref{ttet1ttet4} to compute, for $r'=r$, 
\begin{multline*} 
\frac{\tet_{r,r}(x-x';\eps+\eps')}{\tet_{r,r}(x'-x;\eps+\eps')} =\frac{\ttet_1(r\kappa(x-x'),q;\kappa(\eps+\eps')) }{\ttet_1(r\kappa(x'-x),q;\kappa(\eps+\eps'))}
= \frac{ \ee^{-\ii\kappa r(x-x')}-\ee^{\ii\kappa r(x-x')-2\kappa(\eps+\eps')}}{ \ee^{\ii\kappa r(x-x')}-\ee^{-\ii \kappa r(x-x')-2\kappa(\eps+\eps')}}\\
= \frac{1-\ee^{2\kappa( \ii r(x-x')-(\eps+\eps'))}}{\ee^{2\kappa \ii r(x-x')}-\ee^{-2\kappa(\eps+\eps')}}=\ee^{-\ii\pi\sgn(r(x-x');\eps+\eps')}
\end{multline*} 
with $\sgn(x;\eps)$ in \eqref{sgndefinition}, and for $r'=-r$, 
\begin{equation*} 
\frac{\tet_{r,-r}(x-x';\eps+\eps')}{\tet_{r,-r}(x'-x;\eps+\eps')} = \frac{\ttet_4(\kappa(x-x'),q;\kappa(\eps+\eps')) }{\ttet_4(\kappa(x'-x),q;\kappa(\eps+\eps'))} =1 ; 
\end{equation*} 
inserting this into \eqref{phiphip2} we get \eqref{exchangerelations}--\eqref{sgndefinition}. 

\noindent $(c)$ 
We specialize \eqref{Phinormalordered}--\eqref{projectionsdefinition} to write the anyons as 
\begin{equation}\label{phiexpanded}
\phi_{r,\nu}(x;\epsilon) = \ee^{-\ii \nu \kappa Q_r rx} R_r^{\nu/\nu_0} \ee^{-\ii \nu \kappa Q_r rx} \ee^{-\ii \nu K_r^+(x;\epsilon)} \ee^{-\ii \nu K_r^-(x;\epsilon)}\quad (r=\pm) 
\end{equation}
where (recall \eqref{BT}--\eqref{cnsn}) 
\begin{equation}\label{Krpm}
\begin{split}  
K_r^+(x; \epsilon)  \coloneqq & -\sum_{n=1}^\infty \frac{1}{\ii n} \Big( \ee^{2\kappa(-\ii  n r x-n\epsilon)} c_n a_{r, -n} + \ee^{2\kappa(\ii  n r x-n\epsilon)} s_n a_{-r ,-n}\Big), \\ 
K_r^-(x; \epsilon)  \coloneqq & \sum_{n=1}^\infty \frac{1}{\ii n} \Big( \ee^{2\kappa(\ii  n r x-n\epsilon)} c_n a_{r, n} + \ee^{2\kappa(-\ii  n r x- n\epsilon)} s_n a_{-r ,n}\Big), 
 \end{split} 
\end{equation}
are the creation and annihilation parts of $K_r(x;\eps)$ in \eqref{anyondefinition} for $+$ and $-$, respectively.
Note that $K^\pm_r$ obey $K_r^\pm(x;\epsilon)^\dagger = K_r^\mp(x;\epsilon)$ as a consequence of \eqref{aR2}. 
With that and \eqref{phiexpanded}, \eqref{anyon_adjoints} is verified by the following computation using  \eqref{aR2}, 
\begin{equation}
\phi_{r,\nu}(x;\epsilon)^{\dag}
= \ee^{\ii r\nu \kappa Q_r x} R_r^{-\nu/\nu_0} \ee^{\ii r\nu \kappa Q_r x} 
\ee^{\ii \nu K_r^+(x;\epsilon)}
\ee^{\ii \nu K_r^-(x;\epsilon)} 
=\phi_{r,-\nu}(x;\epsilon). 
\end{equation}
From the definition of the anyons \eqref{anyondefinition}-- \eqref{Kr} and the $2\ell$-periodicity of $K_r(x;\epsilon)$, we obtain the first equality in \eqref{anyonsperiodicity}. The second equality in \eqref{anyonsperiodicity} then follows from \eqref{expQR}. 

\noindent $(d)$ This is a special case of Lemma~\ref{lem:Phi}$(c)$, using \eqref{lefttoprove} (see the proof of $(a)$ above for details). 

We finally prove \eqref{phiphip}. We first use \eqref{phiexpanded} together with  \eqref{expQR} and the Baker-Campbell-Hausdorff formula \eqref{BCH} to compute 
\begin{multline}\label{phiphi2}
\phi_{r,\nu}(x;\epsilon)\phi_{r',\nu'}(x';\epsilon') = \ee^{-\ii r\nu \kappa Q_r x} R_r^{\nu/\nu_0} \ee^{-\ii r\nu \kappa Q_r x} \ee^{-\ii \nu K_r^+(x;\epsilon)} \ee^{-\ii \nu K_r^-(x;\epsilon)}\\ \times  
\ee^{-\ii r'\nu' \kappa Q_{r'} x'} R_{r'}^{\nu'/\nu_0} \ee^{-\ii r'\nu' \kappa Q_{r'} x'} \ee^{-\ii \nu K_{r'}^+(x';\epsilon')} \ee^{-\ii \nu' K_{r'}^-(x';\epsilon')}
\\ =  \ee^{-\delta_{r,r'}\ii \nu\nu'\kappa(rx-r'x')}\ee^{-\ii r\nu \kappa Q_r x-\ii r'\nu' \kappa Q_{r'} x'} R_r^{\nu/\nu_0}R_{r'}^{\nu'/\nu_0} \ee^{-\ii r\nu \kappa Q_r x-\ii r'\nu' \kappa Q_{r'} x'} \\ \times 
 \ee^{-\nu\nu'[K_r^-(x;\epsilon),K_{r'}^+(x';\epsilon')]}\ee^{-\ii \nu K_r^+(x;\epsilon)-\ii \nu K_{r'}^+(x';\epsilon')} \ee^{-\ii \nu K_r^-(x;\epsilon)-\ii \nu' K_{r'}^-(x';\epsilon')}\\
 = (-1)^{\delta_{r,-}\delta_{r',+}\nu\nu'/\nu_0^2}  \ee^{-\nu\nu'(\delta_{r,r'}\ii \kappa r(x-x')+[K_r^-(x;\epsilon),K_{r'}^+(x';\epsilon')])}\xxa \phi_{r,\nu}(x;\epsilon)\phi_{r',\nu'}(x';\epsilon') \xxe , 
\end{multline} 
where we used $R_r^{\nu/\nu_0}R_{r'}^{\nu'/\nu_0} =(-1)^{\delta_{r,-}\delta_{r',+}\nu\nu'/\nu_0^2}\xxa R_r^{\nu/\nu_0}R_{r'}^{\nu'/\nu_0}\xxa$ in the last step (this is non-trivial only for $(r,r')=(-,+)$, in which case it follows from \eqref{RRRR1}).  Next, using definitions and straightforward computations, we obtain the following commutators
\begin{equation}\label{KpmtoCCt}
[K_{r}^{-}(x;\epsilon),K_{r'}^{+}(x';\epsilon')]=\begin{cases}
C(r(x-x');\epsilon+\epsilon') & (r'=r)  \\
\tilde{C}(x-x';\epsilon+\epsilon') & (r'=-r)	
\end{cases}	
\end{equation}
where
\begin{equation}\label{CCt}
\begin{split}
C(x;\epsilon)\coloneqq &\; \sum_{n=1}^{\infty} \frac{1}{n}\big( \ee^{2\kappa(\ii n x-n\epsilon)}c_n^2+\ee^{2\kappa(-\ii n x-n\epsilon)}s_n^2	\big), \\
\tilde{C}(x;\epsilon)\coloneqq &\; \sum_{n=1}^{\infty} \frac{c_ns_n}{n}\big( \ee^{2\kappa(\ii n x-n\epsilon)}+\ee^{2\kappa(-\ii n x-n\epsilon)}	\big).
\end{split}
\end{equation}
Indeed, for $r'=r$, using \eqref{aR1} and \eqref{Krpm}, 
\begin{multline}
[K_{\pm}^{-}(x;\epsilon),K_{\pm}^{+}(x';\epsilon')] = -\sum_{n,m=1}^\infty \Bigl(  \frac{1}{\ii n}\ee^{2\kappa(\pm\ii  n x-n\epsilon)} c_n  \frac{1}{\ii m}\ee^{2\kappa(\mp\ii  m x'-m\epsilon')} c_m
\underbrace{[a_{\pm, n},a_{\pm, -m}]}_{n\delta_{n,m}} \\ +  \frac{1}{\ii n}\ee^{2\kappa(\mp\ii  n x-n\epsilon)}s_n \frac{1}{\ii m}\ee^{2\kappa(\pm\ii  m x'- m\epsilon')} s_m  
\underbrace{[a_{\mp ,n} a_{\mp ,-m}]}_{n\delta_{n,m}} \Big) \\
= \sum_{n=1}^\infty \frac1n\Big(\ee^{2\kappa(\pm\ii  n (x-x')-n(\epsilon+\epsilon'))}c_n^2 + \ee^{2\kappa(\mp\ii  n (x-x')-n(\epsilon+\epsilon'))}s_n^2\Bigr)  =C(\pm(x-x'),\epsilon+\epsilon'), 
\end{multline} 
and similarly for $r'=-r$, 
\begin{multline} 
[K_{\pm}^{-}(x;\epsilon),K_{\mp}^{+}(x';\epsilon')] = 
-\sum_{n,m=1}^\infty \Big(  \frac{1}{\ii n}\ee^{2\kappa(\pm\ii  n x-n\epsilon)} c_n   \frac{1}{\ii m} \ee^{2\kappa(\mp\ii  m x'- m\epsilon')} s_m \underbrace{[a_{\pm, n},a_{\pm,-m}]}_{n\delta_{n,m}}\\
+  \frac{1}{\ii n}  \ee^{2\kappa(\mp\ii  n x-n\epsilon)} s_n \frac{1}{\ii m} \ee^{2\kappa(\pm\ii  m x'-m\epsilon')} c_m \underbrace{[a_{\mp,n}, a_{\mp, -m}]}_{n\delta_{n,m}}\Big) \\
= \sum_{n=1}^\infty \frac{c_ns_n}{n}\Big(\ee^{2\kappa(\pm\ii  n (x-x')-n(\epsilon+\epsilon'))}  +  \ee^{2\kappa(\mp\ii  n (x-x')-n(\epsilon+\epsilon'))} \Big) = \tilde{C}(x-x',\epsilon+\epsilon') .
\end{multline} 

To proceed, we insert \eqref{cnsn} and use the geometric series and the Taylor series of the function $\log(1-x)$ to compute 
\begin{multline*} 
C(x;\eps) = \sum_{n=1}^{\infty} \frac1{n(1-q^{2n})}\ee^{-2\kappa n\epsilon}\big(\ee^{2\kappa \ii nx}+q^{2n}\ee^{-2\kappa\ii n x}\big)\\
= \sum_{n=1}^{\infty}\sum_{m=0}^\infty  \frac1{n}\ee^{-2\kappa n\epsilon}\big(q^{2nm}\ee^{2\kappa \ii nx}+q^{2n(m+1)}\ee^{-2\kappa\ii n x}\big)\\
= -\log(1-\ee^{2\kappa (\ii x-\eps)}) - \sum_{m=1}^\infty\Big( \log\big(1-q^{2m}\ee^{2\kappa (\ii x-\eps)}) + \log(1-q^{2m}\ee^{-2\kappa (\ii x+\eps)}\big) \Big),  
\end{multline*} 
which implies 
\begin{multline} \label{expC}
\ee^{-\ii \kappa x-C(x;\epsilon)} =( \ee^{-\ii \kappa x}-\ee^{\kappa (\ii x-2\eps)})\prod_{m=1}^\infty \big(1-q^{2m}\ee^{2\kappa (\ii x-\eps)}\big)\big(1-q^{2m}\ee^{-2\kappa (\ii x+\eps)}\big)
= \ttet_1(\kappa x,q; \kappa \epsilon).
\end{multline} 
Similarly, 
\begin{multline*} 
\tilde{C}(x;\epsilon) =  \sum_{n=1}^{\infty} \frac{q^n}{n(1-q^{2n})} \ee^{-2\kappa n\epsilon} \big(\ee^{2\ii\kappa x} + \ee^{-2\ii\kappa x} \big) \\
=  \sum_{n=1}^{\infty}\sum_{m=0}^\infty \frac{1}{n} \ee^{-2\kappa n\epsilon} \big(q^{n(2m+1)}\ee^{2\ii\kappa x} + q^{n(2m+1)}\ee^{-2\ii\kappa x} \big) \\
= - \sum_{m=0}^\infty \Big(\log\big(1-q^{2m+1}\ee^{2\kappa(\ii x-\eps)}\big) + \log\big(1-q^{2m+1}\ee^{2\kappa(-\ii x-\eps)}\big) \Big),
\end{multline*} 
implying 
\begin{equation}\label{expCt}
\ee^{-\tilde{C}(x;\eps)} = \prod_{m=1}^\infty \big(1-q^{2m-1}\ee^{2\kappa(\ii x-\eps)}\big)\big(1-q^{2m-1}\ee^{-2\kappa(\ii x+\eps)}\big)
= \ttet_4(\kappa x,q; \kappa \epsilon).
\end{equation} 
Combining \eqref{KpmtoCCt}, \eqref{expC}, and \eqref{expCt}, we find
\begin{align}\nonumber
 \ee^{-\nu\nu'(\delta_{r,r'}\ii \kappa r(x-x')+[K_r^-(x;\epsilon),K_{r'}^+(x';\epsilon')])} 
 &= 
 \begin{cases} \big( \ee^{-\kappa(x-x')-C(r(x-x');\epsilon + \epsilon') }\big)^{\nu\nu'} & (r'=r),  \\
\big( \ee^{-\tilde{C}(x-x';\epsilon + \epsilon')}\big)^{\nu\nu'} & (r'=-r), \end{cases} 
	\\\label{phiphi3}
& = \tet_{r,r'}(x-x';\eps+\eps')^{\nu\nu'},
\end{align} 
and substituting this into \eqref{phiphi2}, we obtain \eqref{phiphip}--\eqref{ttet1ttet4}. 
\end{proof}

\subsection{Second quantizations of eCS model}\label{subsec:eCS} 
We start by defining the operator $\cH_{3,\nu}$ which plays a central role in this paper. 
 
We recall the definition of the charge operators: $Q_r=\nu_0 a_{r,0}$ ($r=\pm$), and that we have two sets of Heisenberg operators $a_{r,n}$ and $b_{r,n}$: the latter are obtained from the former by the Bogoliubov transformation in \eqref{BT}. 

\begin{definition}\label{def:cH} 
For real parameters $\nu$ and $\nu_0$ such that $\nu/\nu_0$ is a finite non-zero integer, let  
%\begin{equation} 
%\label{r0s0} 
%\nu^2=\frac{r_0}{s_0}>0,\quad \nu_0=\frac1{\sqrt{r_0s_0}} \qquad (r_0,s_0\in\Z_{\neq 0}), 
%\end{equation} 
%let 
\begin{equation}\label{cH2} 
\cH_{2} \coloneqq \sum_{r=\pm} W_{2,r} 
\end{equation} 
and 
\begin{equation}\label{cH3} 
\cH_{3,\nu} \coloneqq \frac12\Bigg(  \nu \sum_{r=\pm} W_{3,r}+(1-\nu^2)\cC\Bigg), 
\end{equation} 
with 
\begin{equation} 
\label{W2pm} 
W_{2,r}\coloneqq (2\kappa)\Bigg(  \frac12 \sum_{n\in\Z_{\neq 0}} \xxa b_{-n,r}b_{n,r}\xxe +\frac12 Q_r^2\Biggr) \quad (r=\pm) , 
\end{equation} 
\begin{equation} 
\label{W3pm} 
W_{3,r} \coloneqq (2\kappa)^2 \Bigg(\frac13  
%\sum_{{n,m\in\Z_{\neq 0}}\atop{n+m\neq 0}} 
\sum_{\substack{n,m\in\Z_{\neq 0}\\n+m\neq 0}} 
\xxa b_{r,-n}b_{r,-m}b_{r,n+m}\xxe  + Q_r \sum_{n\in\Z_{\neq 0}} \xxa b_{-n,r}b_{n,r}\xxe + \frac13 Q_r^3 \Bigg) \quad (r=\pm) , 
\end{equation} 
\begin{equation}\label{cC} 
\cC\coloneqq (2\kappa)^2 \sum_{n=1}^{\infty}\sum_{r=\pm} na_{r,-n}a_{r,n} = (2\kappa)^2 \frac12 \sum_{n\in\Z}\sum_{r=\pm} |n| \xxa a_{r,-n}a_{r,n}\xxe .
\end{equation} 
\end{definition} 

According to Lemma~\ref{lem:normalordering}, $\cH_{2}$ and $\cH_{3,\nu}$ are well-defined at least in the sense of sesquilinear forms on $\cD$; one can show that they actually define self-adjoint operators (this can be proved using results in \cite{grosse1992}, for example).

\begin{remark}\label{rem:L2004a} The regularized eCS Hamiltonians $H_{N;g}(\vx;\eps)$ in \eqref{eCSreg} below and $H_N^\eps$ in \cite[Eq.\ (61)]{langmann2004} differ by a factor $1/2$:  $H_{N;g}(\vx;\eps)=(1/2)H_N^\eps$. This is the reason for the factor $1/2$ in \eqref{cH3}. 
\end{remark} 

\begin{remark}\label{rem:L2004b}
As mentioned, the operator $\cH_{3,\nu}$ defined in \eqref{cH3} is a major player in the present paper, and it extends the operator $\cH$ defined in \cite[Eqs.\ (57)--(59)]{langmann2004}. 
It is not obvious but true that the latter can be written as $\cH = \nu W_{3,+} + (1-\nu^2)\cC -\nu^4 c_\epsilon a_{+,0}^4$, using the notation in the present paper. 
To verify this one has to take into account the following facts: (i) The parameter $2\kappa=\pi/\ell$ was set to $1$ in \cite{langmann2004} and, for this reason, we have to insert suitable powers of the factor $2\kappa$ in our definitions of $W_{3,\pm}$, $W_{2,\pm}$ and $\cC$ here; to compare with \cite{langmann2004}, set $2\kappa=1$. (ii) Note that $Q$ in \cite{langmann2004} corresponds to our $a_{+,0}$ and, for this reason, $W^3$ and $W^2$ in \cite[Eq.\ (58)]{langmann2004} are obtained from our $W_{3,+}$ and $W_{2,+}$ by replacing $Q_+$ with $a_{+,0}$; recall that $Q_+=\nu_0a_{+,0}$. (iii) In \cite{langmann2004}, $\nu=\nu_0$, while in the present paper the parameters $\nu_0$ and $\nu$ are allowed to be different (this detail is important for the generalizations in Section~\ref{sec:eCSgen}). (iv) The factor in front of the $Q^3$-term in \cite[Eq.\ (57)]{langmann2004} is equal to $\nu[(\nu^3-1)/3-(\nu-1)]$. 
\end{remark} 

For non-zero real $\nu$ such that $\nu/\nu_0$ is an integer, $N\in\Z_{\geq 1}$, $\vx=(x_1,\ldots,x_N)\in[-\ell,\ell]^N$, and $\eps>0$, 
we introduce the following shorthand notation for a product of $N$ of anyons, 
\begin{equation} 
\phi^N_{r,\nu}(\vx;\eps)\coloneqq  \phi_{r,\nu}(x_1;\eps)\cdots \phi_{r,\nu}(x_N;\eps)\quad (r=\pm) , 
\end{equation} 
and we define a corresponding regularized version of the eCS Hamiltonian \eqref{eCS}, 
\begin{equation} 
\label{eCSreg} 
H_{N;g}(\vx;\eps)\coloneqq -\sum_{j=1}^N \frac12 \frac{\partial^2}{\partial x_j^2} + \sum_{1\leq j<k\leq N} g(g-1) \wp_1(x_j-x_k;2\eps) \quad (g>0),
\end{equation} 
where $\wp_1(x;\eps)$ is the regularized modified Weierstrass $\wp$-function defined in \eqref{wp1eps}.

The following is a generalization of \cite[Proposition~2]{langmann2004}; this generalization is easy from a mathematical point of view, but conceptually it is a big step: it 
suggests a new physics interpretation of the model that led us to our new results. 

\begin{proposition}\label{prop:eCS}
The operator $\cH_{3,\nu}$ defined in \eqref{cH3} provides a twofold second quantization of the regularized eCS Hamilton \eqref{eCSreg} in the sense that the following relations hold true in both cases $r=\pm$, 
\begin{equation} 
\label{cH3PhiN}
\big[\cH_{3,\nu}, \phi^N_{r,\nu}(\vx;\eps)\big]\Omega = \left(H_{N;\nu^2}(\vx;\eps) + \half N\nu^4 c_\epsilon \right)\phi^N_{r,\nu}(\vx;\eps)\Omega + \Psi_{r,\nu}(\vx;\eps)\Omega, 
\end{equation} 
where the constant  
\begin{equation}\label{ceps}
c_\epsilon \coloneqq \frac1{3}\kappa^2-8\kappa^2\sum_{n=1}^{\infty}{n}s_n^2 \ee^{-4\kappa n\epsilon}-8\kappa^2\sum_{n,m=1}^{\infty} s_n^2s_m^2\ee^{-2\kappa(n+m)\epsilon}(\ee^{-2\kappa |n-m|\epsilon}-\ee^{-2\kappa(n+m)\epsilon})
\end{equation}
reduces to $c_0$ in \eqref{constants} as $\eps\to 0^+$,  and the following corrections vanish in the limit $\eps\to 0^+$,\footnote{We note that \eqref{cRrnu} for $\kappa=1/2$ corrects typos in \cite[Eq.~(64)]{langmann2004}.}
\begin{multline} 
 \Psi_{r,\nu}(\vx;\eps) \coloneqq \sum_{j=1}^N \phi_{r,\nu}(x_1;\eps)\cdots  \phi_{r,\nu}(x_{j-1};\eps)\xxa \cR_{r,\nu}(x_j;\eps) \phi_{r,\nu}(x_{j};\eps)\xxe
 \\ \times    \phi_{r,\nu}(x_{j+1};\eps) \cdots \phi_{r,\nu}(x_N;\eps), 
\end{multline} 
\begin{multline}\label{cRrnu} 
  \mathcal{R}_{r,\nu}(x;\epsilon)
  = \;  (2\kappa)^2\sum_{n,m=1}^{\infty} \bigg( \nu  s_n^2\left(\ee^{2\kappa \ii mr x} b_{r,m}-\ee^{-2\kappa \ii m r x} b_{r,-m}\right) \ee^{-2\kappa n\epsilon}
\\ -\ee^{-2\kappa \ii(n-m) r x}\xxa b_{r,-n} b_{r,m}\xxe \bigg)\left(\ee^{-2\kappa (n+m)\epsilon} - \ee^{-2\kappa |n-m|\epsilon} \right).
\end{multline}

Moreover, for $r=\pm$, $\vx=(x_1,\ldots,x_N) \in [-\ell,\ell]^N$ and $\vy=(y_1,\ldots,y_N) \in [-\ell,\ell]^N$, it holds that
\begin{equation} 
\label{OmOm}
\langle\Omega,  [\cH_{3,\nu},\phi_{r,\nu}^N(\vx;\eps)^\dag \phi_{r,\nu}^N(\vy;\eps)]\Omega\rangle=0.
\end{equation} 
\end{proposition}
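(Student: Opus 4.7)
The plan is to prove \eqref{cH3PhiN} first --- reducing the $r=+$ case to the argument of \cite{langmann2004} and obtaining $r=-$ from a chirality symmetry of $\cH_{3,\nu}$ --- and then to deduce \eqref{OmOm} from the self-adjointness of $\cH_{3,\nu}$ together with the identity $\cC\Omega=0$.

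For \eqref{cH3PhiN} with $r=+$, I would split $\cH_{3,\nu}=\tfrac{\nu}{2}(W_{3,+}+W_{3,-})+\tfrac{1-\nu^2}{2}\cC$. By \eqref{anyondefinition}--\eqref{Kr}, $\phi_{+,\nu}(x;\eps)$ is built only from $R_+$, $Q_+$, and $b_{+,n}$, and \eqref{aR1}--\eqref{bR1R2} imply that all mixed-chirality commutators among $\{R_r,Q_r,b_{r,n}\}$ vanish; hence $[W_{3,-},\phi^N_{+,\nu}(\vx;\eps)]=0$, and only $W_{3,+}$ and $\cC$ contribute. The remaining calculation --- repeated application of the Baker-Campbell-Hausdorff formula \eqref{BCH} to cubic monomials in $b_{+,n}$, using c-number commutators $[b_{+,n},\phi_{+,\nu}(x;\eps)]=(\text{c-number})\cdot\phi_{+,\nu}(x;\eps)$ derived from $K_+(x;\eps)$, then extending to products by Leibniz and normal-ordering via \eqref{normalorderanyons} --- reproduces the computation of \cite[Proof of Proposition~2]{langmann2004} up to the conventions noted in Remarks~\ref{rem:L2004a}--\ref{rem:L2004b}. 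The pair interactions emerge via the identity $\wp_1(x;\eps)=-\partial_x^2\log\ttet_1(\kappa x,q;\kappa\eps)$ implicit in Definition~\ref{def:wprreps}, the kinetic operator $-\tfrac12\partial_{x_j}^2$ arises from the second-derivative parts of the single-anyon commutators, and the residual normal-ordered contributions collect into $\Psi_{+,\nu}(\vx;\eps)$ and the constant $\tfrac12 N\nu^4c_\eps$. For the $r=-$ case, let $\sigma$ be the $*$-automorphism of $\cA$ determined on generators by $a_{r,n}\mapsto a_{-r,n}$ and $R_r\mapsto R_{-r}$; the relations \eqref{aR1}--\eqref{aR2} are symmetric under this swap, so $\sigma$ extends to a unitary on $\cF$ fixing $\Omega$, and \eqref{BT} gives $\sigma(b_{r,n})=b_{-r,n}$, whence $\sigma(\cH_{3,\nu})=\cH_{3,\nu}$ by \eqref{cH3}--\eqref{cC} and $\sigma\phi_{+,\nu}(x;\eps)\sigma^{-1}=\phi_{-,\nu}(-x;\eps)$ from \eqref{anyondefinition}. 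Applying $\sigma$ to the $r=+$ identity at $-\vx$ yields the $r=-$ identity, since $H_{N;\nu^2}$ and $\wp_1(x_j-x_k;2\eps)$ are invariant under $\vx\to-\vx$ and $c_\eps$ is chirality-independent.

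For \eqref{OmOm}, the normal-ordering convention \eqref{cC} immediately gives $\cC\Omega=0$ (each summand ends in an $a_{r,n}$ with $n\geq 1$, which annihilates $\Omega$), so $\cH_{3,\nu}\Omega=\tfrac{\nu}{2}\sum_{r'}W_{3,r'}\Omega$. Self-adjointness of each $W_{3,r'}$ gives
\begin{equation*}
\langle\Omega,[\cH_{3,\nu},X]\Omega\rangle
=\tfrac{\nu}{2}\sum_{r'}\bigl(\langle W_{3,r'}\Omega,X\Omega\rangle-\langle X^\dag\Omega,W_{3,r'}\Omega\rangle\bigr)
=\tfrac{\nu}{2}\sum_{r'}\langle\Omega,[W_{3,r'},X]\Omega\rangle,
\end{equation*}
with $X=\phi^N_{r,\nu}(\vx;\eps)^\dag\phi^N_{r,\nu}(\vy;\eps)$. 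The $r'=-r$ term vanishes because $W_{3,-r}$ commutes with every anyon factor in $X$. For the $r'=r$ term, one applies the single-chirality techniques from the previous paragraph separately to $\phi^N_{r,\nu}(\vx;\eps)^\dag$ and $\phi^N_{r,\nu}(\vy;\eps)$ via the Leibniz rule; the resulting expression assembles into a difference $(H_{N;\nu^2}(\vy;\eps)-H_{N;\nu^2}(\vx;\eps))\langle\Omega,X\Omega\rangle$ plus residual $\Psi$-contractions, both of which cancel under the $\vx\leftrightarrow\vy$ symmetry induced by $X^\dag(\vx,\vy)=X(\vy,\vx)$ combined with the Hermitian symmetry of $H_{N;\nu^2}$ intrinsic to the explicit vacuum correlator \eqref{correlationfunction}.

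The main obstacle will be the normal-ordering bookkeeping producing the constant $c_\eps$ in \eqref{ceps} and the explicit form \eqref{cRrnu} of $\cR_{r,\nu}$: reassembling the sums $\sum_n ns_n^2\ee^{-4\kappa n\eps}$ and $\sum_{n,m}s_n^2s_m^2\ee^{-\cdots}$ arising from cubic rearrangements in $W_{3,+}$ requires repeated use of the Bogoliubov identity $c_n^2-s_n^2=1$ together with the generating-function identities \eqref{expC}--\eqref{expCt} from the proof of Proposition~\ref{prop:anyons}; once this is in hand, the $\eps\to 0^+$ reduction $c_\eps\to c_0$ via \eqref{constants} is immediate and $\Psi_{r,\nu}(\vx;\eps)$ vanishes as a sesquilinear form on $\cD$.
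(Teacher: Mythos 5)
Your treatment of \eqref{cH3PhiN} follows the paper's own route: the $r=+$ case is reduced to the computation of \cite{langmann2004} after observing that $W_{3,-}$ commutes with $\phi^N_{+,\nu}(\vx;\eps)$, and the $r=-$ case is obtained by the chirality swap combined with $\vx\to-\vx$; your formalization via the automorphism $\sigma$ with $\sigma(b_{r,n})=b_{-r,n}$ and $\sigma\phi_{+,\nu}(x;\eps)\sigma^{-1}=\phi_{-,\nu}(-x;\eps)$ is a slightly more explicit version of the paper's ``by symmetry'' step. (One small imprecision: not every mixed-chirality commutator vanishes --- $R_+$ and $R_-$ anticommute --- but since $W_{3,-}$ contains no Klein factors this does not affect the conclusion.)

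The genuine gap is in your argument for \eqref{OmOm}. The reduction via $\cC\Omega=0$ to $\frac{\nu}{2}\sum_{r'}\langle\Omega,[W_{3,r'},X]\Omega\rangle$ is correct, as is the observation that the $r'=-r$ term vanishes identically; but your mechanism for killing the remaining $r'=r$ term does not close. Writing $G(\vx,\vy)\coloneqq\langle\Omega,[\cH_{3,\nu},X(\vx,\vy)]\Omega\rangle$, hermiticity of $\cH_{3,\nu}$ together with $X(\vx,\vy)^\dag=X(\vy,\vx)$ yields only the antisymmetry $G(\vx,\vy)=-\overline{G(\vy,\vx)}$, which is consistent with, but does not imply, $G=0$ (any purely imaginary constant, or any function of the form $h(\vx)-\overline{h(\vy)}$, satisfies it). Concretely, applying \eqref{cH3PhiN} to the two halves of $X$ via the Leibniz rule does produce $\bigl(H_{N;\nu^2}(\vy;\eps)-H_{N;\nu^2}(\vx;\eps)\bigr)\langle\Omega,X\Omega\rangle$ plus two residual $\Psi$-contractions, but at finite $\eps$ neither piece vanishes separately, and their mutual cancellation is precisely the (regularized) kernel-function identity --- it is the content of \eqref{OmOm}, not a consequence of the $\vx\leftrightarrow\vy$ symmetry. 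The paper closes this step either by citing \cite[Eq.~(66)]{langmann2004} directly, or, in the self-contained version (Lemma~\ref{lemma6.5} proved in Appendix~\ref{C4app}), by writing $\xxa X \xxe$ as a single vertex operator $\Phi_0(\alpha)$, using $W_{3,r'}\Omega=W_{3,r'}^+\Omega$ and the c-number commutators $[a_{r',\pm n},\Phi_0(\alpha)]$, and exhibiting the cancellation through the Bogoliubov identity $c_nc_ms_{n+m}=s_ns_mc_{n+m}$ after summing over $r'=\pm$. Some explicit input of this kind is needed; the symmetry argument alone is insufficient.
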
   

\begin{proof}[Proof based on Ref.~\cite{langmann2004}] We explain how Proposition~\ref{prop:eCS} is obtained from results in \cite{langmann2004}. 
For readers interested in a self-contained proof, we mention that Proposition~\ref{prop:eCS} is a special case of Theorem~\ref{thm:eCSgen} proved in Section~\ref{sec:eCSgen}.  

Using the notation in the present paper, \cite[Eq.\ (60)] {langmann2004} can be written as
\begin{equation} 
\label{L2004}
\big[\nu W_{+,3}+(1-\nu^2)\cC, \phi^N_{+,\nu}(\vx;\eps)\big]\Omega = \bigl( 2H_{N;\nu^2}(\vx;\eps) +N\nu^4c_\epsilon \Bigr)\phi^N_{+,\nu}(\vx;\eps)\Omega + \Psi_{+,\nu}(\vx;\eps)\Omega 
\end{equation} 
(to see this, use that $\nu_0$,  $\nu W_{3,+}+(1-\nu^2)\cC$, $a_{+,0}$, and $2H_{N; \nu^2}(\vx;\eps)$ here correspond to  $\nu$, $\cH+\nu^4c_\epsilon Q$, $Q$, and $H^{2\eps}_N$ in \cite{langmann2004}, respectively --- this is explained in Remarks~\ref{rem:L2004a} and \ref{rem:L2004b};  since $[Q,\phi_\eps^N(\vx)]=N\phi_\eps^N(\vx)$,  adding $\nu^4c_\epsilon Q$ to $\cH$ on the left-hand side in \cite[Eq.\ (60)]{langmann2004} amounts to adding  the term $N\nu^4c_\epsilon\Phi_\eps(\vx)\Omega$ to the right-hand). 
It follows from our definitions that $W_{-,3}$ in \eqref{W3pm} commutes with $\phi^N_{+,\nu}(\vx;\eps)$ and, for this reason, we can replace $\nu W_{+,3}+(1-\nu^2)\cC$ on the left-hand side in \eqref{L2004} by $2\cH_{3,\nu} = \nu W_{+,3}+\nu W_{-,3}+(1-\nu^2)\cC$ without changing the right-hand side. Dividing the resulting identity by $2$ gives \eqref{cH3PhiN} for $r=+$. 

As discussed in Section~\ref{subsec:anyons2}, the anyons $\phi_{-,\nu}(-x;\eps)$ are on the same footing as the anyons $\phi_{+,\nu}(x;\eps)$. Thus, by symmetry, we can get from the case $r=+$ in \eqref{cH3PhiN}, a corresponding result for the case $r=-$ by replacing all $\phi_{+,\nu}(x_j;\eps)$ and $a_{+,n}$ by $\phi_{-,\nu}(-x_j;\eps)$ and $a_{-,n}$, respectively. By changing $\vx\to -\vx$ in the resulting identity using $H_{N,g}(-\vx;\eps)=H_{N,g}(\vx;\eps)$, we obtain \eqref{cH3PhiN} for $r=-$. 

Since $[a_{+,0},\phi_{+,\nu}^N(\vx;\eps)^\dag \phi_{+,\nu}^N(\vy;\eps)]=0$, \cite[Eq.\ (66)]{langmann2004} implies \eqref{OmOm} for $r=+$; by a symmetry argument as above, one obtains the corresponding result for $r=-$.  
\end{proof}

\section{Derivation of the ncILW equation}
\label{sec:ncILW} 
We show that the operator $\cH_{3,\nu}$ in \eqref{cH3} defines a quantum version of the ncILW equation \eqref{ncILW}. 

We start by comparing the formula \eqref{phirnu}, which is a formal expression suggested by physics considerations \cite{berntson2020}, with our mathematically precise definition of anyons \eqref{anyondefinition}.  This motivates the following definition of chiral bosons $\rho_\pm$. 
 
\begin{definition}\label{def:chiralbosons}  
The (regularized) chiral bosons are given by  
\begin{equation}\label{rhor} 
\begin{split}  
\rho_r (x;\eps) \coloneqq  r\partial_x\big(2\kappa Q_r rx +K_r(x;\eps) \bigr) 
=  2\kappa Q_r +  \sum\limits_{n\in\Z_{\neq 0}} 2\kappa \ee^{2\kappa(\ii r nx-|n|\epsilon)} b_{r,n}  \quad (r=\pm) 
\end{split} 
\end{equation} 
for $x\in[-\ell,\ell]$ and $\eps>0$. 
\end{definition} 

%To simplify notation in proofs, we sometimes use the notation
%\begin{equation}\label{tildebrn}
%\tilde{b}_{r,n} \coloneqq \begin{cases} b_{r,n} & (n\neq 0,r=\pm), \\ Q_r=\nu_0 b_{r,0} & (n=0,r=\pm), \end{cases}  
%\end{equation} 
%allowing us to write \eqref{rhor} as 
%\begin{equation}\label{rhorshort} 
%\rho_r (x;\eps)  =  \sum\limits_{n\in\Z} 2\kappa \ee^{2\kappa(\ii r nx-|n|\epsilon)} \tilde{b}_{r,n}  \quad (r=\pm) .
%\end{equation} 

The following result summarizes the properties of the chiral bosons, confirming that Definition~\ref{def:chiralbosons} is satisfactory from a physics point of view. 

\begin{lemma}\label{lem:bosons} 
The chiral bosons are Hermitian, $\rho_r(x;\eps)= \rho_r(x;\eps)^\dag$, they have $2\ell$-periodic extensions to the real line, $\rho_r(x+2\ell;\eps)=\rho_r(x;\eps)$ for $x\in\R$, and they satisfy the following canonical commutator relations, 
\begin{equation}\label{eq:CCR}
[\rho_r(x;\eps),\rho_{r'}(x';\eps')]=-2\pi\ii r\delta_{r,r'}\partial_x\delta(x-x';\eps+\eps')\quad (r,r'=\pm) 
\end{equation} 
with the regularized Dirac delta \eqref{deltaeps}. 
\end{lemma}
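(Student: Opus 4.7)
The plan is to verify the three claimed properties directly from the mode expansion \eqref{rhor} together with the Heisenberg-algebra relations \eqref{bR1R2} for the Bogoliubov-transformed generators and the series expansion \eqref{deltaeps} for the regularized delta. All three verifications are computational, so I would present them as short independent calculations rather than as one big argument.

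First, I would establish Hermiticity. Taking the formal adjoint of \eqref{rhor} using $Q_r^\dag=Q_r$ (since $Q_r=\nu_0 a_{r,0}$ and $a_{r,0}^\dag=a_{r,0}$ by \eqref{aR2}) and $b_{r,n}^\dag=b_{r,-n}$ from \eqref{bR1R2}, one obtains
\begin{equation*}
\rho_r(x;\eps)^\dag = 2\kappa Q_r + \sum_{n\in\Z_{\neq 0}} 2\kappa\ee^{2\kappa(-\ii r n x-|n|\eps)}b_{r,-n},
\end{equation*}
and the substitution $n\to -n$ in the summation index returns the original expression. Periodicity $\rho_r(x+2\ell;\eps)=\rho_r(x;\eps)$ is immediate from $\ee^{2\kappa \ii r n\cdot 2\ell}=\ee^{2\pi\ii rn}=1$, using $2\kappa\ell=\pi$.

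The substantive step is the commutator in \eqref{eq:CCR}. I would first observe that $[Q_r,b_{r',m}]=0$ for all $m\in\Z_{\neq 0}$: indeed, $b_{r',m}$ is a linear combination of $a_{r',m}$ and $a_{-r',-m}$ by \eqref{BT}, and neither has a non-zero commutator with $a_{r,0}$ by \eqref{aR1}. Hence the $Q_r$-terms in the two expansions \eqref{rhor} drop out of the commutator, and only the double sum over the oscillator modes survives. Using $[b_{r,n},b_{r',m}]=n\delta_{n,-m}\delta_{r,r'}I$ from \eqref{bR1R2}, the sum collapses to
\begin{equation*}
[\rho_r(x;\eps),\rho_{r'}(x';\eps')] = \delta_{r,r'}(2\kappa)^2\sum_{n\in\Z_{\neq 0}} n\,\ee^{2\kappa(\ii r n(x-x')-|n|(\eps+\eps'))}.
\end{equation*}
A change of summation index $n\to rn$ (permissible since $r=\pm 1$ and $|rn|=|n|$) turns the exponent into $2\kappa(\ii n(x-x')-|n|(\eps+\eps'))$ at the cost of an overall factor $r$, and differentiating \eqref{deltaeps} term-by-term gives $\partial_x\delta(x-x';\eps+\eps')=(\ii\kappa/\ell)\sum_n n\,\ee^{2\kappa(\ii n(x-x')-|n|(\eps+\eps'))}$. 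Combining these with $\pi/\ell=2\kappa$ yields the right-hand side of \eqref{eq:CCR}.

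The only real obstacle is bookkeeping: keeping the chirality signs $r$, the index shifts, and the relationship $\kappa=\pi/(2\ell)$ straight, and being sure that all double sums converge absolutely thanks to the $\ee^{-2\kappa|n|\eps}$ damping so that reordering is justified. Apart from this, the lemma is a direct consequence of the definitions and requires no further machinery.
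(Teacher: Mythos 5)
Your proof is correct and follows essentially the same route as the paper: the paper likewise reduces the commutator to the double sum over oscillator modes, collapses it with $[b_{r,n},b_{r',m}]=n\delta_{n,-m}\delta_{r,r'}\id$, relabels $n\to rn$ to extract the factor $r$, and matches the result against $\partial_x\delta(x-x';\eps+\eps')$ using $2\kappa\ell=\pi$. Your explicit remarks on $[Q_r,b_{r',m}]=0$ and on absolute convergence from the $\ee^{-2\kappa|n|\eps}$ damping are details the paper leaves implicit, but the argument is the same.
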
 

\begin{proof} Hermiticity follows from $(b_{r,n})^\dag=b_{r,-n}$; the $2\ell$-periodicity is obvious. To prove \eqref{eq:CCR}, we use \eqref{bR1R2} and \eqref{deltaeps} to compute 
\begin{multline} 
[\rho_r(x;\eps),\rho_{r'}(x';\eps')] = \sum_{n,m\in\Z_{\neq 0}}(2\kappa)^2  \ee^{2\kappa(\ii r nx+\ii r'mx'-|n|\eps-|m|\eps')} \underbrace{[b_{r,n},b_{r',m}]}_{n\delta_{n,-m}\delta_{r,r'}} \\
= (2\kappa)^2\delta_{r,r'}   \sum_{n \in\Z_{\neq 0}} n\ee^{2\kappa[\ii r n(x-x')-|n|(\eps+\eps')]} 
= (2\kappa)^2\delta_{r,r'}   \sum_{n\in\Z_{\neq 0}} r n\ee^{2\kappa[\ii n(x-x')-|n|(\eps+\eps')]} \\
= (2\kappa)^2 \delta_{r,r'} \frac{1}{2\kappa r\ii}(2\ell)\partial_x \delta(x-x';\eps+\eps'), 
\end{multline} 
which is \eqref{eq:CCR} since $(2\kappa)^2(2\ell)/2\kappa r\ii = -2\pi\ii r$. \end{proof} 

From a physics point of view, one would want to know if the operators in Definition~\ref{def:cH} are local, i.e., if they can be written as integrals of normal ordered powers of the chiral bosons. 
The following result shows that this is so in all cases except for the operator $\cC$ which, instead, can be expressed in a local way using the integral operators $T$ and $\tilde{T}$ in \eqref{TT}. 

\begin{lemma}\label{lem:WjrcC} 
The operators in $W_{2,\pm}$, $W_{3,\pm}$, and $\cC$ in Definition~\ref{def:cH} can be expressed in terms of chiral bosons as follows,  
\begin{equation} 
\label{W2pm2}
W_{2,r} = \lim_{\eps\to 0^+} \frac{1}{4\pi}\int_{-\ell}^{\ell} \xxa \rho_r(x;\eps)^2\xxe \dd{x} \quad (r=\pm),  
\end{equation}  
\begin{equation} 
\label{W3pm2} 
W_{3,r} =\lim_{\eps\to 0^+}  \frac1{6\pi} \int_{-\ell}^{\ell} \xxa \rho_r(x;\eps)^3\xxe \dd{x} \quad (r=\pm) , 
\end{equation}  
\begin{equation} 
\label{cC2} 
\cC = -\lim_{\eps\to 0^+}  \frac1{4\pi} \int_{-\ell}^{\ell} \sum_{r=\pm} \xxa \rho_r(x;\eps)(T\rho_{r,x})(x;\eps) +  \rho_{-r}(x;\eps)(\tilde{T}\rho_{r,x})(x;\eps)  \xxe \dd{x} , 
\end{equation} 
using the notation $\rho_{r,x}(x;\eps)\coloneqq \partial_x \rho_{r}(x;\eps)$ and the definition \eqref{TT} of the integral operators $T$, $\tilde{T}$. 
\end{lemma}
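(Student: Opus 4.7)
The plan is to verify each identity by Fourier-mode analysis: insert the explicit series \eqref{rhor} for $\rho_r(x;\eps)$ into the right-hand side, apply the normal-ordering operator $\xxa\cdot\xxe$ term by term, integrate over $x\in[-\ell,\ell]$ using the orthogonality relation $\int_{-\ell}^{\ell}\ee^{2\ii\kappa mx}\dd{x}=(\pi/\kappa)\delta_{m,0}$ (recall $2\ell=\pi/\kappa$), and compare with Definition~\ref{def:cH} after taking $\eps\to 0^+$. Throughout I use the split $\rho_r=A_r+B_r$ where $A_r\coloneqq 2\kappa Q_r$ is the zero-mode part and $B_r\coloneqq\sum_{n\neq 0}2\kappa\,\ee^{2\ii\kappa rnx-2\kappa|n|\eps}b_{r,n}$; because $[A_r,b_{r,n}]=0$ for $n\neq 0$, normal ordering acts nontrivially only on the $B_r$-factors.

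Identities \eqref{W2pm2} and \eqref{W3pm2} then reduce to combinatorial bookkeeping. In $\xxa\rho_r^2\xxe=A_r^2+2A_rB_r+\xxa B_r^2\xxe$, the cross term $A_rB_r$ integrates to zero (no zero mode in $B_r$), so $\frac{1}{4\pi}\int\xxa\rho_r^2\xxe\dd{x}$ collects the zero-mode piece from $A_r^2$ together with the diagonal $m=-n$ contribution from $\xxa B_r^2\xxe$, giving exactly $\kappa\bigl(Q_r^2+\sum_{n\neq 0}\xxa b_{r,-n}b_{r,n}\xxe\bigr)=W_{2,r}$ in the limit $\eps\to 0^+$. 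In $\xxa\rho_r^3\xxe=A_r^3+3A_r^2B_r+3A_r\xxa B_r^2\xxe+\xxa B_r^3\xxe$, the $A_r^2B_r$ piece integrates away, and the remaining three pieces, after the Fourier-diagonal selection of zero total frequency (with no vanishing sub-frequencies, which is automatic since $B_r$ omits its zero mode), reproduce the three contributions $\tfrac{1}{3}Q_r^3$, $Q_r\sum_{n\neq 0}\xxa b_{r,-n}b_{r,n}\xxe$, and $\tfrac{1}{3}\sum\xxa b_{r,-n}b_{r,-m}b_{r,n+m}\xxe$ of $W_{3,r}$ in \eqref{W3pm}.

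The substantive step is \eqref{cC2}. First I determine the Fourier symbols of $T$ and $\tilde T$: using the expansions $\kappa\cot(\kappa w)=\mp\ii\kappa\mp 2\ii\kappa\sum_{k\geq 1}\ee^{\pm 2\ii\kappa kw}$ valid for $\im w\gtrless 0$ inside the lattice sum \eqref{zeta1}, one obtains $T\ee^{2\ii\kappa nx}=\ii\sgn(n)(c_{|n|}^2+s_{|n|}^2)\ee^{2\ii\kappa nx}$ and $\tilde T\ee^{2\ii\kappa nx}=2\ii\sgn(n)c_{|n|}s_{|n|}\ee^{2\ii\kappa nx}$ for $n\neq 0$ (both symbols vanish at $n=0$), using the identities $c_n^2+s_n^2=(1+q^{2n})/(1-q^{2n})$ and $c_ns_n=q^n/(1-q^{2n})$. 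Substituting these symbols into $\rho_rT\rho_{r,x}+\rho_{-r}\tilde T\rho_{r,x}$, normal-ordering, and integrating selects the diagonal modes ($m=-n$ for the first product and $m=n$ for the second, due to the opposite chiral exponent in $\rho_{-r}$); summed over $r=\pm$ and using that $b_{+,n}$ commutes with $b_{-,m}$ for all $n,m$, the result is a linear combination of $\xxa b_{r,-n}b_{r,n}\xxe$ with coefficient proportional to $n(c_n^2+s_n^2)$ and of the cross-chirality pairings $\xxa b_{-,n}b_{+,n}\xxe+\xxa b_{-,-n}b_{+,-n}\xxe$ with coefficient proportional to $nc_ns_n$. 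Finally, the inverse Bogoliubov transformation $a_{r,n}=c_nb_{r,n}+s_nb_{-r,-n}$ recasts $\cC=(2\kappa)^2\sum_{n\geq 1}\sum_rna_{r,-n}a_{r,n}$ in the same $b$-basis: the quadratic parts match, and the constants generated by reordering the $b$-operators into normal form cancel exactly by virtue of $c_n^2-s_n^2=1$, consistently with $\langle\Omega,\cC\Omega\rangle=0$.

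The main obstacle is \eqref{cC2}. The Fourier analysis of $\zeta_1$ is routine (principal-value interpretation of the $\kappa\cot(\kappa y)$ pole on the real axis for $T$, convergent exponential series in the strip $0<\im z<2\delta$ for $\tilde T$). The harder step is the algebraic identity that converts the several $b$-bilinears---including the cross-chirality pairings that appear only through the anti-chiral coupling $\rho_{-r}\tilde T\rho_{r,x}$---back into the pure $a$-bilinear $\sum_rna_{r,-n}a_{r,n}$, with every normal-ordering constant cancelling precisely.
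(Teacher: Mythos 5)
Your proposal is correct and follows essentially the same route as the paper: the Fourier-mode bookkeeping for \eqref{W2pm2}--\eqref{W3pm2}, the Fourier symbols $\ii\,\sgn(n)(c_n^2+s_n^2)$ and $2\ii\,\sgn(n)c_ns_n$ of $T$ and $\tilde T$ (this is precisely the content of Lemma~\ref{lem:TT} in Appendix~\ref{app:special}, which the paper derives from the series \eqref{zeta1series}--\eqref{zeta1series2} rather than from the cotangent expansion), and the inverse Bogoliubov matching of normal-ordered quadratic forms as in \eqref{cC3}. The only cosmetic difference is that the paper rewrites $\cC$ in the $b$-basis first and then identifies the integral, whereas you go the other way; and note that since $\xxa\cdot\xxe$ is linear, no reordering constants arise at all when one stays inside the normal ordering throughout, so your final cancellation check is automatic.
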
 

\begin{remark} 
The equality \eqref{W2pm2} should be interpreted in the sense of sesquilinear forms, i.e., \eqref{W2pm2} means that
$$\langle \eta, W_{2,r} \eta'\rangle  = \lim_{\eps\to 0^+} \frac{1}{4\pi}\int_{-\ell}^{\ell} \langle \eta,\xxa \rho_r(x;\eps)^2\xxe \eta'\rangle \dd{x} \qquad \text{for $r=\pm$ and all $\eta,\eta'\in\cD$}.$$ 
A similar interpretation applies to \eqref{W3pm2}--\eqref{cC2} and similar equalities elsewhere in the paper.
\end{remark}

\begin{proof}[Proof of Lemma \ref{lem:WjrcC}] 
We start with 
\begin{equation*} 
\lim_{\eps\to 0}\int_{-\ell}^{\ell}\xxa \rho_r(x;\eps)^2\xxe \dd{x} = (2\kappa)^2(2\ell)\Bigg( \sum_{n\in\Z_{\neq 0}}\xxa b_{r,-n}b_{r,n}\xxa + Q_r^2\Bigg) = (2\kappa)^2 (2\ell) \frac2{(2\kappa)} W_{2,r}, 
\end{equation*}
which gives \eqref{W2pm2} since $(2\kappa)^2(2\ell) 2/(2\kappa)=4\pi$. 
Similarly, 
\begin{multline*} 
\lim_{\eps\to 0}\int_{-\ell}^{\ell}\xxa \rho_r(x;\eps)^3\xxe \dd{x} = (2\kappa)^3(2\ell)\Bigg( 
%\sum_{{n,m\in\Z_{\neq 0}}\atop{n+m\neq 0}} 
\sum_{\substack{n,m\in\Z_{\neq 0}\\n+m\neq 0}} 
\xxa b_{r,-n}b_{r,-m}b_{r,n+m}\xxe  \\
+ 3Q_r \sum_{n\in\Z_{\neq 0}} \xxa b_{,-n}b_{r,n}\xxe + Q_r^{3} \Bigg) = (2\kappa)^3 (2\ell) \frac3{(2\kappa)^2} W_{3,r}, 
\end{multline*}
which gives  \eqref{W3pm2} since $(2\kappa)^3(2\ell) 3/(2\kappa)^2=6\pi$. 

To derive the expression \eqref{cC2} for $\cC$, we use \eqref{cC} and the inverse of the Bogoliubov transformation \eqref{BT}, 
\begin{equation} 
\label{BTinv} 
a_{r,n} = c_n b_{r,n} + s_n b_{-r,-n} \quad (n\in\Z_{\neq 0})
\end{equation} 
to compute 
\begin{multline} 
\label{cC3}
\cC =(2\kappa)^2\frac{1}{2}\sum_{r=\pm }\sum_{n\in\Z} |n|\xxa \bigl(c_n b_{r,-n} + s_n b_{-r,n}  \bigr)\bigl( c_n b_{r,n} + s_n b_{-r,-n} \bigr)\xxe \\
= 
(2\kappa)^2\frac{1}{2}\sum_{r=\pm }\sum_{n\in\Z} |n|\xxa (c_n^2+s_n^2) b_{r,-n}b_{r,n} + 2 c_ns_n b_{-r,n}b_{r,n}\xxe .
\end{multline} 
On the other hand, Lemma~\ref{lem:TT} in Appendix~\ref{app:special} and the identity
\begin{equation} 
\frac1{2\ell}\int_{-\ell}^\ell \ee^{2\ii n\kappa (x-x')} \rho_r(x';\eps)\,\dd{x'} =   2\kappa \ee^{2\kappa(\ii nx -|n|\epsilon)} b_{r,rn} \quad (r=\pm,n\in\Z_{\neq 0}) 
\end{equation} 
give
\begin{equation}\label{TrhotTrho}
\begin{split} 
(T\rho_{r,x})(x; \epsilon) =&  -\sum_{n\in\Z_{\neq 0}} (2\kappa)^2 |n|(c_n^2+s_n^2)\ee^{2\kappa(\ii r nx-|n|\epsilon)} b_{r,n} , \\
(\tilde{T}\rho_{r,x})(x; \epsilon) =& -\sum_{n\in\Z_{\neq 0}} (2\kappa)^2 |n|2c_ns_n \ee^{2\kappa(\ii r nx-|n|\epsilon)} b_{r,n}  , 
\end{split} \quad (r=\pm).
\end{equation} 
Thus, noting that the terms with $n=0$ make no contribution after integration,
\begin{multline} 
\lim_{\eps\to 0^+}\int_{-\ell}^{\ell} \sum_{r=\pm} \xxa \rho_r(x;\eps)(T\rho_{r,x})(x;\eps) +  \rho_{-r}(x;\eps)(\tilde{T}\rho_{r,x})(x;\eps)  \xxe \dd{x} \\
= -\lim_{\eps\to 0^+}\int_{-\ell}^{\ell} \sum_{r=\pm} \sum_{n,m\in\Z_{\neq 0}} \xxa  (2\kappa)^3 \ee^{2\kappa(\ii r nx-|n|\epsilon)} b_{r,n} |m|(c_m^2+s_m^2)\ee^{2\kappa(\ii r mx-|m|\epsilon)} b_{r,m}   
\\ +  (2\kappa)^3 \ee^{2\kappa(-\ii r nx-|n|\epsilon)} b_{-r,n} |m|2c_ms_m\ee^{2\kappa(\ii r mx-|m|\epsilon)} b_{r,m}    \xxe \dd{x} \\
= -(2\kappa)^3(2\ell)  \sum_{r=\pm}\sum_{n\in\Z_{\neq 0}} |n|\xxa (c_n^2+s_n^2)b_{r,-n}  b_{r,n} +2 c_ns_n b_{-r,n}  b_{r,n} \xxe \; =-2(2\kappa)(2\ell)\cC, 
\end{multline} 
where we have used \eqref{cC3} in the last step. Since $2(2\kappa)(2\ell)=4\pi$, this proves \eqref{cC2}. 
\end{proof} 

Inserting the results in Lemma~\ref{lem:WjrcC} into the definition of $\cH_{3,\nu}$ in \eqref{cH3}, we obtain $\cH_{3,\nu}$ in \eqref{cH3nu}, up to the limiting procedure $\lim_{\eps\to 0^+}$: with the latter limit understood, \eqref{cH3nu} is mathematically precise as it stands. 

To state a precise version of the quantum ncILW equation, we find it convenient to define a regularized product $\circ$ as follows. 

\begin{definition}[Regularized product]\label{def:circ} 
For $f(x)$ and $g(x)$ two $2\ell$-periodic quantum fields of the variable $x\in\R$, let
\begin{equation} 
(f\circ g)(x;\eps) \coloneqq (\delta_\eps*(fg))(x) 
\end{equation} 
where $(fg)(x)\coloneqq f(x)g(x)$ is the point-wise product and
\begin{equation} 
\label{star}
(\delta_\eps * f)(x)\coloneqq \int_{-\ell}^\ell \delta(x-x';\eps)f(x')\,\dd{x}' \quad (\eps>0) 
\end{equation} 
with the regularized Dirac delta function $\delta(x;\eps)$ given by \eqref{deltaeps}.
(Note that the $*$ in \eqref{star} indicates convolution, not to be confused with the involution of sequences in \eqref{eq:star}.)  
\end{definition} 

It is interesting to write these operations in terms of Fourier series. Using the conventions 
\begin{equation}\label{fn}
f(x)=\sum_{n\in\Z}\hat{f}_n\ee^{2\kappa\ii nx}\quad \Leftrightarrow \quad \hat{f}_n = \frac1{2\ell}\int_{-\ell}^\ell f(x) \ee^{-2\kappa\ii nx}\dd{x} 
\end{equation} 
and similarly for $g(x)$, one can easily check that 
\begin{equation}\label{star1}
(\delta_\eps*f)(x) = \sum_{n\in\Z} \hat{f}_n\ee^{2\kappa(\ii nx-|n|\eps)}
\end{equation} 
and 
\begin{equation}\label{circ1}  
(f\circ g)(x;\eps) = \sum_{n,m\in\Z}  \hat{f}_n \hat{g}_m \ee^{2\kappa\ii (n+m)x}\ee^{-2\kappa|n+m|\eps}. 
\end{equation} 
In particular, our regularized bosons are a special case of this: $\rho_r(x;\eps)=(\delta_\eps*\rho_r)(x)$ (where $\rho_r(x)$ is obtained from $\rho_r(x;\eps)$ by setting $\eps=0$). Clearly, $(\delta_\eps*f)(x) \to f(x)$ and $(f\circ g)(x;\eps) \to f(x)g(x)$ as $\eps\to 0^+$ if these limits exist as operator-valued continuous functions, but this need not be the case:
in general, we think of $f(x;\eps)\coloneqq (\delta_\eps*f)(x)$ as a regularized version of $f(x)$. Moreover, $f(x;\eps)g(x;\eps)$  and $(f\circ g)(x;\eps)$ are two regularized versions of the point-wise product $f(x)g(x)$ which, for operator-valued distributions, is usually ill-defined; the latter two regularizations are different: 
\begin{equation} 
(f\circ g)(x;\eps)-f(x;\eps)g(x;\eps) = \sum_{n,m\in\Z}  \hat{f}_n \hat{g}_m \ee^{2\kappa\ii (n+m)x}\big( \ee^{-2\kappa|n+m|\eps}- \ee^{-2\kappa(|n|+|m|)\epsilon}\big)
\end{equation}
is non-zero in general for $\eps>0$, but they become equal in the limit $\eps\to 0^+$. 

We are now ready to make precise in what sense $\cH_{3,\nu}$ defines a quantum version of the ncILW equation; we give two complementary formulations, $(a)$ and $(b)$. 

\begin{theorem}\label{thm:qncILW} 
$(a)$ The operator $\cH_{3,\nu}$ defined in Definition~\ref{def:cH} can be written as 
\begin{equation} 
\label{cH3uv} 
\cH_{3,\nu} = \lim_{\eps\to 0} \frac{2}{\pi g}\int_{-\ell}^{\ell} \xxa \frac{1}{3}(\hat u^3+\hat v^3) + \frac{g-1}{4}(\hat uT\hat u_x +\hat vT\hat v_x + \hat u\tilde{T}\hat v_x + \hat v\tilde{T}\hat u_x)  \xxe \dd{x} 
\end{equation} 
with $g=\nu^2$,  $\hat u=\hat u(x;\eps)\coloneqq \nu \rho_{+}(x;\eps)/2$ and $\hat v=\hat v(x;\eps)\coloneqq \nu \rho_{-}(x;\eps)/2$ regularized boson fields satisfying the canonical commutator relations:  
\begin{equation} 
\label{CCRuv}
[\hat{u}(x;\eps),\hat{u}(x';\eps')] = -\frac{g\pi\ii}{2}\delta'(x-x';\eps+\eps'),\quad  [\hat{v}(x;\eps),\hat{v}(x';\eps')] = \frac{g\pi\ii}{2}\delta'(x-x';\eps+\eps'),
\end{equation}   
and $[\hat{u}(x;\eps),\hat{v}(x';\eps')]=0$, and $T$, $\tilde{T}$ the integral operators in \eqref{TT}; we use the shorthand notation $\hat u^3=\hat u(x;\eps)^3$, $\hat u T\hat u_x=\hat u(x;\eps) (T\hat u_x)(x;\eps)$, etc. 

\noindent $(b)$  The Heisenberg equations of motion following from $(a)$ are 
\begin{equation}\label{qncILW} 
\begin{split} 
\partial_t\hat{u} \coloneqq \ii[\cH_{3,\nu},\hat{u}] & = -2\xxa\hat{u}\circ\hat{u}_x\xxe -\half(g-1)(T\hat u_{xx}+\tilde{T}\hat{v}_{xx}), \\
\partial_t\hat{v} \coloneqq \ii[\cH_{3,\nu},\hat{v}] & = +2\xxa\hat{v}\circ\hat{v}_x\xxe +\half(g-1)(T\hat v_{xx}+\tilde{T}\hat{u}_{xx}),  
\end{split} 
\end{equation} 
with $\hat{u}=\hat{u}(x,t;\eps)$, $T\hat{u}_{xx}=(T\hat{u}_{xx})(x,t;\eps)$, etc., and $\circ$ the regularized version of the pointwise product in Definition~\ref{def:circ}. 
\end{theorem}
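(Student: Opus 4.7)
The plan is to treat (a) as a direct unpacking of Lemma~\ref{lem:WjrcC} and Lemma~\ref{lem:bosons}, and (b) as a termwise Heisenberg-commutator calculation based on (a).

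For (a), I would substitute \eqref{W3pm2} and \eqref{cC2} into Definition~\ref{def:cH}, yielding
\begin{equation*}
\cH_{3,\nu}=\lim_{\eps\to 0^+}\Bigg(\frac{\nu}{12\pi}\sum_{r=\pm}\int_{-\ell}^{\ell}\xxa\rho_r^3\xxe\,\dd x+\frac{\nu^2-1}{8\pi}\int_{-\ell}^{\ell}\sum_{r=\pm}\xxa\rho_r T\rho_{r,x}+\rho_{-r}\tilde T\rho_{r,x}\xxe\,\dd x\Bigg).
\end{equation*}
Rescaling $\rho_+=2\hat u/\nu$ and $\rho_-=2\hat v/\nu$ with $g=\nu^2$ contributes factors $(2/\nu)^3$ and $(2/\nu)^2$ to the cubic and quadratic pieces, respectively, which when multiplied by the above prefactors give exactly $2/(3\pi g)$ and $(g-1)/(2\pi g)$ as in \eqref{cH3uv}; the cross pairing $r\leftrightarrow -r$ in the $\tilde T$-sum produces the four summands $\hat u T\hat u_x+\hat v T\hat v_x+\hat u\tilde T\hat v_x+\hat v\tilde T\hat u_x$. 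The CCR \eqref{CCRuv} then follows from Lemma~\ref{lem:bosons} by multiplying the commutator $[\rho_r,\rho_{r'}]=-2\pi\ii r\delta_{r,r'}\partial_x\delta$ by $(\nu/2)^2$, with the two signs of $r$ accounting for the opposite signs in \eqref{CCRuv}, and $[\hat u,\hat v]=0$ following from $[\rho_+,\rho_-]=0$.

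For (b), the key tool is the elementary identity $[\xxa A(x)^n\xxe,B(y)]=n\xxa A(x)^{n-1}\xxe [A(x),B(y)]$ valid whenever $[A(x),B(y)]$ is a $\C$-number, together with its obvious bilinear analogue. Applying this to the cubic piece $\frac{2}{3\pi g}\int\xxa\hat u(x)^3\xxe\,\dd x$ and using \eqref{CCRuv} produces a regularized $\delta'(x-y;\eps+\eps')$; integrating this against $\xxa\hat u(x;\eps)^2\xxe$ and integrating by parts in $y$ yields $-\partial_y(\delta_{\eps+\eps'}\ast\xxa\hat u(\cdot;\eps)^2\xxe)(y)$, which by the Fourier representation \eqref{circ1} of $\circ$ equals $-2\xxa\hat u\circ\hat u_x\xxe(y;\cdot)$; together with the overall $\ii$ of the Heisenberg equation this gives the required $-2\xxa\hat u\circ\hat u_x\xxe$. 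For the quadratic nonlocal terms, since $T$ and $\tilde T$ are translation-invariant linear maps with odd Fourier multiplier (by \eqref{TrhotTrho}), they pull through the commutator; of the four summands only $\xxa\hat u T\hat u_x\xxe$ and the two cross pairings $\xxa\hat v\tilde T\hat u_x\xxe$, $\xxa\hat u\tilde T\hat v_x\xxe$ contribute nontrivially to $[\cdot,\hat u(y)]$ (the term $\xxa\hat v T\hat v_x\xxe$ drops out by $[\hat u,\hat v]=0$), and integration by parts converts the resulting $\delta''$ into $T\hat u_{xx}(y)$ and $\tilde T\hat v_{xx}(y)$, the two $\tilde T$-contributions combining into a single $\tilde T\hat v_{xx}$. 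Multiplying by $\ii$ reproduces the prefactor $-\tfrac{g-1}{2}$, and the equation for $\hat v$ follows by the same computation with $+\leftrightarrow -$, which flips the sign of the CCR in \eqref{CCRuv} and hence of every term on the right-hand side.

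The main obstacle is careful bookkeeping of the $\eps$-regularization: because the CCR produces a regularized $\delta'(x-y;\eps+\eps')$ rather than the distributional $\delta'(x-y)$, the commutator naturally yields $\delta_{\eps+\eps'}$-convolved operators in place of pointwise products, and matching these to $\xxa\hat u\circ\hat u_x\xxe$ as defined in Definition~\ref{def:circ} requires invoking the Fourier identity \eqref{circ1} together with the fact that $\delta_\eps\ast$ commutes with $T$ and $\tilde T$ on zero-mean functions (immediate from their common translation-invariant Fourier-multiplier structure). Once this identification is made at finite $\eps$, the limit $\eps\to 0^+$ can be taken at the level of sesquilinear forms on $\cD$ exactly as in part (a).
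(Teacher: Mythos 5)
Your proposal is correct and follows essentially the same route as the paper: part $(a)$ is obtained by substituting Lemma~\ref{lem:WjrcC} into Definition~\ref{def:cH} and rescaling via Lemma~\ref{lem:bosons}, and part $(b)$ rests on the commutators of the cubic and nonlocal quadratic pieces with the fields, which the paper isolates as Lemma~\ref{lem:W3commutators} (the normal-ordered Leibniz identity you invoke for $\xxa\rho_r^2\xxe$ and $\xxa\rho_r^3\xxe$ is exactly what is verified, with the creation/annihilation splitting, in Appendix~\ref{app:proofssec5}). The only cosmetic difference is that the paper computes $[\cC,\rho_r]$ in mode space directly from \eqref{cC3} and \eqref{TrhotTrho}, whereas you work with the position-space form \eqref{cC2} using the symmetry of $T\partial_x$ and $\tilde T\partial_x$; these are Fourier-dual versions of the same calculation.
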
 

We regard Theorem~\ref{thm:qncILW}$(a)$ as a possible definition of the quantum ncILW model: it is given by the  Hamiltonian $\cH_{3,\nu}$ written in terms of quantum fields characterized by commutator relations. In Theorem~\ref{thm:qncILW}$(b)$, we state the quantum analog of \eqref{ncILW}. 

\begin{proof}[Proof of Theorem~\ref{thm:qncILW}] Part $(a)$ is a simple consequence of Lemma \ref{lem:bosons}, Lemma~\ref{lem:WjrcC} and \eqref{cH3}. 

Part $(b)$ is implied by commutator relations which, for later reference, we collect in a lemma. 

\begin{lemma}\label{lem:W3commutators}
The following commutation relations hold, 
\begin{equation}\label{W2rrhor} 
\ii[W_{2,r},\rho_{r'}(x;\eps)] = -r\delta_{r,r'}\rho_{r,x}(x;\eps)\quad (r=\pm), 
\end{equation} 
\begin{equation}\label{W3rrhor} 
\ii[W_{3,r},\rho_{r'}(x;\eps)] = -2r\delta_{r,r'}(\rho_r\circ\rho_{r,x})(x;\eps)\quad (r=\pm), 
\end{equation} 
\begin{equation}\label{cCrhor} 
\ii[\cC,\rho_{r}(x;\eps)] = r\Bigl( (T\rho_{r,xx})(x;\eps) + (\tilde{T}\rho_{-r,xx})(x;\eps) \Bigr)  \quad (r=\pm), 
\end{equation} 
with $\circ$ in Definition~\ref{def:circ}, $\rho_{r,x}(x;\eps)=\partial_x\rho_{r}(x;\eps)$, $\rho_{r,xx}(x;\eps)=\partial^2_x\rho_{r}(x;\eps)$, and $T$, $\tilde{T}$ in \eqref{TT}. 
\end{lemma}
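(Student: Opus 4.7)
The plan is to work from the local representations of $W_{2,r}$, $W_{3,r}$, and $\cC$ provided by Lemma~\ref{lem:WjrcC}, combined with the canonical commutation relations \eqref{eq:CCR}. The underlying algebraic observation is that, since $[\rho_r(x';\eps'),\rho_{r'}(x;\eps)]$ is a $\C$-number, the Leibniz rule for commutators, together with the invariance of normal ordering under additive $\C$-number shifts, yields
\[
\bigl[\xxa \rho_r(x';\eps')^{n}\xxe,\, \rho_{r'}(x;\eps)\bigr] = n\,\xxa \rho_r(x';\eps')^{n-1}\xxe\,\bigl[\rho_r(x';\eps'),\rho_{r'}(x;\eps)\bigr]
\]
for every $n\geq 1$, and the analogous identity $[\xxa XY\xxe, Z] = X[Y,Z]+[X,Z]Y$ holds for any bilinear normal-ordered product of linear combinations of bosons. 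Since both $T\rho_{s,x}$ and $\tilde{T}\rho_{s,x}$ are linear in the $b_{s,n}$ by \eqref{TrhotTrho}, all commutators that arise in computing $[\cC,\rho_{r}]$ via \eqref{cC2} are $\C$-numbers, and the same strategy applies there.

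For \eqref{W2rrhor} and \eqref{W3rrhor}, I would substitute \eqref{W2pm2} and \eqref{W3pm2} into the commutator, apply the identity above, and use \eqref{eq:CCR} to replace the inner commutator by $-2\pi\ii r\delta_{r,r'}\partial_{x'}\delta(x'-x;\eps+\eps')$. A single integration by parts in $x'$ transfers $\partial_{x'}$ onto the remaining $\xxa\rho_r^{n-1}\xxe$ factor, and the integral of the resulting expression against the regularized delta realizes, in the limit $\eps'\to 0$, the $\delta_\eps$-regularized evaluation at $x$. For $n=2$ this produces $\ii r\delta_{r,r'}\rho_{r,x}(x;\eps)$, and multiplication by $\ii$ gives \eqref{W2rrhor}. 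For $n=3$, differentiation yields $2\xxa\rho_r\rho_{r,x}\xxe(x';\eps')$, and convolution with $\delta(x'-x;\eps+\eps')$ converges to $(\rho_r\circ\rho_{r,x})(x;\eps)$ by the Fourier characterization \eqref{circ1} of $\circ$ combined with the mode expansion \eqref{deltaeps} of the regularized delta; collecting prefactors then produces \eqref{W3rrhor}.

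For \eqref{cCrhor}, I would start from \eqref{cC2} and apply the bilinear analog of the identity above to each of the two families of terms $\xxa\rho_s(T\rho_{s,x})\xxe$ and $\xxa\rho_{-s}(\tilde{T}\rho_{s,x})\xxe$. Because $T$ and $\tilde T$ act linearly in the $x'$-variable and commute with $\partial_{x'}$, they pull outside the $\C$-number commutator, giving $[T\rho_{s,x}(x';\eps'),\rho_r(x;\eps)] = -2\pi\ii s\delta_{s,r}(T_{x'}\partial_{x'}^{2}\delta)(x'-x;\eps+\eps')$ and the obvious $\tilde{T}$-analog. The Kronecker deltas reduce the sum over $s=\pm$: from $\xxa\rho_s(T\rho_{s,x})\xxe$ only $s=r$ contributes and, after two integrations by parts, yields a $T\rho_{r,xx}$ term; from $\xxa\rho_{-s}(\tilde{T}\rho_{s,x})\xxe$ both $s=-r$ (via the $[\rho_{-s},\rho_r]$ factor) and $s=r$ (via the $[\tilde{T}\rho_{s,x},\rho_r]$ factor) contribute, each producing a $\tilde{T}\rho_{-r,xx}$ term. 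Using the oddness of $\zeta_1$ to pin down the (skew-)adjointness of $T$ and the corresponding symmetry of $\zeta_1(\cdot+\ii\delta)$ for $\tilde{T}$, and combining with the overall prefactor $-1/(4\pi)$ and the limit $\eps'\to 0^{+}$, the surviving pieces add to $r\bigl(T\rho_{r,xx}+\tilde{T}\rho_{-r,xx}\bigr)(x;\eps)$, which upon multiplication by $\ii$ is \eqref{cCrhor}.

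The main obstacle I anticipate is the sign- and factor-bookkeeping in part (iii): one must correctly pair the two sources of Kronecker deltas (from the $\rho$-factor and from the $T$- or $\tilde{T}$-differentiated factor), verify via the $2\ell$-periodicity of $\zeta_1$ that the boundary terms in the integrations by parts vanish, and check that the (skew-)adjointness properties of $T$ and $\tilde{T}$ make the four surviving contributions add rather than cancel. A minor subtlety in part (ii) is recognizing the $\circ$-regularized product on the right-hand side as the natural $\eps'\to 0^{+}$ limit of the convolution against $\delta(\cdot-x;\eps+\eps')$; this is read off from \eqref{circ1} applied mode-by-mode to $\xxa\rho_r\rho_{r,x}\xxe(x';\eps')$.
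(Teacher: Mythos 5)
Your proposal is correct. For \eqref{W2rrhor} and \eqref{W3rrhor} it is essentially the paper's own argument (given in Appendix~\ref{app:proofssec5}): local representation of $W_{k,r}$ from Lemma~\ref{lem:WjrcC}, the $\C$-number commutator \eqref{eq:CCR}, the Leibniz rule on normal-ordered powers, and the identification of the $\eps'\to 0^+$ convolution limit with the $\circ$-product via \eqref{epsepsp}--\eqref{epsepsp2}; the paper merely justifies the identity $[\xxa\rho^3\xxe,\rho]=3\xxa\rho^2\xxe[\rho,\rho]$ more explicitly by splitting $\rho$ into creation and annihilation parts, but the content is the same. For \eqref{cCrhor} you take a genuinely different route: the paper does \emph{not} use the local representation \eqref{cC2}; it works directly in Fourier modes, commuting the bilinear $b$-expression \eqref{cC3} with $\rho_r(x;\eps)$ and then recognizing the result as $-r\partial_x$ applied to the mode expansions \eqref{TrhotTrho} of $T\rho_{r,x}$ and $\tilde T\rho_{-r,x}$. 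Your position-space version is valid and more uniform with parts (i)--(ii), and your accounting of which Kronecker deltas survive (only $s=r$ in the $T$-terms; $s=r$ and $s=-r$ in the two slots of the $\tilde T$-terms, each yielding $\tilde T\rho_{-r,xx}$) is exactly right; the price is that you must justify the (skew-)adjointness of $T$ and $\tilde T$. For $T$ this is immediate from the oddness of $\zeta_1$, but for $\tilde T$ note from \eqref{zeta1series2} that $\zeta_1(z+\ii\delta)=-\ii\kappa+(\text{odd in }z)$, so the kernel is skew-symmetric only up to an additive constant; this constant is harmless because it pairs against the zero-mean distributions $\partial_{x'}^2\delta$ and $\rho_{r,xx}$, but it should be said. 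The paper's mode computation buys freedom from this symmetry bookkeeping (the Fourier symbols $-2\kappa|n|(c_n^2+s_n^2)$ and $-4\kappa|n|c_ns_n$ in Lemma~\ref{lem:TT} are manifestly real and even in $n$), at the cost of a less transparent link to the local form \eqref{cC2}.
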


\begin{proof} 
We start by proving \eqref{cCrhor}.  We use \eqref{rhor}, \eqref{cC3}, and the identity (note that $\xxa b_{r',\mp n}b_{\pm r',n}\xxe$ and $b_{r',\mp n}b_{\pm r',n}$ differ only by a constant)
\begin{multline}
\sum_{r'=\pm}[\xxa b_{r',\mp n}b_{\pm r',n}\xxe, \rho_r(x;\eps)] = \sum_{m\in\Z_{\neq 0}} 2\kappa \ee^{2\kappa(\ii r mx-|m|\epsilon)} [b_{r',\mp n}b_{\pm r',n},b_{r,m}]  \\
= \sum_{m\in\Z_{\neq 0}} 2\kappa \ee^{2\kappa(\ii r mx-|m|\epsilon)}\big( b_{r',\mp n}\delta_{\pm r',r}n\delta_{n,-m} \mp \delta_{r',r}n\delta_{\mp n,-m}b_{\pm r',n} \big) \\
= 2\kappa n \big(  \ee^{2\kappa(-\ii nr x-|n|\epsilon)} b_{\pm r,\mp n} \mp   \ee^{2\kappa(\pm \ii r nx-|n|\epsilon)} b_{\pm r,n}\big)
\end{multline} 
to compute 
\begin{multline} \label{iCrhorxeps}
\ii[\cC,\rho_{r}(x;\eps)] = \ii (2\kappa)^2\frac{1}{2}\sum_{n\in\Z}|n|\sum_{r'=\pm }\big[ \xxa (c_n^2+s_n^2) b_{r',-n}b_{r',n} + 2 c_ns_n b_{-r',n}b_{r',n}\xxe,\rho_r(x;\eps)\big] \\
=  \ii (2\kappa)^3\frac{1}{2}\sum_{n\in\Z}|n|n\big[ (c_n^2+s_n^2)( \ee^{2\kappa(-\ii nr x-|n|\epsilon)} b_{r,-n} - \ee^{2\kappa(\ii r nx-|n|\epsilon)} b_{r,n}) 
\\ + 2 c_ns_n ( \ee^{2\kappa(-\ii nr x-|n|\epsilon)} b_{-r,n} +   \ee^{2\kappa(-\ii r nx-|n|\epsilon)} b_{-r,n})\big] \\
= - \ii(2\kappa)^3\sum_{n\in\Z}|n|n\big[ (c_n^2+s_n^2) \ee^{2\kappa(\ii r nx-|n|\epsilon)} b_{r,n}- 2 c_ns_n  \ee^{2\kappa(-\ii nr x-|n|\epsilon)} b_{-r,n} \big] \\
= -r\partial_x \sum_{n\in\Z} (2\kappa)^2|n|\big[ (c_n^2+s_n^2) \ee^{2\kappa(\ii r nx-|n|\epsilon)} b_{r,n}+ 2 c_ns_n  \ee^{2\kappa(-\ii nr x-|n|\epsilon)} b_{-r,n} \big], 
\end{multline}
where we renamed the summation index $n\to -n$ in one of the terms in the third step. Comparing (\ref{iCrhorxeps}) with \eqref{TrhotTrho} and using that $\partial_x$ commutes with the integral operators $T$ and $\tilde{T}$ when acting on zero-mean functions, we obtain \eqref{cCrhor}. 

The commutator relations \eqref{W2rrhor}--\eqref{W3rrhor} are proved in Appendix~\ref{app:proofssec5} by a different method. 
\end{proof} 

Recalling \eqref{cH3},  we get from \eqref{W3rrhor}--\eqref{cCrhor}
\begin{equation} 
\partial_t\rho_r \coloneqq \ii[\cH_{3,\nu},\rho_r] = -\nu r\rho_r\circ\rho_{r,x} -\half r(\nu^2-1)(T\rho_{r,xx} + \tilde{T}\rho_{-r,xx})\quad (r=\pm), 
\end{equation} 
suppressing the common argument $(x,t;\eps)$. This is equivalent to \eqref{qncILW}. 
\end{proof} 

\section{Second quantization of a generalized eCS model}\label{sec:eCSgen}
We generalize the results in Section~\ref{subsec:eCS} and obtain a second quantization of a generalized eCS model that can describe arbitrary numbers of four types of particles. 

To motivate this generalization, we note that the operator $\cH_{3,\nu}$ defined in \eqref{cH3} obeys $\cH_{3,\nu}=-\nu^2\cH_{3,-1/\nu}$. 
This suggests that by changing $\nu\to -1/\nu$ in \eqref{cH3PhiN} and using the latter symmetry relation, one should obtain a second quantization of the eCS model with coupling $(-1/\nu)^2=1/g$. However, for this to be possible, the anyons $\phi_{-1/\nu,r}(x;\eps)$ must also be well-defined, and, for this to be the case, we need to find a parameter $\nu_0>0$ such that not only $\nu/\nu_0$ but also $-1/\nu\nu_0$ is an integer; fortunately such a $\nu_0$ often exists. 

\begin{lemma}\label{lem:r0s0} 
There exists a parameter $\nu_0>0$ such that both $\nu/\nu_0$ and $-1/\nu\nu_0$ are integers if and only if $g\coloneqq \nu^2$ is rational and, in this case, 
\begin{equation}\label{eq:r0s0} 
g =\frac{r_0}{s_0}>0,\quad \nu_0=\frac1{\sqrt{r_0s_0}} 
\quad  (r_0,s_0\in\Z_{\neq 0}). 
\end{equation} 
\end{lemma}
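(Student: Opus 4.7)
The plan is to reduce the assertion to elementary arithmetic on integers by turning the two integrality conditions into an equation for $\nu^2$.

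For the \emph{only if} direction, suppose such a $\nu_0>0$ exists and set $a\coloneqq \nu/\nu_0\in\Z$ and $b\coloneqq -1/(\nu\nu_0)\in\Z$; both are non-zero since $\nu,\nu_0\neq 0$. Taking the ratio of the two relations $\nu=a\nu_0$ and $\nu\nu_0=-1/b$ eliminates $\nu_0$ and gives
\begin{equation*}
\nu^2 = a\nu_0\cdot\nu = a\cdot(\nu\nu_0) = -\frac{a}{b},
\end{equation*}
so $g=\nu^2$ is rational. Since $g>0$, the integers $a$ and $b$ have opposite signs, hence $r_0\coloneqq a$ and $s_0\coloneqq -b$ satisfy $g=r_0/s_0$ and $r_0s_0=-ab>0$. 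Taking the product of the two relations instead yields $\nu_0^2 = -1/(ab)=1/(r_0s_0)$, and since $\nu_0>0$ we conclude $\nu_0=1/\sqrt{r_0s_0}$.

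For the \emph{if} direction, suppose $g=r_0/s_0>0$ with $r_0,s_0\in\Z_{\neq 0}$; after replacing $(r_0,s_0)$ by $(-r_0,-s_0)$ if necessary we may assume $r_0,s_0>0$, so that $\nu_0\coloneqq 1/\sqrt{r_0s_0}>0$ is well defined. Writing $\nu=\sigma\sqrt{g}=\sigma\sqrt{r_0/s_0}$ with $\sigma\in\{\pm 1\}$, we compute
\begin{equation*}
\frac{\nu}{\nu_0}=\sigma\sqrt{\frac{r_0}{s_0}}\sqrt{r_0s_0}=\sigma\, r_0\in\Z,\qquad
-\frac{1}{\nu\nu_0}=-\frac{\sqrt{r_0s_0}}{\sigma\sqrt{r_0/s_0}}\cdot\frac{1}{r_0s_0}\cdot r_0s_0 = -\sigma\, s_0\in\Z,
\end{equation*}
which verifies both integrality conditions.

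There is no real obstacle here; the only point requiring care is the bookkeeping of signs, which is handled by observing $ab<0$ in the forward direction and by normalizing $r_0,s_0>0$ in the backward direction. Note that the pair $(r_0,s_0)$ in \eqref{eq:r0s0} is not unique: any common integer rescaling $(r_0,s_0)\to(kr_0,ks_0)$ with $k\in\Z_{>0}$ such that $k\mid\sigma r_0$ and $k\mid\sigma s_0$ produces a valid alternative $\nu_0$, as alluded to in the remark immediately following the lemma.
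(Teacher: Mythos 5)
Your proof is correct and follows essentially the same route as the paper's: both directions reduce to setting $r_0=\nu/\nu_0$ and $s_0=1/(\nu\nu_0)$ (your $a$ and $-b$) and eliminating $\nu_0$ via the ratio and product of the two relations. The only cosmetic difference is that you normalize $r_0,s_0>0$ in the converse, whereas the paper keeps the signs and writes $\pm|r_0|$, $\mp|s_0|$.
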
 
\begin{proof} 
We set $r_0\coloneqq \nu/\nu_0$ and $s_0\coloneqq 1/\nu\nu_0$. This implies $\nu=r_0\nu_0=1/s_0\nu_0$ and $\nu_0=\nu/r_0=1/\nu s_0$, giving the relation in \eqref{eq:r0s0}. 
Conversely, \eqref{eq:r0s0} and $\nu=\pm\sqrt{g}$ imply $\nu/\nu_0=\pm \sqrt{r_0/s_0}\sqrt{r_0s_0}=\pm |r_0|$ and $-1/\nu\nu_0=\mp \sqrt{s_0/r_0}\sqrt{r_0s_0}=\mp |s_0|$. 
\end{proof} 

Thus $\cH_{3,\nu}$ provides a four-fold second quantization of the eCS model if and only if $g>0$ is rational. For given rational $g$, one can write $g=r_0/s_0$ with positive integers $r_0$ and $s_0$ which do not have a common divisor and, by that, the integers $r_0$ and $s_0$ are uniquely determined by $g$. However, this can be generalized as follows: if $(r_0,s_0)$ is a pair of integers satisfying \eqref{eq:r0s0}, then $(r_0',s_0')=(kr_0,ks_0)$ is another such pair for any $k\in\Z_{\neq 0}$. Our construction is different for different such pairs (this is shown in a special case in \cite{atai2017}); for this reason, we allow for this generalization. Thus, our construction is determined by a pair of non-zero integers $(r_0,s_0)$ such that $r_0/s_0=g>0$, rather than by $g$ itself. For simplicity, we do not indicate this generalization in our notation. 

These four different second quantizations of the eCS model correspond to the four different anyons $\phi_{\nu,r}(x;\eps)$ and  $\phi_{-1/\nu,r}(x;\eps)$, $r=\pm$. 
The following result shows that Proposition~\ref{prop:eCS} has an interesting  generalization where $\phi^N_{\nu,r}(\vx;\eps)=\phi_{\nu,r}(x_1;\eps)\cdots \phi_{\nu,r}(x_N;\eps)$ can be replaced by any mixed product $ \phi_{\nu_1,r_1}(x_1;\eps)\cdots  \phi_{\nu_N,r_N}(x_N;\eps)$, with $\nu_j\in\{\nu,-1/\nu\}$ and $r_j=\pm$ for $j=1,\ldots,N$ arbitrary.  We recall that $\Omega$ is the vacuum in the Fock space and $\wp_{r,r'}(x;\eps)$ for $r,r'=\pm$ are regularized eCS potentials defined in Definition~\ref{def:wprreps}. We find it convenient to parametrize the choices $\nu_j=\nu$ or $\nu_j=-1/\nu$  as $\nu_j=m_j\nu$ with $m_j=1$ or $m_j = -1/\nu^2=-1/g$. 

\begin{theorem}\label{thm:eCSgen} 
$(a)$ Let $r_0,s_0\in\Z$ such that $g=r_0/s_0>0$ and $\nu_0=1/\sqrt{r_0s_0}$, $N$ an arbitrary positive integer, $\nu=\sqrt{g}$ and 
\begin{equation} 
\phi^N_{\vr, \vm}(\vx;\eps)\coloneqq \phi_{r_1,m_1\nu}(x_1;\eps)\cdots  \phi_{r_N,m_N\nu}(x_N;\eps)
\end{equation}
with arbitrary parameters $\vm\coloneqq (m_1,\ldots,m_N)\in\{1,-1/g\}^N$ and $\vr=(r_1,\ldots,r_N)\in\{\pm\}^N$, for $\vx=(x_1,\ldots,x_N)\in[-\ell,\ell]^N$ and $\eps>0$. Then the operator $\cH_{3,\nu}$ in Definition~\ref{def:cH} satisfies 
\begin{equation} 
[\cH_{3,\nu},\phi^N_{\vr,\vm}(\vx;\eps)]\Omega = \left( H^{(\vr,\vm)}_{N;\nu^2}(\vx;\eps) + c^{(\vm)}_{N;\nu}(\eps) \right)\phi^N_{\vr,\vm}(\vx;\eps)\Omega +  \Psi^N_{\vr,\vm}(\vx;\eps)\Omega
\end{equation} 
with the following regularized Hamiltonian, 
\begin{equation} 
H^{(\vr,\vm)}_{N;g}(\vx;\eps)\coloneqq -\sum_{j=1}^N \frac1{2m_j} \frac{\partial^2}{\partial x_j^2} +  \sum_{1\leq j<k\leq N} m_jm_kg(g-1) \wp_{r_j,r_k}(x_j-x_k; 2\eps),  
\end{equation} 
the following constant, 
\begin{equation} 
c^{(\vm)}_{N;\nu}(\eps) = \half \nu \sum_{j=1}^N (\nu m_j)^3  c_\epsilon 
\end{equation} 
with $c_\epsilon$ given by \eqref{ceps}, and the following correction terms vanishing in the limit $\eps\to 0^+$, 
\begin{multline} 
 \Psi^N_{\vr,\vm}(\vx;\eps) = \nu^2 \sum_{j=1}^N m_j \phi_{r_1, m_1 \nu}(x_1;\eps)\cdots  \phi_{r_{j-1},m_{j-1}\nu}(x_{j-1};\eps)\xxa \cR_{r_j,m_j\nu}(x_j;\eps) \phi_{r_j,m_j\nu}(x_{j};\eps)\xxe
 \\ \times    \phi_{r_{j+1},m_{j+1}\nu}(x_{j+1};\eps) \cdots \phi_{r_N,m_N\nu}(x_N;\eps), 
\end{multline} 
where
\begin{align}\nonumber
\mathcal{R}_{r,\nu}(x;\epsilon)\coloneqq &\; (2\kappa)^2\sum_{n,m=1}^{\infty} \bigg( \nu  s_n^2\left(\ee^{2\kappa \ii mr x} b_{r,m}-\ee^{-2\kappa \ii m r x} b_{r,-m}\right) \ee^{-2\kappa n\epsilon}
-\ee^{-2\kappa \ii(n-m) r x}\xxa b_{r,-n} b_{r,m}\xxe \bigg)
	\\ \label{Rrnuxepsilondef}
& \times \left(\ee^{-2\kappa (n+m)\epsilon} - \ee^{-2\kappa |n-m|\epsilon} \right).
\end{align}

$(b)$ The identity
\begin{align} \label{OmegaH3phiOmega}
& \langle\Omega,  [\cH_{3,\nu},\phi_{\vr,\vm}^N(\vx;\eps)^\dag R_+^{\mu_+} R_-^{\mu_-} \phi_{\vr',\vm'}^M(\vy;\eps)]\Omega\rangle=0
\end{align} 
holds for any integers $\mu_\pm \in \Z$, for any non-negative integers $N, M$, and for any choice of the parameters 
\begin{align*}
& \vm=(m_1\ldots,m_N)\in\{1,-1/\nu^2\}^N, && \vm'=(m_1'\ldots,m_M')\in\{1,-1/\nu^2\}^M,
	\\
& \vr = (r_1, \ldots,r_N)\in\{\pm\}^N, && \vr'=(r_1',\ldots,r_M')\in\{\pm\}^M, 
	\\
& \vx=(x_1,\ldots,x_N)\in[-\ell,\ell]^N, && \vy =(y_1,\ldots,y_N)\in[-\ell,\ell]^M.
\end{align*}
\end{theorem}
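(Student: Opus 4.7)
The plan is to extend the proof strategy of Proposition~\ref{prop:eCS} (which treats the uniform case $\vm = (1,\ldots,1)$, $\vr$ all equal) to mixed anyon products. Each anyon $\phi_{r_j, m_j\nu}(x_j;\eps)$ is a vertex operator in the sense of Definition~\ref{def:vertexoperators}, and by Proposition~\ref{prop:anyons}$(d)$ the product $\phi^N_{\vr,\vm}(\vx;\eps)$ factorizes as a scalar prefactor (a product of $\tet_{r_j,r_k}(x_j-x_k;2\eps)^{m_j m_k \nu^2}$ with appropriate signs) times a single normal-ordered vertex operator. I will compute the basic commutators $[W_{3,r}, \phi_{r',m'\nu}(x';\eps)]$ and $[\cC, \phi_{r',m'\nu}(x';\eps)]$ for each $(r,r') \in \{\pm\}^2$ and $m' \in \{1, -1/g\}$, and then extend to the full product by the Leibniz rule.

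For part $(a)$, the same-chirality commutator $[W_{3,r}, \phi_{r,m'\nu}(x';\eps)]$ generalizes the Proposition~\ref{prop:eCS} computation with $m'\nu$ playing the role of $\nu$, producing a kinetic contribution $-\tfrac{1}{2m'}\partial_{x'}^2\phi_{r,m'\nu}$, the constant $\tfrac{1}{2}\nu(m'\nu)^3 c_\eps$, and an $\cR$-type remainder that vanishes as $\eps \to 0^+$; the factor $1/m'$ in the kinetic term reflects that the effective mass scales inversely with the square of the statistics parameter. The opposite-chirality commutator $[W_{3,r}, \phi_{-r,m'\nu}(x';\eps)]$ contributes only through the Bogoliubov-mixed modes embedded in $W_{3,r}$ and yields no kinetic contribution. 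The commutator $[\cC, \phi_{r',m'\nu}(x';\eps)]$ is obtained from Lemma~\ref{lem:W3commutators} applied to the exponential form of the anyon and produces non-local contributions involving $T$ (same chirality) and $\tilde T$ (opposite chirality). The cross-anyon contributions from the Leibniz rule, combined with the prefactor derivatives via Proposition~\ref{prop:anyons}$(d)$, yield the two-body potentials: the second logarithmic derivative of $\tet_{r_j,r_k}$ gives $\wp_{r_j,r_k}$ (Definition~\ref{def:wprreps}), and the combination $(\nu W_{3,r}+(1-\nu^2)\cC)/2$ in $\cH_{3,\nu}$ produces the coupling $m_jm_kg(g-1)$ that extends the $(g-1)$ factor of the uniform case.

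For part $(b)$, I first note that $\cH_{3,\nu}$ commutes with both $Q_+$ and $Q_-$, since each term in Definition~\ref{def:cH} is a polynomial in $Q_r$ and in operators $a_{r,n}$ or $b_{r,n}$ with $n \neq 0$, all commuting with $Q_\pm$. Hence $\langle\Omega, [\cH_{3,\nu}, X]\Omega\rangle$ vanishes trivially by charge conservation unless both net charges of $X$ are zero. In the remaining case I use $\cH_{3,\nu}^\dag = \cH_{3,\nu}$ to rewrite the bracket as $\langle\cH_{3,\nu}\Omega, X\Omega\rangle - \langle X^\dag\Omega, \cH_{3,\nu}\Omega\rangle$, and apply part $(a)$ both to $\cH_{3,\nu}\phi^M_{\vr',\vm'}(\vy;\eps)\Omega$ on the right and (via the adjoint identity $\phi_{r,m\nu}^\dag = \phi_{r,-m\nu}$) to $\cH_{3,\nu}\phi^{N\dag}_{\vr,\vm}(\vx;\eps)\Omega$ on the left. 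The Hamiltonian and constant contributions produced on the two sides coincide by the formal self-adjointness of the multi-particle differential operator $H^{(\vr,\vm)}_{N;\nu^2}$ in $\vx$ (and similarly in $\vy$), while the $\Psi$-corrections involving $\cR$ from \eqref{Rrnuxepsilondef} give equal vacuum matrix-element contributions from the two sides and thus cancel in the commutator.

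The main obstacle will be the mixed-chirality bookkeeping in part $(a)$: obtaining $\wp_1(x+\ii\delta)$ rather than $\wp_1(x)$ for opposite-chirality pairs, with the correct coefficient $m_j m_k g(g-1)$ across all four particle types (including the ``hole'' type $m_j = -1/g$), requires careful tracking of the $\cC$-contribution (which carries the $\tilde T$ operator) together with the Bogoliubov-mixed structure $(c_n, s_n)$ of the anyons. Once part $(a)$ is in hand, the reduction of part $(b)$ to charge conservation plus self-adjointness is comparatively routine.
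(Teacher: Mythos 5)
Your plan for part $(a)$ is essentially the paper's: compute the commutators of $W_{3,r'}$ and of $\cC$ with a single anyon $\phi_{r,m\nu}(x;\eps)$ and extend to mixed products by the Leibniz rule (these are Lemma~\ref{lemma2} and Lemma~\ref{Clemma}). Two inaccuracies are worth flagging, though they are repairable. First, $[W_{k,r'},\phi_{r,\nu}(x;\eps)]$ vanishes \emph{identically} for $r\neq r'$ (note the overall $\delta_{r,r'}$ in \eqref{W3commutators}), so there is no opposite-chirality contribution from $W_{3,r}$ at all, not merely "no kinetic contribution". Second, the two-body potentials $\wp_{r_j,r_k}$ cannot come from differentiating the normal-ordering prefactor of Proposition~\ref{prop:anyons}$(d)$: since $\partial_{x_j}^2$ acts only on the $j$-th factor, the Leibniz rule produces no such cross terms. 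In the paper they arise from the $\cC$-commutator, which inserts $(\rho_{r_j}^{\pm})'(x_j;\eps)$ into the product; normal-ordering these insertions past the other anyons produces $\wp_{r_j,r_k}(x_j-x_k;2\eps)$ via \eqref{rhoprimeKcommutator2}, and the leftover term $\mathcal{X}$ cancels exactly against an equal and opposite term coming from $\sum_r W_{3,r}$.

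Part $(b)$ is where your proposal has a genuine gap. Splitting $X=\phi_{\vr,\vm}^N(\vx;\eps)^\dag R_+^{\mu_+}R_-^{\mu_-}\phi_{\vr',\vm'}^M(\vy;\eps)$ and applying part $(a)$ to the two halves leaves you needing $\big(H^{(\vr,\vm)}_{N;\nu^2}(\vx;\eps)+c_N\big)F=\big(H^{(\vr',\vm')}_{M;\nu^2}(\vy;\eps)+c_M\big)F$ for the correlation function $F=\langle\Omega,X\Omega\rangle$, plus control of the nonvanishing $\Psi$-corrections at finite $\eps$ and of $[\cH_{3,\nu},R_\pm^{\mu_\pm}]\neq 0$ (the $Q_r$- and $Q_r^3$-terms in $W_{3,r}$ do not commute with the Klein factors). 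That functional identity is not "formal self-adjointness of $H$" — it is precisely the kernel-function identity which the paper states is a \emph{consequence} of parts $(a)$ and $(b)$ together, so your argument is circular. The paper instead proves $(b)$ directly via Lemma~\ref{lemma6.5}, for an arbitrary vertex operator $\Phi_{\mu'}(\alpha)$: charge conservation (from \eqref{Phiexpectation}) reduces to $\mu'=0$; then $\cC\Omega=0$ and $W_{3,r}\Omega=W_{3,r}^{+}\Omega$ reduce the claim to $\sum_{r=\pm}\langle\Omega,(W_{3,r}^{-}\Phi-\Phi W_{3,r}^{+})\Omega\rangle=0$, which is verified by an explicit computation using $[a_{r,\pm n},\Phi]\propto\Phi$, the identity $c_nc_ms_{n+m}=s_ns_mc_{n+m}$ for the Bogoliubov coefficients \eqref{cnsn}, and, crucially, the sum over both chiralities (the single-chirality contributions do not vanish separately). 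None of these ingredients appears in your proposal, and without them the cancellation you assert for the $\Psi$-corrections and the Hamiltonian terms is unsubstantiated.
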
   

\begin{remark}
The identity \eqref{OmegaH3phiOmega} is primarily of interest in the case when $\mu_+$ and $\mu_-$ are given by
\begin{align}\label{mursumsum}
& \mu_r = \sum_{j=1}^N \delta_{r_j,r} m_j\nu/\nu_0 -  \sum_{j=1}^M  \delta_{r'_j,r} m'_j\nu/\nu_0 \quad (r=\pm).
\end{align}
Indeed, if \eqref{mursumsum} is not satisfied then the left-hand side of \eqref{OmegaH3phiOmega} vanishes as a consequence of \eqref{Phiexpectation}.
\end{remark}

It is interesting to note that parts (a) and (b) of Theorem~\ref{thm:eCSgen} are  needed to derive kernel function identities generalizing \cite[Proposition~3]{langmann2004}. 
We do not discuss this further in the present paper since these kernel function identities are already known: they can be obtained from \cite[Corollary 2.2]{langmann2010} by shifting parts of the variables by the imaginary half-period $\ii\delta$, as explained in Appendix~\ref{app:eCSgen}. We expect that these kernel function identities can be used to construct eigenfunctions of the generalized eCS model, in generalization of results in \cite{langmann2014}.

\subsection{Proof of Theorem~\ref{thm:eCSgen}}

In addition to the operators $W_{2,r}$ and $W_{3,r}$ defined in \eqref{W2pm2}--\eqref{W3pm2}, we also consider the operator
\begin{equation} 
\label{W1pm2}
W_{1,r} = \lim_{\eps\to 0^+} \frac{1}{2\pi}\int_{-\ell}^{\ell} \rho_r(x;\eps)\, \dd{x} \quad (r=\pm).
\end{equation}  
Note that
\begin{equation}\label{Wkdefinition}
\begin{split}
&W_{k,r} = \lim_{\epsilon\to 0^+} \frac{1}{2k\pi}\int_{-\ell}^{\ell} \xxa \rho_r(x;\eps)^{k}\xxe\, \dd{x},\qquad (k=1,2,3, \; r = \pm).
\end{split}
\end{equation}

\begin{lemma}\label{lemma2}
For $r, r' = \pm$ and $k = 1,2,3$, the operator $W_{k,r'}$ and the anyon $\phi_{r,\nu}(x;\epsilon)$ defined in Definition \ref{def:anyons} satisfy the commutation relations
\begin{align}\label{W1commutators}
 [W_{1,r'}, \phi_{r,\nu}(x;\epsilon)]= &\; \nu \delta_{r,r'} \phi_{r,\nu}(x;\epsilon)
	\\ \label{W2commutators}
 [W_{2, r'},\phi_{r,\nu}(x;\epsilon)]= &\; \ii r \delta_{r,r'} (\phi_{r, \nu})'(x;\epsilon),
	\\ \nonumber
[W_{3,r'},\phi_{r,\nu}(x;\epsilon)]=&\; \delta_{r,r'}\Big(- \nu^{-1} (\phi_{r,\nu})''(x;\epsilon)+\mathrm{i} r (\nu^2-1)\xxa\rho_r'(x;\epsilon)\phi_{r,\nu}(x;\epsilon)\xxe
	\\ \label{W3commutators}
&+ 2\nu\xxa \mathcal{R}_{r,\nu}(x;\epsilon)\phi_{r,\nu}(x;\epsilon)\xxe+ \nu^3c_\epsilon \phi_{r,\nu}(x;\epsilon)\Big),
\end{align}
with $c_\epsilon$ as in \eqref{ceps} and $\mathcal{R}_{r,\nu}$ defined by (\ref{Rrnuxepsilondef}).
\end{lemma}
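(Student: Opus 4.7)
The plan is to prove \eqref{W1commutators}, \eqref{W2commutators}, \eqref{W3commutators} by direct mode-by-mode computation, working in the Bogoliubov basis $\{b_{r,n}, Q_r, R_r\}$ of Section~\ref{subsec:cFcA}, in which the $W_{k,r'}$ are polynomials by Definition~\ref{def:cH}, and using the factorized form \eqref{phiexpanded} of the anyon. At each stage I would propagate $[W_{k,r'},\,\cdot\,]$ through the factors $\ee^{-\ii\nu\kappa Q_r rx}$, $R_r^{\nu/\nu_0}$, $\ee^{-\ii\nu K_r^{+}(x;\epsilon)}$, $\ee^{-\ii\nu K_r^{-}(x;\epsilon)}$ one by one. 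A key simplification that I would exploit throughout is that $K_r^{+}(x;\epsilon)$ is a sum of creation modes only and $K_r^{-}(x;\epsilon)$ of annihilation modes only, so $[K_r^{\pm}(x;\epsilon),K_r^{\pm}(x;\epsilon)]=0$ and \eqref{BCH} applies to $\ee^{-\ii\nu K_r^{\pm}(x;\epsilon)}$ without higher-order corrections in the $W_1$ and $W_2$ cases.

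For \eqref{W1commutators}, integrating the mode expansion of $\rho_{r'}(x;\epsilon)$ over $[-\ell,\ell]$ kills every non-zero-mode contribution and leaves $W_{1,r'}=Q_{r'}$; then $[Q_{r'},b_{r,n}]=0$ for $n\neq 0$, $[Q_{r'},Q_r]=0$, and $[Q_{r'},R_r^{\nu/\nu_0}]=\delta_{r,r'}\nu R_r^{\nu/\nu_0}$ (from \eqref{RQ}) propagate through \eqref{phiexpanded} to give $\delta_{r,r'}\nu\phi_{r,\nu}$. For \eqref{W2commutators}, the crucial intermediate identity is $[W_{2,r'},b_{r,m}]=-2\kappa m\,\delta_{r,r'}\,b_{r,m}$ for $m\neq 0$, obtained from the mode expansion \eqref{W2pm} and $[b_{r,n},b_{r',p}]=n\delta_{n,-p}\delta_{r,r'}\id$. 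Since $[W_{2,r'},Q_r]=0$ and one computes $[W_{2,r'},R_r^{\nu/\nu_0}] = \delta_{r,r'}\kappa(2\nu R_r^{\nu/\nu_0}Q_r + \nu^2 R_r^{\nu/\nu_0})$, combining all pieces and identifying the sum with the $x$-derivative of \eqref{phiexpanded} yields precisely $\ii r\delta_{r,r'}(\phi_{r,\nu})'(x;\epsilon)$: the $R_r$-bracket contributes the zero-mode part of $\partial_x\phi_{r,\nu}$, while the $K_r^{\pm}$-brackets contribute the non-zero-mode part.

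The content of the lemma is concentrated in \eqref{W3commutators}. Commuting $[b_{r',p},b_{r,q}]=p\delta_{p,-q}\delta_{r,r'}\id$ past the cubic in \eqref{W3pm} again forces $r=r'$, but now single contractions with $b_{r,m}$ leave a quadratic-in-$b$ residue rather than a multiple of $b_{r,m}$. Propagating these through the exponentials of $K_r^{\pm}$ via \eqref{BCH} organizes them into three classes, distinguished by how many of the three $b$-modes of the cubic are contracted with $\ii\nu K_r^{\pm}(x;\epsilon)$: single contractions give the quadratic-in-$\rho_r$ piece, which reorganizes into $-\nu^{-1}(\phi_{r,\nu})''(x;\epsilon)$; double contractions give the linear-in-$\rho_r$ piece, which using $c_n^2-s_n^2=1$ separates into the local $\ii r(\nu^2-1)\xxa \rho_r' \phi_{r,\nu}\xxe$ term plus the non-local remainder $2\nu\xxa\mathcal{R}_{r,\nu}\phi_{r,\nu}\xxe$; triple contractions give the pure $c$-number $\nu^3 c_\epsilon\phi_{r,\nu}$. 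The two extra pieces $Q_{r'}\sum_n \xxa b_{r',-n}b_{r',n}\xxe$ and $\tfrac13 Q_{r'}^3$ in \eqref{W3pm} supply exactly the cross-contributions needed when the zero-mode factor $Q_r$ in the anyon participates in a contraction.

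The main obstacle I anticipate, and where most of the calculation sits, is the bookkeeping of the Bogoliubov-induced cross terms in the double- and triple-contraction contributions. Every contraction of a mode $b_{r,\pm n}$ or $b_{r,n+m}$ with a mode of $K_r^{\pm}(x;\epsilon)$ produces, via \eqref{BT}--\eqref{cnsn}, factors of $c_n^2$, $s_n^2$ or $c_n s_n$ together with exponentials of the form $\ee^{2\kappa(\pm \ii r n x - j\epsilon)}$. Identifying which linear combinations of these telescope to the local $\xxa \rho_r'\phi_{r,\nu}\xxe$ piece with the exact coefficient $\ii r(\nu^2-1)$, and which are genuinely new and must be collected into the definitions \eqref{ceps} of $c_\epsilon$ and \eqref{Rrnuxepsilondef} of $\mathcal{R}_{r,\nu}$, relies on the appearance of the characteristic factor $\ee^{-2\kappa(n+m)\epsilon}-\ee^{-2\kappa|n-m|\epsilon}$ in \eqref{Rrnuxepsilondef}, which emerges directly from the exclusion $n+m\neq 0$ in the cubic sum defining $W_{3,r'}$. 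This is precisely the algebraic structure that, upon iterating the single-anyon formula to an $N$-fold product in Theorem~\ref{thm:eCSgen}, produces the coupling $\nu^2(\nu^2-1)=g(g-1)$ of the generalized eCS Hamiltonian.
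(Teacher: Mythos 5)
Your plan is sound and would work, but it takes a genuinely different route from the paper. The paper proves all three relations at once by a generating-function device: it first establishes exchange relations between $\phi_{r,\nu}(x;\epsilon)$ and $\xxa\ee^{t\rho_{r'}(x';\epsilon')}\xxe$, which produce scalar factors $\ee^{\mp\nu t\Delta_\mp(r(x-x');\epsilon+\epsilon')}$ (and $\tilde j$-factors for $r'=-r$); it then obtains $[\phi_{r,\nu},\xxa\rho_{r'}^k\xxe]$ for $k=1,2,3$ by differentiating $k$ times in $t$ at $t=0$, and finally integrates over $x'$ and sends $\epsilon'\to0^+$. All the combinatorics you propose to handle by Wick contractions is thereby concentrated into closed-form identities for $\Delta_-+\Delta_+$, $\Delta_-^2-\Delta_+^2$ and $\Delta_-^3+\Delta_+^3$ in terms of $\delta$, $\delta'$, $\delta''$, $j$ and $j^2$, plus a short list of elementary integrals. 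Your mode-by-mode approach buys directness — you work with the finite mode sums \eqref{W2pm}--\eqref{W3pm} and avoid the second regulator $\epsilon'$ altogether — at the price of tracking by hand the $c_n$, $s_n$ cross terms that the $\Delta_\pm$-identities package automatically. Your treatments of $k=1$ and $k=2$, and the intermediate identities you state there, are correct.

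Two details of your description of the $k=3$ case should be corrected before you carry it out. First, the factor $\ee^{-2\kappa(n+m)\epsilon}-\ee^{-2\kappa|n-m|\epsilon}$ in \eqref{Rrnuxepsilondef} does not come from the exclusion $n+m\neq0$ in \eqref{W3pm}; it comes from the mismatch between the dampings $\ee^{-2\kappa|n+m|\epsilon}$ and $\ee^{-2\kappa(|n|+|m|)\epsilon}$, i.e.\ from the fact that the quadratic residue left after contraction is regularized differently from $\xxa\rho_r(x;\epsilon)^2\xxe$ (this is the operator $E$ in the paper's proof), together with the analogous mismatch in the c-number sums; it is nonzero only when the two mode indices have opposite signs, which is what produces $|n-m|$ versus $n+m$. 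Second, the separation ``single contraction $\to-\nu^{-1}\phi''$, double $\to\ii r(\nu^2-1)\xxa\rho_r'\phi\xxe$'' is not clean: rewriting the quadratic-in-$\rho_r$ residue via \eqref{Kprojectionderivatives2} yields $-\nu^{-1}(\phi_{r,\nu})''$ \emph{plus} a term $-\ii r\xxa\rho_r'\phi_{r,\nu}\xxe$, and it is this $-\ii r$ that combines with the $+\ii r\nu^2$ coming from the double-contraction ($\delta'$-type) terms to give the coefficient $\ii r(\nu^2-1)$; the identity $c_n^2-s_n^2=1$ is not the source of the $-1$. Likewise the $\xxa b_{r,-n}b_{r,m}\xxe$ piece of $\mathcal{R}_{r,\nu}$ originates from the single-contraction residue, not from the double contractions. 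Neither point invalidates your approach, but both affect exactly the bookkeeping you identify as the heart of the calculation.
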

\begin{proof}
See Appendix \ref{lemma2app}.
\end{proof}

Recalling \eqref{cH3}, we compute the commutators of $\sum_{r=\pm} W_{3,r}$ and $\cC$ with $\phi^N_{\vr,\vm}(\vx;\eps)$.

First, using Lemma \ref{lemma2}, we compute
\begin{align}
& \frac{\nu}{2} \sum_{r=\pm} [W_{3, r},\phi^N_{\vr,\vm}(\vx;\eps)] \nonumber
	\\ \nonumber
&= \frac{\nu}{2} \sum_{r=\pm}  \sum_{j=1}^{N} \phi_{r_1,m_1\nu}(x_1;\epsilon)\cdots[W_{3, r},\phi_{r_j,m_j\nu}(x_j;\epsilon)]\cdots {\phi_{r_N,m_N\nu}}(x_N;\epsilon)
	 \\\nonumber
&= - \sum_{j=1}^N  \frac{1}{2m_j} \partial_{x_j}^2 \phi^N_{\vr,\vm}(\vx;\eps)
+ \mathcal{X} 
+ \frac{\nu}{2} c_\epsilon \sum_{j=1}^N m_j^3\nu^3 \phi^N_{\vr,\vm}(\vx;\eps),
	\\ \label{W3_actiondeformed}
&\phantom{=\;}+ \nu^2  \sum_{j=1}^{N} m_j \phi_{r_1, m_1\nu}(x_1;\epsilon)\cdots\xxa \mathcal{R}_{r_j, m_j\nu}(x_j;\epsilon)\phi_{r_j, m_j\nu}(x_j;\epsilon)\xxe\cdots \phi_{r_N, m_N\nu}(x_N;\epsilon)
\end{align}
where $\mathcal{X}$ is shorthand notation for
\begin{align}\label{calXdef}
 \mathcal{X} := \frac{\nu}{2} \sum_{j=1}^{N} &\mathrm{i} r_j(m_j^2\nu^2-1) \phi_{r_1,m_1\nu}(x_1;\epsilon)\cdots\xxa \rho_{r_j}'(x_j;\epsilon)\phi_{r_j,m_j\nu}(x_j;\epsilon)\xxe
\cdots \phi_{r_N,m_N\nu}(x_N;\epsilon).
 \end{align}

The second commutator is given by the following lemma. 
 
\begin{lemma}\label{Clemma}
The operator $\mathcal{C}$ satisfies
\begin{align}\label{Ccommutator}
 \frac{1-\nu^2}{2} [\mathcal{C},\phi^N_{\vr,\vm}(\vx;\eps)]\Omega
=& - \mathcal{X} \Omega
+  \sum_{1 \leq j < k \leq N} m_j m_k  \nu^2(\nu^2 - 1) \wp_{r_j, r_k}(x_k-x_j; 2\epsilon) \phi^N_{\vr,\vm}(\vx;\eps)\Omega,
\end{align}
where $\mathcal{X}$ is given by (\ref{calXdef}). 
\end{lemma}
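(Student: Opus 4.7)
The key observation is that $\cC\Omega=0$: each summand $n a_{r,-n}a_{r,n}$ in \eqref{cC} has the positive-frequency operator $a_{r,n}$ ($n\geq 1$) on the right, so $[\cC,\phi^N_{\vr,\vm}]\Omega = \cC\,\phi^N_{\vr,\vm}\Omega$. My plan is to compute this using the local representation \eqref{cC2} of $\cC$ as an integral of normal-ordered bilinears in the chiral bosons $\rho_\pm$, together with the fundamental commutator
\begin{equation*}
[\rho_{r'}(x';\eps'),\phi_{r,\tilde\nu}(x;\eps)] = 2\pi\tilde\nu\,\delta_{r,r'}\,\delta(x-x';\eps+\eps')\,\phi_{r,\tilde\nu}(x;\eps),
\end{equation*}
which follows from Definition~\ref{def:chiralbosons}, \eqref{RQ}, and the elementary computation $[b_{r',n},\phi_{r,\tilde\nu}(x;\eps)] = \tilde\nu\delta_{r,r'}\ee^{2\kappa(-\ii n r x-|n|\eps)}\phi_{r,\tilde\nu}(x;\eps)$ valid for $n\neq 0$. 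Applying $T\partial_{x'}$ and $\tilde{T}\partial_{x'}$ to this formula and resumming the resulting Fourier series using the $\sin^{-2}$-sum representations of the Weierstrass functions collected in Appendix~\ref{app:special}, one obtains the analogous commutators of $(T\rho_{r',x})(x';\eps')$ and $(\tilde{T}\rho_{r',x})(x';\eps')$ with $\phi_{r,\tilde\nu}(x;\eps)$, in which the delta function is replaced by a multiple of the regularized eCS potential $\wp_{r',r}(x-x';\eps+\eps')$ and $\wp_{-r',r}(x-x';\eps+\eps')$ of Definition~\ref{def:wprreps}, respectively.

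The commutator of the bilinear $\xxa\rho_{r'}(x')B(x')\xxe$ (with $B \in \{T\rho_{r',x},\tilde{T}\rho_{r',x}\}$) with the anyon product $\phi^N_{\vr,\vm}$ expands, by the derivation rule, into a sum of contractions; since each $[\rho_{r'}(x'),\phi_j]$ and $[B(x'),\phi_j]$ is a scalar multiple of $\phi_j$, these contractions fall into \emph{double} contractions ($\rho_{r'}$ pairs with $\phi_j$ and $B$ with a distinct $\phi_k$) and \emph{diagonal} contractions ($j=k$, where one field pairs with $\phi_j$ via the delta-function commutator and the other survives as a local insertion at $x_j$ after the $x'$-integration). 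Summing over $r'$ and over the $T$- and $\tilde{T}$-pieces in \eqref{cC2}, the double-contraction terms yield $-\tilde\nu_j\tilde\nu_k\,\wp_{r_j,r_k}(x_j-x_k;\eps+\eps')\phi^N\Omega$ for each ordered pair $(j,k)$ with $j\neq k$. Multiplication by the prefactor $(1-\nu^2)/2$, symmetrization over pairs, and the identity $\tilde\nu_j\tilde\nu_k = m_jm_k\nu^2$ then reproduce precisely the $\sum_{j<k}m_jm_k\nu^2(\nu^2-1)\wp_{r_j,r_k}(x_k-x_j;2\eps)\phi^N\Omega$ term in \eqref{Ccommutator}.

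The diagonal contributions produce local insertions of the form $\xxa\rho_{r_j,x}(x_j;\eps)\phi_j\xxe$ in the normal-ordered product, with coefficient $\ii r_j\tilde\nu_j$; multiplication by $(1-\nu^2)/2$ and use of the algebraic identity $(1-\nu^2)m_j = 1-m_j^2\nu^2$ (which holds for $m_j\in\{1,-1/\nu^2\}$) identifies this sum with $-\cX$ as defined in \eqref{calXdef}. The main technical obstacle is precisely this diagonal bookkeeping: after the $\delta(x'-x_j)$-integration in a single contraction, the residual field is the non-local operator $T\rho_{r_j,x}(x_j;\eps+\eps')$ or $\tilde{T}\rho_{-r_j,x}(x_j;\eps+\eps')$, and one must verify that summing both contraction channels, together with the combined $T$- and $\tilde{T}$-contributions in \eqref{cC2}, collapses these non-local operators to the local field $\rho_{r_j,x}(x_j;\eps)$ with the precise coefficient $\ii r_j\tilde\nu_j$, with all divergent constants cancelling in accordance with $\cC\Omega=0$. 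A cleaner route that avoids this bookkeeping is induction on $N$: the base case $N=1$ follows directly from \eqref{cCrhor} applied twice, and the inductive step is straightforward from the derivation property of the commutator $[\cC,\cdot]$.
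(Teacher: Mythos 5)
Your overall strategy (Wick-expanding the local form \eqref{cC2} of $\cC$ against the anyon product, with double contractions producing the $\wp_{r_j,r_k}$ terms and single contractions producing $\cX$) is a legitimate alternative to the paper's route, and your coefficient bookkeeping for the two-body terms checks out. But the step you defer as ``the main technical obstacle'' is in fact the entire content of the lemma, and neither of your two proposed ways of handling it works as stated. The single-contraction insertions are genuinely non-local: by \eqref{TrhotTrho} they carry Fourier multipliers $|n|(c_n^2+s_n^2)$ and $2|n|c_ns_n$, whereas the target insertion $\ii r_j\tilde\nu_j\rho_{r_j}'(x_j;\eps)$ carries the multiplier $\ii r_j n$. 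These do \emph{not} agree operator by operator, nor do their creation parts agree; the collapse to a local field happens only because of a cancellation of the Bogoliubov coefficients that occurs in the specific combination $\cC\Phi+\Phi\cC-2\xxa\Phi\cC\xxe$ for a normal-ordered vertex operator $\Phi$ (this is the identity \eqref{lemma3result} the paper proves first, using $[a_{r,\pm n},\Phi]=\pm\ii n\beta_{r,\pm n}\Phi$ and the inverse Bogoliubov transformation, which turns $c_n\alpha_{r,n}-s_n\alpha_{-r,-n}$ back into a bare $\alpha_{r,n}b_{r,-n}$ with no $c_n,s_n$ left over). Asserting the collapse without exhibiting this cancellation leaves the proof incomplete at its crux.

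The proposed fallback by induction on $N$ does not repair this. First, the base case does \emph{not} ``follow directly from \eqref{cCrhor} applied twice'': \eqref{cCrhor} gives $\ii[\cC,\rho_r]=r(T\rho_{r,xx}+\tilde T\rho_{-r,xx})$, which is manifestly non-local, so expanding $[\cC,\phi_{r,\tilde\nu}]$ via the exponential yields $\phi$ times a non-local operator plus a $\C$-number, and identifying this with $\ii r\tilde\nu\xxa\rho_r'\phi\xxe\Omega$ on the vacuum requires exactly the cancellation above. Second, the inductive step fails: the derivation rule gives $[\cC,\phi^{N+1}]\Omega=[\cC,\phi^N]\phi_{N+1}\Omega+\phi^N[\cC,\phi_{N+1}]\Omega$, and the first term cannot be evaluated from the inductive hypothesis, which only controls $[\cC,\phi^N]\Omega$ and not $[\cC,\phi^N]$ applied to the non-vacuum vector $\phi_{N+1}\Omega$. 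You would need an operator-level identity for $[\cC,\phi^N]$ (modulo terms annihilating $\Omega$ from the right), which is precisely what the paper's \eqref{CPhiPhiC} supplies before the vacuum is applied; the $\wp_{r_j,r_k}$ terms then emerge from commuting the residual $(\rho_{r_j}^{\mp})'$ insertions through the other anyons via $[(\rho_r^{\mp})'(x;\eps),K_{r'}^{\pm}(x';\eps')]=\pm\wp_{r,r'}(\pm(x-x');\eps+\eps')$. To make your argument complete you should either prove the Bogoliubov cancellation directly in your contraction scheme, or upgrade your inductive hypothesis to an operator identity.
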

\begin{proof}
See Appendix \ref{Clemmaapp}.
\end{proof}

Thus, \eqref{cH3}, \eqref{W3_actiondeformed}, and \eqref{Ccommutator} imply that
\begin{align*}
 &[\cH_{3,\nu},\phi^N_{\vr,\vm}(\vx;\eps)]\Omega
 =  - \sum_{j=1}^N  \frac{1}{2m_j} \partial_{x_j}^2 \phi^N_{\vr,\vm}(\vx;\eps)\Omega
+ \mathcal{X} \Omega
+ \frac{\nu}{2} c_\epsilon \sum_{j=1}^N m_j^3\nu^3 \phi^N_{\vr,\vm}(\vx;\eps)\Omega
	\\
&+ \nu^2  \sum_{j=1}^{N} m_j \phi_{r_1, m_1\nu}(x_1;\epsilon)\cdots\xxa \mathcal{R}_{r_j, m_j\nu}(x_j;\epsilon)\phi_{r_j, m_j\nu}(x_j;\epsilon)\xxe\cdots \phi_{r_N, m_N\nu}(x_N;\epsilon)\Omega
	\\
&- \mathcal{X} \Omega
+\sum_{1 \leq j < k \leq N} m_j m_k   \nu^2(\nu^2 - 1) \wp_{r_j, r_k}(x_j-x_k; 2\epsilon) \phi^N_{\vr,\vm}(\vx;\eps)\Omega, 
\end{align*}
%\begin{multline*}
% [\cH_{3,\nu},\phi^N_{\vr,\vm}(\vx;\eps)]\Omega
% = 
%  - \sum_{j=1}^N  \frac{1}{2m_j} \partial_{x_j}^2 \phi^N_{\vr,\vm}(\vx;\eps)\Omega
%+ \mathcal{X} \Omega
%+ \frac{\nu}{2} c_\epsilon \sum_{j=1}^N m_j^3\nu^3 \phi^N_{\vr,\vm}(\vx;\eps)\Omega
%	\\
%+ \nu^2  \sum_{j=1}^{N} m_j \phi_{r_1, m_1\nu}(x_1;\epsilon)\cdots\xxa \mathcal{R}_{r_j, m_j\nu}(x_j;\epsilon)\phi_{r_j, m_j\nu}(x_j;\epsilon)\xxe\cdots \phi_{r_N, m_N\nu}(x_N;\epsilon)\Omega
%	\\
%- \mathcal{X} \Omega
%+\sum_{1 \leq j < k \leq N} m_j m_k   \nu^2(\nu^2 - 1) \wp_{r_j, r_k}(x_j-x_k; 2\epsilon) \phi^N_{\vr,\vm}(\vx;\eps)\Omega, 
%\end{multline*}
which proves part $(a)$ of Theorem \ref{thm:eCSgen}. 

To prove $(b)$, let $\Phi \coloneqq \; \xxa \phi_{\vr,\vm}^N(\vx;\eps)^\dag R_+^{\mu_+} R_-^{\mu_-} \phi_{\vr',\vm'}^M(\vy;\eps) \xxe$.
Up to a multiplicative constant, $\Phi$ equals $\phi_{\vr,\vm}^N(\vx;\eps)^\dag R_+^{\mu_+} R_-^{\mu_-} \phi_{\vr',\vm'}^M(\vy;\eps)$. Moreover, $\Phi$ has the form \eqref{Phidef}. Indeed, we have
$$\Phi = \; \xxa R_+^{\mu_+'} R_-^{\mu_-'}\ee^{\ii J(\alpha)}\xxe$$
where 
\begin{align*}
&\mu_r' \coloneqq \mu_r - \bigg(\sum_{j=1}^N \delta_{r_j,r} m_j\nu/\nu_0 -  \sum_{j=1}^M  \delta_{r'_j,r} m'_j\nu/\nu_0\bigg) \quad (r=\pm),
	\\
& J(\alpha) \coloneqq \sum_{j=1}^Nm_j\nu (2r_j \kappa Q_r x_j+K_{r_j}(x_j;\epsilon)) -\sum_{j=1}^M m_j'\nu (2r_j' \kappa Q_{r_j'} y_j+K_{r_j}(y_j;\epsilon)).
\end{align*}
Thus, part $(b)$ of Theorem \ref{thm:eCSgen} is a direct consequence of the following lemma.

\begin{lemma}\label{lemma6.5}
 Suppose $\Phi = \Phi_{\mu'}(\alpha) = \; \xxa R_+^{\mu_+'} R_-^{\mu_-'}\ee^{\ii J(\alpha)}\xxe$, where $J(\alpha)$ has the form (\ref{Jdef}). Then,
 $$\langle\Omega,  [\cH_{3,\nu},\Phi] \Omega\rangle=0.$$
\end{lemma}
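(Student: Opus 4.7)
The plan is to prove the stronger statement $\cH_{3,\nu}\Omega = 0$; the lemma then follows immediately by self-adjointness of $\cH_{3,\nu}$, since
\[
\langle\Omega, [\cH_{3,\nu},\Phi]\Omega\rangle = \langle\cH_{3,\nu}\Omega, \Phi\Omega\rangle - \langle\Omega, \Phi\,\cH_{3,\nu}\Omega\rangle = 0 - 0 = 0.
\]
First I would observe that the piece $\frac{1-\nu^2}{2}\cC$ annihilates $\Omega$ via the first representation in \eqref{cC}, as every summand ends in $a_{r,n}$ with $n\geq 1$. Inside each $W_{3,r}$, the $Q_r^3$ term kills $\Omega$ trivially, and the term $Q_r\sum_n\xxa b_{r,-n}b_{r,n}\xxe$ also does so: the quadratic expression is charge-neutral and so maps $\Omega$ into the charge-$(0,0)$ sector, where $Q_r$ acts as zero. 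Hence only the normal-ordered cubic-in-$b$ parts of $W_{3,+}$ and $W_{3,-}$ remain, and the task reduces to showing $(W_{3,+}+W_{3,-})\Omega = 0$.

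Next I would expand $b_{r,k}=c_{|k|}a_{r,k}-s_{|k|}a_{-r,-k}$ in each factor of $\xxa b_{r,-n}b_{r,-m}b_{r,n+m}\xxe\Omega$ and keep only the products of pure creation parts, since every other term places an annihilation $a$ on the right that kills $\Omega$. Setting $k_1=-n$, $k_2=-m$, $k_3=n+m$, the constraints $k_1+k_2+k_3=0$ and $k_i\neq 0$ leave exactly two sign patterns: one positive and two negative $k_i$'s, or two positive and one negative. A surviving configuration for $W_{3,r}$ thus produces a three-particle state of the form $a_{\chi_1,-L_1}a_{\chi_2,-L_2}a_{\chi_3,-L_3}\Omega$ whose magnitudes necessarily satisfy $L_i=L_j+L_k$ for some permutation $(i,j,k)$.

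I would then enumerate the six $(n,m)$-orderings producing a given state $S$ with chirality pattern $(+,+,-)$ at magnitudes $(L_1,L_2,L_3)$. A direct count shows that the coefficient of $S$ in $W_{3,+}\Omega$ equals $-2(2\kappa)^2 c_{L_1}c_{L_2}s_{L_3}$ (coming from ``one positive, two negative'' $k_i$'s), while the coefficient of the same $S$ in $W_{3,-}\Omega$ equals $+2(2\kappa)^2 s_{L_1}s_{L_2}c_{L_3}$ (coming from ``two positive, one negative''). Summing gives
\[
2(2\kappa)^2\bigl(s_{L_1}s_{L_2}c_{L_3}-c_{L_1}c_{L_2}s_{L_3}\bigr),
\]
which vanishes by the key identity $s_{L_1}s_{L_2}c_{L_3}=c_{L_1}c_{L_2}s_{L_3}$ whenever $L_3=L_1+L_2$. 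This identity is an immediate consequence of $s_n/c_n=q^n$, which follows from \eqref{cnsn}; an analogous argument handles the pattern $(-,-,+)$.

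The main obstacle will be the combinatorial bookkeeping of signs and multiplicities, since $W_{3,+}\Omega$ and $W_{3,-}\Omega$ are individually nonzero: the cancellation is genuinely non-chiral, produced by the $r\to -r$ symmetrization together with the identity $s_n/c_n=q^n$ that makes $\Omega$ an eigenstate of $\cH_{3,\nu}$ with eigenvalue zero.
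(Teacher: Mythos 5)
Your proposal is correct, and it takes a cleaner route than the paper's. The paper's proof (Appendix~\ref{C4app}) also starts from $\cC\Omega=0$ and the reduction of $W_{3,r}\Omega$ to the pure creation part $W_{3,r}^+\Omega$ --- its formula \eqref{eq:W3pm} is precisely your enumeration of the surviving creation-operator triples, with the coefficients $-c_nc_ms_{n+m}$ and $+s_ns_mc_{n+m}$ attached to the two chirality patterns --- but it then keeps the two chiralities separate and evaluates the matrix elements $\langle\Omega,W_{3,r}^-\Phi\Omega\rangle$ and $\langle\Omega,\Phi W_{3,r}^+\Omega\rangle$ explicitly, using the vertex-operator eigenvalue relation $a_{r,n}\Phi\Omega=\ii n\beta(\alpha)_{r,n}\Phi\Omega$ and the adjoint formula \eqref{Phiadjoint}; the cancellation after summing over $r=\pm$ rests on the same identity $c_nc_ms_{n+m}=s_ns_mc_{n+m}$ (equivalently $s_n/c_n=q^n$) that you invoke. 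Your observation that this identity already forces $W_{3,+}^++W_{3,-}^+=0$, hence $\cH_{3,\nu}\Omega=0$, is a genuine strengthening: the lemma then follows from hermiticity alone, the vertex-operator structure of $\Phi$ (and the preliminary reduction to $\mu_+'=\mu_-'=0$ via \eqref{Phiexpectation}) is never needed, and one learns that $\Omega$ is an exact zero-energy state of the non-chiral Hamiltonian even though $W_{3,r}\Omega\neq 0$ in each chiral sector separately when $q>0$. The only caveats are ones you essentially flag yourself: the multiplicity count of orderings needs the obvious adjustment when $L_1=L_2$, and ``self-adjointness'' should be read as hermiticity of the sesquilinear form together with $\Phi\Omega$ and $\Phi^\dag\Omega$ lying in its domain --- the same level of rigor the paper itself adopts.
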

\begin{proof}
See Appendix \ref{C4app}.
\end{proof}
  
\section{Conclusions}\label{sec:conclusions} 
We constructed a system defined by the quantum field theory operator $\cH_{3,\nu}$ in Definition \ref{def:cH}. This system unifies seemingly different integrable systems: on the one hand, it can be regarded as a second quantization of a quantum many-body system generalizing the eCS model (see Theorem~\ref{thm:eCSgen}),  on the other hand, it corresponds to a quantum version of a soliton equation known as the periodic ncILW equation (see Theorem~\ref{thm:qncILW}).

In the trigonometric case $q=0$, the generalized eCS model reduces to the deformed Calogero-Sutherland model which has well-studied exact eigenfunctions known as super Jack polynomials \cite{sergeev2005,atai2019}; in this case, these eigenfunctions can be used to construct exact eigenstates of the trigonometric limit of $\cH_{3,\nu}$ \cite{atai2017}. We expect that, in a similar way, the eigenfunctions of the generalized eCS model provide eigenstates of the operator $\cH_{3,\nu}$ constructed and studied in the present paper. At present, these eigenfunctions are not known in full generality; see \cite{langmann2014} for some results in this direction. Thus, the present paper provides a strong motivation to construct the eigenfunctions of the generalized eCS model. 

The CFT model studied in this paper is defined on a boson Fock space $\cF$,  and the CFT operator $\cH_{3,\nu}$ can be interpreted as a Hamiltonian in a quantum field theory of bosons. By mathematical results known as boson-fermion correspondence, the Fock space $\cF$ is identical with the fermion Fock space $\cF_{\F}$ generated by fermions $\psi_r(x)$, $r=\pm$ and $x\in[-\ell,\ell]$, satisfying canonical anti-commutator relations 
\begin{equation}\label{CAR}  
\{\psi^{\pdag}_r(x),\psi_{r'}^\dag(y)\} = \delta_{r,r'}\delta(x-y),\quad \{\psi_r(x),\psi_{r'}(y)\} = 0 \quad (r,r'=\pm;x,y\in[-\ell,\ell]) 
\end{equation} 
with $\psi^\dag_r(x)\coloneqq \psi_r(x)^\dag$, and with the Hamiltonian (formally) defined as 
\begin{equation}\label{H0}
\cH_2^{\F} =  \sum_{r=\pm}\int_{-\ell}^\ell \nna \psi^\dag_r(x)r(-\ii\partial_x)\psi_r(x) \nne \dd{x}
\end{equation} 
where the colons indicate (additive) normal ordering; see Appendix~\ref{app:fermions} for a mathematically precise formulation. 
More specifically, as elaborated in Appendix~\ref{app:fermions}, the fermions $\psi_r(x)$ can be obtained by taking the limit $\eps\to 0$ of the anyons $\phi_{r,-1}(x;\eps)/\sqrt{2\ell}$ (i.e., $\nu=-1$) for $\nu_0=1$, and the fermion Hamiltonian $\cH_2^{\F}$ and the boson Hamiltonian $\cH_2$ defined in \eqref{cH2} are related as follows, 
\begin{equation}\label{H2H2F}
  \cH_2 = G^2\cH_2^{\F}-\kappa(1/\nu_0^2-1)\sum_{r=\pm} Q_r^2 
\end{equation} 
with the constant 
\begin{equation}\label{G} 
G\coloneqq \prod_{m=1}^\infty(1-q^{2m})
\end{equation} 
(this relation is well-known in the special case $\nu_0=1$ and $q=0$; see e.g.\ \cite[Proposition~2.14]{langmann2015}). In fact, there is a similar relation between the boson operator  $\cH_{3,\nu}$ defined in \eqref{cH3} and the fermion operator $\cH^{\F}_{3,\nu}$ defined by
\begin{multline}\label{cH3nufermion}
\cH^{\F}_{3,\nu} =  \frac{\nu}{2}\int_{-\ell}^{\ell} \sum_{r=\pm} \nna \psi_r^\dag(x)(-\partial_x^2)\psi_r(x)\nne \dd{x}   \\ + 
 \frac{1}2(\nu^2-1)G^2 \int_{-\ell}^\ell \fpint{-\ell}{\ell}  \sum_{r=\pm} J_r(x)\wp_1(x-x')J_r(x')\,\dd{x} \,\dd{x'} 
 \\ + 
 \frac{1}2(\nu^2-1)G^2 \int_{-\ell}^\ell \int_{-\ell}^{\ell}  \sum_{r=\pm} J_r(x)\wp_1(x-x'+\ii\delta)J_{-r}(x')\,\dd{x} \,\dd{x'},
\end{multline} 
where the fermion densities $J_r(x)$ are defined by $J_r(x) =  \nna \psi^\dag_r(x)\psi_r(x)\nne$. Indeed, as will be explained in Appendix \ref{app:fermions} (see \eqref{cH3nufermion1}), we have
\begin{align}\label{H3H3F}
\cH_{3,\nu} = G^2\cH^{\F}_{3,\nu} 
+  \frac{\nu}{2}G^2 \int_{-\ell}^{\ell} 
\sum_{r=\pm} \nna \psi_r^\dag(x) \bigg(2\tilde\kappa Q_r r\ii\partial_x + \frac{2+\nu_0}{3}(\tilde\kappa Q_r)^2 +c_0 \bigg) \psi_r(x) \nne \dd{x},
\end{align}
where $\tilde\kappa \coloneqq 2\kappa (1/\nu_0-1)$ and $c_0$ is defined in \eqref{constants}.

The operator in \eqref{cH3nufermion} has a natural physics interpretation: it describes non-relativistic fermions with a flavor index $r=\pm$ interacting with a particular two-body interaction of standard density-density type. Thus, the quantum field theory studied in the present paper has two complimentary descriptions: one in terms of bosons, and another one in terms of fermions. Note that $\nu^2=1$ is a free fermion point: in this special case, \eqref{cH3nufermion} defines a quantum field theory model of non-interacting fermions which, by the boson-fermion correspondence, is equivalent to a quantum field theory of interacting bosons. 

Ruijsenaars discovered a relativistic generalization of the eCS model \cite{ruijsenaars1987} when trying to develop mathematical tools to construct a quantum version of the sine-Gordon theory  \cite{ruijsenaars2001}. Quantum sine-Gordon theory is a relativistically invariant quantum field theory of interacting bosons, and it corresponds to a quantum version of a soliton equation known as the sine-Gordon equation. Moreover, by a famous conjecture due to Coleman \cite{coleman1975} recently proved at the free fermion point \cite{bauerschmidt2020}, quantum sine-Gordon theory is equivalent to the massive Thirring model, which is a relativistically invariant quantum field theory describing interacting fermions. We believe that it is possible to interpret quantum sine-Gordon theory posed on a circle as a second quantization of the elliptic Ruijsenaars model;\footnote{The limit where the circle becomes the real line corresponds to the hyperbolic Ruijsenaars model \cite{ruijsenaars2001}.} this suggests to us that  (i) the ncILW equation is a non-relativistic limit of the sine-Gordon equation, (ii) the Hamiltonian  $\cH_{3,\nu}^{\F}$ in \eqref{cH3nufermion} can be obtained as a non-relativistic limit of the massive Thirring model, and (iii) the relation \eqref{H3H3F} between  $\cH_{3,\nu}$ and $\cH_{3,\nu}^{\F}$ is a non-relativistic limit of the Coleman correspondence. 
Thus, we believe that our results can help to make quantum sine-Gordon theory and the Coleman conjecture mathematically precise.

\noindent
{\bf Acknowledgements.} {\it We thank Martin Halln\"as, Masatoshi Noumi, Hjalmar Rosengren,  and Junichi Shiraishi for inspiring discussions. B.K.B. acknowledges support from the Olle Engkvist Byggm\"{a}stare Foundation, Grant 211-0122. E.L. acknowledges support from the European Research Council, Grant Agreement No.\ 2020-810451. J.L. acknowledges support from the Ruth and Nils-Erik Stenb\"ack Foundation, the Swedish Research Council, Grant No.\ 2021-03877, and the European Research Council, Grant Agreement No. 682537.}

\appendix
\section{Special functions}\label{app:special} 
We collect the special functions and associated identities used in the main text. We largely follow the conventions of \cite{DLMF} except that we write the half-periods of the elliptic functions as $(\omega_1,\omega_2)$ rather than $(\omega_1,\omega_3)$. In particular, $\tau\coloneqq \omega_2/\omega_1$ and $q \coloneqq \ee^{\ii\pi \tau}$.  In addition, we use the abbreviation $\kappa\coloneqq \pi/2\omega_1$.  In the main text, the half-periods are set to $(\omega_1,\omega_2)=(\ell,\ii\delta)$. Throughout this appendix, $z$ is a complex variable. 

\subsection{Elliptic functions}
We recall the standard definitions of the Weierstrass $\zeta$- and $\wp$-functions with half-periods $(\omega_1,\omega_2)$  \cite[Section 23.2]{DLMF}, 
\begin{equation} 
\zeta(z) \coloneqq \frac1{z} + \sum_{(n,m)\in \Z^2\setminus(0,0)}\left(\frac1{z-2n\omega_1-2m\omega_2} + \frac1{2n\omega_1+2m\omega_2} +  \frac{z}{(2n\omega_1+2m\omega_2)^2}  \right) 
\end{equation} 
and 
\begin{equation} 
\wp(z) \coloneqq \frac1{z^2} + \sum_{(n,m)\in \Z^2\setminus(0,0)}
\left(\frac1{(z-2n\omega_1-2m\omega_2)^2} - \frac1{(2n\omega_1+2m\omega_2)^2} \right).  
\end{equation}  
The modified Weierstrass functions $\zeta_1(z)$ and $\wp_1(z)$ defined in the main text are related to the standard ones as follows, 
\begin{equation}\label{wp1def} 
\zeta_1(z) = \zeta(z) -\frac{\eta_1}{\omega_1}z,\quad \wp_1(z)=\wp(z)+\frac{\eta_1}{\omega_1},\quad \eta_1\coloneqq \zeta(\omega_1)
\end{equation} 
(this can be seen by comparing \cite[Eq.~23.8.4]{DLMF} with our definition \eqref{zeta1}, for example; recall that $\wp(z)=-\partial_z\zeta(z)$ and that we define $\wp_1(z)\coloneqq -\partial_z\zeta_1(z)$ in the main text). 

We use Jacobi's notation for the following two basic theta functions 
 \cite[Section~20.5(i)]{DLMF},
\begin{equation}\label{tet1tet4}
\begin{split} 
\theta_1(z,q) \coloneqq   & 2q^{1/4}\sin(z)\prod_{m=1}^\infty(1-q^{2m})(1-2q^{2m}\cos(2z)+q^{4m}),\\
 \theta_4(z,q) \coloneqq & \prod_{m=1}^\infty(1-q^{2m})(1-2q^{2m-1}\cos(2z)+q^{4m-2}).
\end{split} 
\end{equation}
The modified Weierstrass functions $\zeta_1(z)$ and $\wp_1(z)$ above are related to the theta function $\tet_1(z,q)$  as follows,   
\begin{equation} 
\label{zeta1fromtet1} 
\zeta_1(z) =  \partial_z\log\theta_1(\kappa z,q)
\end{equation}  
and 
\begin{equation} 
\label{wp1fromtet1} 
\wp_1(z) =  -\partial_z^2\log\theta_1(\kappa z,q); 
\end{equation} 
to see this, use the following relation of $\tet_1(z,q)$ to the Weierstrass $\sigma$-function \cite[Eq.~23.6.9]{DLMF} 
\begin{equation}
\sigma(z) =  2\omega_1 \exp\bigg( \frac{\eta_1}{2\omega_1} z^2 \bigg) \frac{\theta_1(\kappa z,q)}{\pi\theta_1'(0,q)}
\end{equation}
together with the well-known relations $\zeta(z)=\partial_z\log\sigma(z)$ and $\wp(z)=-\partial_z\zeta(z)$. 

It is important to note that $\zeta_1(z)$ is $2\omega_1$-periodic (but not $2\omega_2$-periodic). We also note that the identity (see \cite[Eq.~20.2.14]{DLMF})
\begin{equation}
\label{tet1fromtet4} 
\theta_1(z+\pi\tau/2,q)=\ii \ee^{-\frac{\ii \pi \tau}{4}}\ee^{-\ii z} \theta_4(z,q)
\end{equation}
implies that
\begin{equation} 
\label{wp1fromtet4} 
\wp_1(z+\ii\delta) = -\partial_z^2 \log\theta_4(\kappa z,q).
\end{equation}

\subsection{Integral transforms}
We derive some identities related to the integral transforms $T$ and $\tilde{T}$ in \eqref{TT}. As in the main text, we set $(\omega_1,\omega_2)=(\ell,\ii\delta)$. 

The function $\zeta_1(z)$ \eqref{zeta1} has the following series representation \cite[Eq. 23.8.2]{DLMF}
\begin{equation}\label{zeta1series} 
\zeta_1(z)=\kappa\cot(\kappa z)+4\kappa\sum_{n=1}^{\infty} \frac{q^{2n}}{1-q^{2n}}\sin(2n\kappa z),\qquad (|\im(z)|< 2\delta, z\notin \Lambda), 
\end{equation}
using $\kappa=\pi/2\ell$. This result is established in \cite{lawden2013} by computing the Fourier series for $\partial_z\log \theta_1(z,q)$. Similarly, we can use the series \cite[Eq.~20.5.13]{DLMF}
\begin{equation}
\frac{\theta_4'(z;q)}{\theta_4(z;q)}=4\sum_{n=1}^{\infty} \frac{q^n}{1-q^{2n}}\sin(2nz)\qquad (|\im(z)|< \pi\,\im(\tau/2)) 
\end{equation}
and the identities \eqref{wp1fromtet1} and \eqref{tet1fromtet4}
to establish 
\begin{equation}\label{zeta1series2} 
\zeta_1(z+\ii\delta)=-\ii\kappa +4\kappa\sum\limits_{n=1}^{\infty} \frac{q^n}{1-q^{2n}}\sin(2n\kappa z)\qquad (|\im (z)|< \delta).
\end{equation}
In the limit $\delta\to \infty$, we have $q\to 0$ and hence \eqref{zeta1series} and \eqref{zeta1series2} imply that the integral transforms $T$ and $\tilde{T}$, defined in \eqref{TT}, satisfy
\begin{equation}\label{Tdeltalimit}
\lim_{\delta\to\infty} (Tf)(x)= \frac{1}{2\ell} \fpint{-\ell}{\ell} \cot(\kappa(x'-x)) f(x')\,\mathrm{d}x' 
\end{equation}
and
\begin{equation}\label{Ttdeltalimit}
\lim_{\delta\to\infty} (\tilde{T}f)(x)= -\frac{\ii}{2\ell}\int_{-\ell}^{\ell} f(x')\,\mathrm{d}x', 
\end{equation}
using $\kappa=\pi/2\ell$; in particular, $\tilde{T}\to 0$ on zero-mean functions. It is interesting to note that the right-hand side in \eqref{Tdeltalimit} is the Hilbert transform $(Hf)(x)$ of the $2\ell$-periodic  function $f(x)$.  

To conclude, we state and prove a result we need that relates the integral operators $T$, $\tilde{T}$ to the coefficients $c_n$, $s_n$ appearing in the Bogoliubov transformation \eqref{BT}. 

\begin{lemma}\label{lem:TT}
The action of the integral operators $T$ and $\tilde{T}$ in \eqref{TT} on derivatives $f_x(x)\coloneqq \partial_x f(x)$ of $2\ell$-periodic $C^3$-functions $f(x)$ of $x\in\R$ can be computed as 
\begin{equation} \label{TTseries2} 
\begin{split} 
(Tf_x)(x) = & -\sum_{n\in\Z_{\neq 0}} 2\kappa|n|(c_n^2+s_n^2)\frac1{2\ell}\int_{-\ell}^\ell \ee^{2\ii n\kappa (x-x')}f(x')\,\dd{x'}, \\
(\tilde{T}f_x)(x)= &-\sum_{n\in\Z_{\neq 0}} 4\kappa|n|c_ns_n\frac1{2\ell}\int_{-\ell}^\ell \ee^{2\ii n\kappa (x-x')}f(x')\,\dd{x'} , 
\end{split} 
\end{equation} 
with $c_n$ and $s_n$ in \eqref{cnsn}.
\end{lemma}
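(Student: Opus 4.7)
My plan is to compute both formulas simultaneously by substituting the explicit Fourier series of $\zeta_1(z)$ and $\zeta_1(z+\ii\delta)$ given in \eqref{zeta1series}--\eqref{zeta1series2} into the definitions of $T$ and $\tilde T$ in \eqref{TT}, then identifying the resulting integrals as $2\ell$ times Fourier coefficients of $f$, and finally specializing to $f_x$ in place of $f$. The formulas on the right-hand side of \eqref{TTseries2} already display Fourier coefficients of $f$, so this line of attack matches the target form directly.

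First I would compute the Fourier multiplier of each of $T$ and $\tilde T$ acting on a generic $2\ell$-periodic $C^3$ function $f$. Writing $\sin(2n\kappa(x'-x))=(2\ii)^{-1}(\ee^{2\ii n\kappa(x'-x)}-\ee^{-2\ii n\kappa(x'-x)})$, the oscillating parts of the series \eqref{zeta1series}--\eqref{zeta1series2} give integrals of the form $\int_{-\ell}^{\ell}\ee^{\pm 2\ii n\kappa x'}f(x')\,\dd x'=2\ell\hat f_{\mp n}$, where $\hat f_n=(2\ell)^{-1}\int_{-\ell}^{\ell}\ee^{-2\ii n\kappa x'}f(x')\,\dd x'$; pulling out $\ee^{\mp 2\ii n\kappa x}$ and reindexing, the two contributions combine into a single sum over $n\in\Z_{\neq 0}$ weighted by $\sgn(n)$. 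For $T$ there is additionally the principal-value cotangent piece $\pi^{-1}\kappa\cot(\kappa(x'-x))$ from \eqref{zeta1series}, which I would treat via the distributional Fourier series $\cot(\kappa y)=-\ii\sum_{n\in\Z_{\neq 0}}\sgn(n)\ee^{2\ii n\kappa y}$; this identity is obtained by differentiating the classical expansion $\log|2\sin(\kappa y)|=-\sum_{n\geq 1}n^{-1}\cos(2n\kappa y)$. For $\tilde T$ the constant $-\ii\kappa$ in \eqref{zeta1series2} contributes only to the zero mode, which will drop out once the derivative is taken.

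Collecting contributions and using $c_n^2+s_n^2=(1+q^{2|n|})/(1-q^{2|n|})$ and $2c_ns_n=2q^{|n|}/(1-q^{2|n|})$, both of which follow directly from \eqref{cnsn}, I expect to obtain $(Tf)(x)=\ii\sum_{n\in\Z_{\neq 0}}\sgn(n)(c_n^2+s_n^2)\hat f_n\ee^{2\ii n\kappa x}$ and $(\tilde Tf)(x)=-\ii\hat f_0+2\ii\sum_{n\in\Z_{\neq 0}}\sgn(n)c_ns_n\hat f_n\ee^{2\ii n\kappa x}$. Replacing $f$ by $f_x$ uses that the Fourier coefficients of $f_x$ are $2\ii n\kappa\hat f_n$ and that its zero mode vanishes; since $\ii\sgn(n)\cdot 2\ii n\kappa=-2\kappa|n|$ and $\hat f_n\ee^{2\ii n\kappa x}=(2\ell)^{-1}\int_{-\ell}^{\ell}\ee^{2\ii n\kappa(x-x')}f(x')\,\dd x'$, this puts the answer into exactly the form \eqref{TTseries2}.

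The main obstacle is careful bookkeeping of signs and reindexings together with the distributional interpretation of the cotangent; there is no serious analytic difficulty, since the $C^3$ hypothesis gives $\hat f_n=O(|n|^{-3})$ while $c_n^2+s_n^2$ and $c_ns_n$ remain bounded (in fact tending to $1$ and $0$, respectively) as $|n|\to\infty$ for fixed $q<1$, so all sums converge absolutely and Fubini justifies the interchanges of summation and integration.
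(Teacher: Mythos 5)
Your proposal is correct and follows essentially the same route as the paper: expand $\zeta_1(x'-x)$ and $\zeta_1(x'-x+\ii\delta)$ via \eqref{zeta1series}--\eqref{zeta1series2}, integrate term by term to read off the Fourier multipliers $\ii\,\sgn(n)(c_n^2+s_n^2)$ and $2\ii\,\sgn(n)c_ns_n$, and then specialize to $f_x$; your sign bookkeeping and the identities $c_n^2+s_n^2=1+2q^{2|n|}/(1-q^{2|n|})$ and $2c_ns_n=2q^{|n|}/(1-q^{2|n|})$ all check out. The one place where your argument genuinely diverges from the paper's is the treatment of the singular cotangent term: you invoke the distributional Fourier series $\cot(\kappa y)=-\ii\sum_{n\neq 0}\sgn(n)\ee^{2\ii n\kappa y}$ and integrate it termwise against $f$, whereas the paper rewrites the principal value as $\lim_{\eps\to 0^+}\frac{1}{2}\big(\zeta_1(x'-x+\ii\eps)+\zeta_1(x'-x-\ii\eps)\big)$, obtains a series that converges uniformly for $\eps>0$, and then uses two integrations by parts (giving $O(n^{-2})$ decay of the sine integrals, uniformly in $\eps$) to justify taking $\eps\to 0^+$ inside the sum. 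Your version is shorter but leans on the classical fact that the conjugate-function multiplier is $-\ii\,\sgn(n)$ for Hölder/$C^3$ data; this is true and standard, but if you want the lemma to be self-contained you should either cite that fact explicitly or reproduce the paper's $\pm\ii\eps$ regularization, since termwise integration of a merely distributionally convergent kernel series against $f$ is exactly the step that needs justification here. Everything else — the reindexing into a single sum over $n\in\Z_{\neq 0}$, the vanishing of the zero mode of $f_x$ which kills the $-\ii\kappa$ constant in $\tilde T$, and the absolute convergence from $\hat f_n=O(|n|^{-3})$ — is sound.
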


\begin{proof} 
We compute, for $x\in\R$ and $\eps>0$, 
\begin{equation}
\cot(\kappa(x\mp \ii \eps)) = \ii \frac{\ee^{\ii\kappa x\pm \kappa \eps} + \ee^{-\ii\kappa x\mp \kappa \eps} }{\ee^{\ii\kappa x\pm \kappa \eps} - \ee^{-\ii\kappa x\mp \kappa \eps}}
= \pm \ii \frac{1+ \ee^{\mp 2\ii\kappa x-2\kappa \eps} }{1 - \ee^{\mp 2\ii\kappa x-2\kappa \eps}} = \pm \ii\Big(1 + 2\sum_{n=1}^\infty  \ee^{\mp 2\ii n\kappa x-2\kappa n\eps} \Big) , 
\end{equation} 
and thus 
\begin{equation}\label{cot} 
\cot(\kappa(x-\ii \eps)) + \cot(\kappa(x+ \ii \eps)) = 4\sum_{n=1}^\infty \sin(2n\kappa x)\ee^{-2\kappa n \eps}\quad (x\in\R, \eps>0). 
\end{equation} 
Since $\zeta_1(z)$ has a simple pole with residue $1$ at $z = 0$, the Cauchy principal value integral in the first definition in  \eqref{TT} is equivalent to
\begin{equation} 
(Tf)(x) = \lim_{\eps\to 0^+} \frac1{2\pi}\int_{-\ell}^\ell \big(\zeta_1(x'-x+\ii\eps) + \zeta_1(x'-x - \ii\eps) \big) f(x')\,\dd{x'}. 
\end{equation} 
Using \eqref{zeta1series} and \eqref{cot}, as well as the identity
$$\sin(2n\kappa (x+\ii\eps)) +  \sin(2n\kappa (x-\ii\eps)) = 2\cosh(2n\kappa \eps) \sin(2n\kappa x),$$  
we obtain 
\begin{equation} \label{zeta1pluszeta1}
\zeta_1(x+\ii\eps) + \zeta_1(x-\ii\eps) = 4\kappa\sum_{n=1}^\infty \bigg( \ee^{-2\kappa n\eps} \\
+  \frac{2q^{2n}}{1-q^{2n}} \cosh(2n\kappa \eps) \bigg) \sin(2n\kappa x).
\end{equation} 
Since $0 < q < 1$, the series in (\ref{zeta1pluszeta1}) converges absolutely and uniformly for $x \in \mathbb{R}$ for all sufficiently small $\epsilon > 0$. 
Using that $4\kappa/2\pi=1/\ell$, we obtain, for $x \in \mathbb{R}$,
\begin{align}\label{Tfxlimeps}
(Tf_x)(x) = &\; \lim_{\eps\to 0^+} \sum_{n=1}^\infty\bigg( \ee^{-2\kappa n\eps} 
 +  \frac{2q^{2n}}{1-q^{2n}}\cosh(2n\kappa \eps) \bigg) \frac{1}{\ell} \int_{-\ell}^\ell   \sin(2n\kappa (x'-x))
 f_x(x')\,\dd{x'},
\end{align} 
where we have used the uniform convergence of the series to interchange summation and integration.
By assumption, $f_x$ is $2\ell$-periodic and $C^2$ on $\R$. Thus, integrating by parts twice, we find
$$\int_{-\ell}^\ell   \sin(2n\kappa (x'-x))
 f_x(x')\,\dd{x'} = -\int_{-\ell}^\ell  \frac{\sin(2n\kappa (x'-x))}{(2n\kappa)^2}
 f_{xxx}(x')\,\dd{x'} = O(n^{-2})$$
 uniformly for $x \in \R$. In particular, for each $x \in \mathbb{R}$, the terms in the series in (\ref{Tfxlimeps}) are bounded by $C n^{-2}$ uniformly for all sufficiently small $\epsilon > 0$. We can therefore take the limit inside the sum in (\ref{Tfxlimeps}). Noting that $1+2q^{2n}/(1-q^{2n})=c_n^2+s_n^2$ for $n \geq 1$, this yields
\begin{equation} \label{TTseries} 
(Tf_x)(x) = \sum_{n=1}^{\infty} (c_n^2+s_n^2)\frac1{\ell}\int_{-\ell}^\ell \sin(2n\kappa (x'-x))f_x(x')\,\dd{x'}.
\end{equation} 
Computing  
\begin{multline}\label{fx} 
\frac1{\ell}\int_{-\ell}^\ell \sin(2n\kappa (x'-x))f_{x}(x')\dd{x'} = -2n\kappa \frac1{\ell}\int_{-\ell}^\ell \cos(2n\kappa (x'-x))f(x')\,\dd{x'} \\
=  -2n\kappa \frac1{2\ell}\int_{-\ell}^\ell \big(\ee^{-2\ii n\kappa (x-x')}  + \ee^{2\ii n\kappa (x-x')}  \bigr)f(x')\,\dd{x'} 
\end{multline} 
by partial integration and simplifying, we obtain the first identity in \eqref{TTseries2}. 

Let us prove the second identity in \eqref{TTseries2}. 
Using  \eqref{zeta1series2} and the second definition in \eqref{TT}, we find 
\begin{equation} 
(\tilde{T}f_x)(x) = \frac1{\pi}\int_{-\ell}^\ell \Big( -\ii\kappa +4\kappa\sum\limits_{n=1}^{\infty} \frac{q^n}{1-q^{2n}}\sin(2n\kappa (x'-x))\Big) f_x(x')\,\dd{x'} . 
\end{equation} 
Since $q^n/(1-q^{2n})=c_ns_{n}$ for $n \geq 1$ and $4\kappa/\pi=2/\ell$, we obtain 
\begin{equation}
(\tilde{T}f_x)(x)=-\frac{\ii}{2\ell}\int_{-\ell}^{\ell} f_x(x')\,\mathrm{d}x' + \sum_{n=1}^{\infty} 2c_ns_n\frac1{\ell}\int_{-\ell}^\ell \sin(2n\kappa (x'-x))f_x(x')\,\dd{x'}
\end{equation} 
where the interchange of summation and integration is justified because $0 < q < 1$. The first term on the right-hand side vanishes due to the $2\ell$-periodicity of $f$, so by inserting \eqref{fx} we obtain the result. 
\end{proof} 

\section{Integrability of the generalized eCS model}
\label{app:int}
There are several complementary arguments that strongly suggest the deformed eCS model is quantum integrable.
In Appendix~\ref{app:deCS}, we discuss these arguments. In Appendix~\ref{app:eCSgen},  we show that the generalized eCS model is quantum integrable if and only if the deformed eCS model defined in \eqref{deCS} is. 

\subsection{Deformed eCS model} 
\label{app:deCS} 
The eCS model has a relativistic quantum integrable generalization known as the elliptic Ruijsenaars model \cite{ruijsenaars1987}. 
There exists a deformed elliptic Ruijsenaars model which is quantum integrable as well \cite{hallnas2022}. 
Since the deformed eCS model is a limiting case of the deformed elliptic Ruijsenaars model, the results in Ref.~\cite{hallnas2022} strongly suggest that the deformed eCS model is quantum integrable. 
However, the limit from the Ruijsenaars models to CS models is delicate and, for this reason, it requires non-trivial work to deduce from the quantum integrability of the former the quantum integrability of the latter (see \cite[Section~4.3]{ruijsenaars1999}); to our knowledge, this work has not been done for the deformed Ruijsenaars model. 

Below we mention three further reasons to believe that the deformed eCS model is quantum integrable. 
\begin{enumerate} 
\item The quantum integrability of the deformed eCS model in the special case $M=1$, for general $N\in\Z_{\geq 1}$, was proved in Ref.~\cite{khodarinova2005}. 
\item An argument for the quantum integrability of the eCS model for general $N,M$ was recently given in \cite{chen2020}. 
\item The results in the present paper strongly suggest that the generalized eCS model in \eqref{eCSgen} is quantum integrable in the same strong sense as the eCS model. 
\end{enumerate} 

\subsection{Generalized eCS model} 
\label{app:eCSgen}
We show that the quantum integrability of the deformed eCS model in \eqref{deCS} implies the quantum integrability of the generalized eCS model in \eqref{eCSgen} (this is a generalization of a well-known argument going back to Calogero). 

We start with the deformed eCS Hamiltonian in \eqref{deCS} and analytically continue a group of the variables as follows,
\begin{equation} 
\label{substitution}
x_{j-N_1} = \tilde{x}_{j}+\ii\delta\quad (j-N_1=1,\ldots,M_1),\quad y_{j-N_2} = \tilde{y}_{j}+\ii\delta\quad (j-N_2=1,\ldots,M_2); 
\end{equation} 
one can check that this substitution gives the generalized eCS Hamiltonian \eqref{eCSgen}.

Quantum integrability of the deformed eCS model means that there is a sufficiently large family of commuting differential operators, including the one in \eqref{deCS}. Clearly, we can make the substitution \eqref{substitution} in all these differential operators and, by that, get a family of commuting differential operators for the generalized eCS model. Thus, the quantum integrability of the deformed eCS model implies the quantum integrability of the generalized eCS model.

\section{Proofs}
\label{app:proofs} 
We give proofs of certain results stated in the main text. 

\subsection{Commutator relations}\label{app:proofssec5}
We prove the commutator relations in \eqref{W2rrhor}--\eqref{W3rrhor}. 

We start by two implications of Definition~\ref{def:circ} and \eqref{fn}--\eqref{circ1}:  For $f(x)$ such that $f(x;\eps)\coloneqq (\delta_\eps*f)(x)$ is well-defined as a sesquilinear form on $\cD$, we have 
\begin{equation}\label{epsepsp}
\lim_{\eps'\to 0^+}\int_{-\ell}^\ell \delta(x-x';\eps+\eps')f(x';\eps')\,\dd{x}' = f(x;\eps) \quad (\eps>0) 
\end{equation} 
and 
\begin{equation}\label{epsepsp2}
\lim_{\eps'\to 0^+}\int_{-\ell}^\ell \delta(x-x';\eps+\eps')f(x';\eps')^2\,\dd{x}' = (f\circ f)(x;\eps) \quad (\eps>0). 
\end{equation}

To prove  \eqref{W2rrhor} we compute, using that $\xxa\rho_{r'}(x';\eps')^2\xxe$ and $\rho_{r'}(x';\eps')^2$ differ only by a well-defined $\C$-number function,  
\begin{multline}\label{rhor2rhorp}
[\rho_r(x;\eps),\xxa\rho_{r'}(x';\eps')^2\xxe] = [\rho_r(x;\eps),\rho_{r'}(x';\eps')^2]  \\
=  \rho_{r'}(x';\eps')[\rho_r(x;\eps),\rho_{r'}(x';\eps')] + [\rho_r(x;\eps),\rho_{r'}(x';\eps')]\rho_{r'}(x';\eps')\\
= -4\pi\ii r\delta_{r,r'}\partial_x\delta(x-x';\eps+\eps')  \rho_{r'}(x';\eps'), 
\end{multline} 
inserting \eqref{eq:CCR} in the last step. 
Thus we can compute, using \eqref{W2pm2} and \eqref{epsepsp},  
\begin{multline} 
\ii[\rho_r(x;\eps),W^2_{r'}] = \lim_{\eps'\to 0^+} \frac{\ii}{4\pi} \int_{-\ell}^{\ell} [\rho_r(x;\eps),\xxa\rho_{r'}(x';\eps')^2\xxe]\,\dd{x}' \\
= \lim_{\eps'\to 0^+} \frac{\ii}{4\pi} \int_{-\ell}^{\ell} \big( -4\pi\ii r\delta_{r,r'}\partial_x\delta(x-x';\eps+\eps')  \rho_{r'}(x';\eps') \big)\,\dd{x}'
\\ = r\delta_{r,r'}  \partial_x \lim_{\eps'\to 0^+} \int_{-\ell}^{\ell}\delta(x'-x;\eps+\eps')\rho_r(x';\eps')\,\dd{x}' = r\delta_{r,r'}\partial_x  \rho_r(x;\eps), 
\end{multline} 
which is equivalent to \eqref{W2rrhor}. 

To prove  \eqref{W3rrhor} we note that, by Definition~\ref{def:normalorder}, \eqref{BT}, and \eqref{rhor}, 
\begin{equation} 
\xxa\rho_r(x;\eps)^3\xxe \, =\,  \xxa\rho_r(x;\eps)^2\xxe\tilde\rho_r^-(x;\eps) + \tilde\rho_r^+(x;\eps)\xxa\rho_r(x;\eps)^2\xxe 
\end{equation} 
with $\tilde\rho_r^\pm(x;\eps)\coloneqq \kappa Q_r+\rho_r^\pm(x;\eps)$ where 
\begin{equation} \label{rhorplusminus}
\rho_r^\pm(x;\eps) \coloneqq \sum\limits_{n=1}^\infty 2\kappa \Big(\ee^{2\kappa(\mp \ii r nx-|n|\epsilon)} c_n a_{r,\mp n} - \ee^{2\kappa(\pm \ii r nx-|n|\epsilon)} s_n a_{-r ,\mp n} \Big)  
\end{equation} 
are the creation and annihilation parts of $\rho_r(x;\eps)$ for $+$ and $-$, respectively. Thus, 
\begin{multline} 
[\rho_r(x;\eps),\xxa\rho_{r'}(x';\eps')^3\xxe] 
= [\rho_r(x;\eps),\xxa\rho_{r'}(x';\eps')^2\xxe]\tilde\rho_{r'}^-(x';\eps') +\tilde\rho_{r'}^+(x';\eps')[\rho_r(x;\eps),\xxa\rho_{r'}(x';\eps')^2\xxe] \\ 
+ \xxa\rho_{r'}(x';\eps')^2\xxe [\rho_r(x;\eps),\tilde\rho_{r'}^-(x';\eps')] + [\rho_r(x;\eps),\tilde\rho_{r'}^+(x';\eps')] \xxa\rho_{r'}(x';\eps')^2\xxe \\
= [\rho_r(x;\eps),\xxa\rho_{r'}(x';\eps')^2\xxe]\tilde\rho_{r'}^-(x';\eps') +\tilde\rho_{r'}^+(x';\eps')[\rho_r(x;\eps),\xxa\rho_{r'}(x';\eps')^2\xxe] \\ 
+ [\rho_r(x;\eps),\rho_{r'}(x';\eps')] \xxa\rho_{r'}(x';\eps')^2\xxe, 
\end{multline} 
since  $[\rho_r(x;\eps),\tilde\rho_{r'}^\pm(y;\eps')]$ are well-defined $\C$-number functions and $\tilde\rho_{r'}^{+}(y;\eps')+\tilde\rho_{r'}^-(y;\eps')=\rho_{r'}(y;\eps)$. 
Inserting \eqref{eq:CCR}  and \eqref{rhor2rhorp} we get, using again Definition~\ref{def:normalorder}, 
\begin{multline} 
[\rho_{r}(x;\eps),\xxa\rho_{r'}(x';\eps')^3\xxe] = 
-4\pi\ii r\delta_{r,r'}\partial_x\delta(x-x';\eps+\eps')  \rho_{r'}(x';\eps')\tilde\rho_{r'}^-(x';\eps')  \\ 
-4\pi\ii r\delta_{r,r'}\partial_x\delta(x-x';\eps+\eps')  \tilde\rho_{r'}^+(x';\eps') \rho_{r'}(x';\eps')\\
- 2\pi\ii r\delta_{r,r'}\partial_x\delta(x-x';\eps+\eps') \xxa\rho_{r'}(x';\eps')^2\xxe \\
= - 6\pi\ii r\delta_{r,r'}\partial_x\delta(x-x';\eps+\eps') \xxa\rho_r(x';\eps')^2\xxe .
\end{multline} 
Thus, similarly as above, using \eqref{W3pm2} and \eqref{epsepsp2},  
\begin{multline} 
\ii[\rho_{r}(x;\eps),W^3_{r'}] = \lim_{\eps'\to 0^+} \frac{\ii}{6\pi} \int_{-\ell}^{\ell} [\rho_r(x;\eps),\xxa\rho_{r'}(x';\eps')^3\xxe]\,\dd{x}' \\
= \lim_{\eps'\to 0^+} \frac{\ii}{6\pi} \int_{-\ell}^{\ell} \big( -6\pi\ii r\delta_{r,r'}\partial_x\delta(x-x';\eps+\eps') \xxa \rho_r(x';\eps')^2\xxe \big)\,\dd{x}'
\\ = r\delta_{r,r'} \partial_x\lim_{\eps'\to 0^+} \int_{-\ell}^{\ell}\delta(x-y;\eps+\eps')\xxa \rho_r(x';\eps')^2\xxe\,\dd{x}' = r\delta_{r,r'}\partial_x \xxa(\rho_r\circ\rho_r)(x;\eps)\xxe, 
\end{multline} 
which is equivalent to \eqref{W3rrhor} since $\xxa\rho_r\circ\rho_{r,x}\xxe \; =\; \xxa\rho_{r,x}\circ\rho_r\xxe$; the latter is shown by the following computation, 
\begin{multline} 
\xxa(\rho_{r,x}\circ\rho_{r})(x;\eps)\xxe \, = \lim_{\eps\to 0^+}\int_{-\ell}^{\ell}\delta(x -x';\eps+\eps')\xxa \rho_{r,x'}(x';\eps')\rho_r(x';\eps')\xxe \dd{x}' \\
=  \lim_{\eps\to 0^+}\int_{-\ell}^{\ell}\delta(x-x';\eps+\eps')\xxa \rho_{r}(x';\eps')\rho_{r,x'}(x';\eps')\xxe\,\dd{x}' = \; \xxa(\rho_r\circ\rho_{r,x})(x;\eps)\xxe. 
\end{multline}

\subsection{Proof of Lemma \ref{lemma2}} \label{lemma2app}
We prove \eqref{W1commutators}--\eqref{W3commutators} in a unified way: by deriving commutation relations between $\ee^{t \rho_{r'}(x';\epsilon')}$ and $\phi_{r,\nu}(x;\epsilon)$, we can compute the commutator of $W_{k,r'}$ with the anyon $\phi_{r,\nu}(x;\epsilon)$ for any value of $k$. 

From (\ref{rhorplusminus}) and \eqref{Krpm}, we observe that
\begin{equation}\label{Kprimerho}
r (K_r^{\pm})'(x;\epsilon)=\rho_r^{\pm}(x;\epsilon),
\end{equation}
where here and below primes denotes differentiation with respect to the first argument of the function. Equations \eqref{anyondefinition}, \eqref{rhor}, and \eqref{Kprimerho} imply the identities
\begin{align}\label{Kprojectionderivatives}
&(\phi_{r,\nu})'(x;\epsilon)=-\ii\nu r \xxa \rho_r(x;\epsilon)\phi_{r,\nu}(x;\epsilon)\xxe
\end{align}
and
\begin{equation}\label{Kprojectionderivatives2}
(\phi_{r,\nu})''(x;\epsilon)=-\ii\nu r \xxa\rho_r'(x;\epsilon)\phi_{r,\nu}(x;\epsilon)\xxe-\nu^2\xxa\rho_r(x;\epsilon)^2\phi_{r,\nu}(x;\epsilon)\xxe, 
\end{equation}
which we will need later on. Next we compute, by differentiating \eqref{KpmtoCCt},
\begin{equation}\label{rhoKcommutator}
 r[\rho_{r}^{\mp}(x;\epsilon),K_{r'}^{\pm}(x';\epsilon')]=\begin{cases}
rC'(\pm r(x-x');\epsilon+\epsilon') & r=r' \\
\tilde{C}'(\pm(x-x');\epsilon+\epsilon') & r=-r'.	
\end{cases}	
\end{equation}

From \eqref{CCt}, we have
\begin{equation}
\begin{split}
C'(z;\epsilon) = &\; 2\kappa\ii \sum_{n=1}^{\infty} \big(  \ee^{2\kappa(\ii n z-|n|\epsilon)}c_n^2 - \ee^{2\kappa(-\ii n z-|n|\epsilon)}s_n^2 \big)
	\\
= &\; \frac{2\pi \ii}{2\ell} \sum_{n=1}^{\infty} \big( \ee^{2\kappa \ii n z} + 2i\sin(2\kappa n z)s_n^2 \big) \ee^{-2\kappa|n|\epsilon}
= 2\pi \ii \delta_-(z;\eps) - 4 \pi j(z;\eps) , 
	\\
\tilde{C}'(z;\epsilon) = &\; 2\kappa\ii \sum_{n=1}^{\infty} c_ns_n \big(\ee^{2\kappa(\ii n z-|n|\epsilon)}- \ee^{2\kappa(-\ii n z-|n|\epsilon)} \big)
	\\
= & - \frac{4\pi}{2\ell} \sum_{n=1}^{\infty} c_ns_n \sin(2\kappa n z) \ee^{-2\kappa|n|\epsilon}
= - 4 \pi \tilde{j}(z;\eps),
\end{split}
\end{equation}
where
\begin{equation}\label{deltapm} 
\delta_\pm(x;\eps)\coloneqq \frac1{2\ell}\sum_{n=1}^\infty \ee^{2\kappa(\mp \ii nx-|n|\eps)}\quad (x\in\R,\eps>0) , 
\end{equation} 
\begin{equation}\label{j1j4} 
\begin{split} 
j(x;\eps) \coloneqq & \frac1{2\ell} \sum_{n=1}^\infty s_n^2 \sin(2\kappa nx)\ee^{-2\kappa n\eps} 
	\\
\tilde{j}(x;\eps) \coloneqq &\frac1{2\ell} \sum_{n=1}^\infty c_ns_n \sin(2\kappa nx)\ee^{-2\kappa n\eps}
\end{split} \quad (x\in\R,\eps>0).
\end{equation} 
Hence,
\begin{equation}\label{rhosigmaKpmcommutators}
 r[\rho_{r}^{\mp}(x;\epsilon),K_{r'}^{\pm}(x';\epsilon')]=\begin{cases}
r(2\pi \ii \delta_-(\pm r(x-x');\epsilon+\epsilon') - 4 \pi j(\pm r(x-x');\epsilon+\epsilon')) & r=r' 
	\\
- 4 \pi \tilde{j}(\pm(x-x');\epsilon+\epsilon') & r=-r',	
\end{cases}	
\end{equation}

%\begin{equation}
%\begin{split}
%& r[\rho_{r}^{-}(x;\epsilon),K_{r'}^{+}(x';\epsilon')]=\begin{cases}
%r(2\pi \ii \delta_-(r(x-x');\epsilon+\epsilon') - 4 \pi j(r(x-x');\epsilon+\epsilon')) & r=r' 
%	\\
%- 4 \pi \tilde{j}(x-x';\epsilon+\epsilon') & r=-r',	
%\end{cases}	
%	\\
%& r [\rho_{r}^{+}(x;\epsilon), K_{r'}^{-}(x';\epsilon')]=\begin{cases}
%r(2\pi \ii \delta_-(r(x'-x);\epsilon+\epsilon') - 4 \pi j(r(x'-x);\epsilon+\epsilon')) & r=r' 
%	\\
%- 4 \pi \tilde{j}(x'-x;\epsilon+\epsilon') & r=-r',	
%\end{cases}	
%	\\
%& [K_{r}^{-}(x;\epsilon), \rho_{r'}^{+}(x';\epsilon')]=\begin{cases}
%-2\pi \ii \delta_-(r(x-x');\epsilon+\epsilon') + 4 \pi j(r(x-x');\epsilon+\epsilon') & r=r' 
%	\\
% -4 r \pi \tilde{j}(x-x';\epsilon+\epsilon') & r=-r',	
%\end{cases}	
%	\\
%& [\rho_{r'}^{-}(x';\epsilon'),K_{r}^{+}(x;\epsilon)]=\begin{cases}
%2\pi \ii \delta_+(r(x-x');\epsilon+\epsilon') + 4 \pi j(r(x-x');\epsilon+\epsilon') & r=r' 
%	\\
%-4 r \pi \tilde{j}(x-x';\epsilon+\epsilon') & r=-r',	
%\end{cases}	
%\end{split}
%\end{equation}

Using \eqref{phiexpanded}, the definition of normal ordering, the Baker-Campbell-Hausdorff formula \eqref{BCH}, and the commutators \eqref{rhosigmaKpmcommutators}, we obtain
\begin{align*}
\phi_{r,\nu}(x;\epsilon) \xxa \ee^{t\rho_{r'}(x';{\epsilon'})} \xxe
= &\; \ee^{-\ii r\nu \kappa Q_r x} R_r^{\nu/\nu_0} \ee^{-\ii r\nu \kappa Q_r x} 
\ee^{-\ii \nu K_r^+(x;\epsilon)} \ee^{-\ii \nu K_r^-(x;\epsilon)}
 \ee^{2t\kappa Q_{r'}} e^{t\rho_{r'}^+(x';{\epsilon'})} e^{t\rho_{r'}^-(x';{\epsilon'})} 
	\\
= &\;  \ee^{-t\kappa  \delta_{r,r'}  \nu}
\ee^{[-\ii \nu K_r^-(x;\epsilon), t\rho_{r'}^+(x';{\epsilon'})]}
\xxa \phi_{r,\nu}(x;\epsilon)  \ee^{t\rho_{r'}(x';{\epsilon'})} \xxe
	\\
= &\;  
\ee^{-\nu t (\kappa \delta_{r,r'} + \ii [K_r^-(x;\epsilon), \rho_{r'}^+(x';{\epsilon'})])}
\xxa \phi_{r,\nu}(x;\epsilon)  \ee^{t\rho_{r'}(x';{\epsilon'})} \xxe
	\\
= &\;  
\ee^{-\nu t \delta_{r,r'}(\kappa  +  2\pi  \delta_-(r(x-x');\epsilon+\epsilon') + 4 \pi \ii j(r(x-x');\epsilon+\epsilon'))}
\ee^{-\nu t \delta_{r,-r'}( -4 r \pi \ii \tilde{j}(x-x';\epsilon+\epsilon'))} \\
&\times 
\xxa \phi_{r,\nu}(x;\epsilon)  \ee^{t\rho_{r'}(x';{\epsilon'})} \xxe.
\end{align*}
and
\begin{align*}
\xxa \ee^{t\rho_{r'}(x';{\epsilon'})} \xxe \phi_{r,\nu}(x;\epsilon)
= &\;  \ee^{2t\kappa Q_{r'}} \ee^{t\rho_{r'}^+(x';{\epsilon'})} e^{t\rho_{r'}^-(x';{\epsilon'})} 
\ee^{-\ii r\nu \kappa Q_r x} R_r^{\nu/\nu_0} \ee^{-\ii r\nu \kappa Q_r x} 
\ee^{-\ii \nu K_r^+(x;\epsilon)} \ee^{-\ii \nu K_r^-(x;\epsilon)}
	\\
= &\;  e^{t\kappa  \delta_{r,r'}  \nu}
\ee^{[ t\rho_{r'}^-(x';{\epsilon'}), -\ii \nu K_r^+(x;\epsilon)]}
\xxa \phi_{r,\nu}(x;\epsilon)  \ee^{t\rho_{r'}(x';{\epsilon'})} \xxe
	\\
= &\;  
\ee^{\nu t (\kappa \delta_{r,r'} - \ii [\rho_{r'}^-(x';{\epsilon'}), K_r^+(x;\epsilon)])}
\xxa \phi_{r,\nu}(x;\epsilon)  \ee^{t\rho_{r'}(x';{\epsilon'})} \xxe
	\\
= &\;  
\ee^{\nu t \delta_{r,r'} (\kappa  + 2\pi \delta_+(r(x-x');\epsilon+\epsilon') - 4 \pi \ii j(r(x-x');\epsilon+\epsilon') 
)}
\ee^{\nu t \delta_{r,-r'} ( 4 r \pi \ii \tilde{j}(x-x';\epsilon+\epsilon'))}
\\
&\; \times \xxa \phi_{r,\nu}(x;\epsilon)  \ee^{t\rho_{r'}(x';{\epsilon'})} \xxe.
\end{align*}

We define the functions
\begin{equation}\label{Deltapm} 
\Delta_\pm(x;\eps)\coloneqq 
\kappa + 2\pi \delta_\pm(x;\eps)\mp 4\pi \ii j(x;\eps) \quad (x\in\R,\eps>0),
\end{equation} 
which provide decompositions of the regularized Dirac delta in \eqref{deltaeps} as follows, 
\begin{equation} 
\delta(x;\eps) = \frac1{2\ell} + \delta_+(x;\eps)+\delta_-(x;\eps)
= \frac{\Delta_+(x;\eps)+\Delta_-(x;\eps)}{2\pi} \quad (x\in\R,\eps>0). 
\end{equation} 
Then we can write the above as
%\begin{align*}
%\phi_{r,\nu}(x;\epsilon) \xxa \ee^{t\rho_{r'}(x';{\epsilon'})} \xxe
%= &\;  
%\ee^{-\nu t \delta_{r,r'}\Delta_-(r(x-x');\epsilon+\epsilon')}
%\ee^{-\nu t \delta_{r,-r'}( -4 \pi i \tilde{j}(r(x-x');\epsilon+\epsilon'))}
%\xxa \phi_{r,\nu}(x;\epsilon)  \ee^{t\rho_{r'}(x';{\epsilon'})} \xxe.
%	\\
%\xxa \ee^{t\rho_{r'}(x';{\epsilon'})} \xxe \phi_{r,\nu}(x;\epsilon)
%= &\;  
%\ee^{\nu t \delta_{r,r'} \Delta_+(r(x-x');\epsilon+\epsilon')}
%\ee^{\nu t \delta_{r,-r'} ( 4 \pi i \tilde{j}(r(x-x');\epsilon+\epsilon'))}
%\xxa \phi_{r,\nu}(x;\epsilon)  \ee^{t\rho_{r'}(x';{\epsilon'})} \xxe
%\end{align*}
\begin{equation}\label{anyonexpexchange}
\begin{split}
&\phi_{r,\nu}(x;\epsilon) \xxa \ee^{t\rho_{r'}(x';{\epsilon'})}\xxe \; =
\begin{cases}
\ee^{-\nu t\Delta_-(r(x-x');\epsilon+\epsilon')}\xxa\phi_{r,\nu}(x;\epsilon)\ee^{t\rho_{r'}(x';{\epsilon'})}\xxe, & r = r', \\
 \ee^{4\pi \ii\nu t \tilde{j}(r(x-x');\epsilon+\epsilon')}\xxa\phi_{r,\nu}(x;\epsilon)\ee^{t\rho_{r'}(x';{\epsilon'})}\xxe, & r = -r',
 \end{cases}
	\\
& \xxa \ee^{t\rho_{r'}(x';{\epsilon'})}\xxe \phi_{r,\nu}(x;\epsilon)
= \begin{cases}
\ee^{\nu t\Delta_+(r(x-x');\epsilon+\epsilon')} \xxa\phi_{r,\nu}(x;\epsilon)\ee^{t\rho_{r'}(x';{\epsilon'})}\xxe, & r=r',	\\
\ee^{4\pi \ii\nu t \tilde{j}(r(x+x');\epsilon+\epsilon')} \xxa\phi_{r,\nu}(x;\epsilon)\ee^{t\rho_{r'}(x';{\epsilon'})}\xxe& r = -r'. 
\end{cases}
\end{split}
\end{equation}
The commutators
\begin{equation}
\label{exp_commutators}
[\phi_{r,\nu}(x;\epsilon),\xxa \ee^{t\rho_{r'}(x';\epsilon')}\xxe]= \delta_{r,r'} \big(\ee^{-\nu t\Delta_-(r(x-x');\epsilon+\epsilon')}-\ee^{\nu t\Delta_+(r(x-x');\epsilon+\epsilon')}\big)\xxa\phi_{r, \nu}(x;\epsilon)\ee^{t\rho_{r'}(x';{\epsilon'})}\xxe
\end{equation}
follow immediately from \eqref{anyonexpexchange}, and from here  the commutators of the anyons with $W^k$, $\tilde{W}^k$ will follow by differentiation. We start with the case $k=1$; we differentiate \eqref{exp_commutators} with respect to $t$ and set $t=0$ to obtain
\begin{align}\nonumber
[\phi_{r,\nu}(x;\epsilon),\rho_{r'}(x';{\epsilon'})] 
& = \delta_{r,r'} \big(-\nu \Delta_-(r(x-x');\epsilon+\epsilon') - \nu \Delta_+(r(x-x');\epsilon+\epsilon')\big)\xxa\phi_{r, \nu}(x;\epsilon) \xxe
	\\\label{k1_commutators}
& = -\delta_{r,r'} 2\pi\nu\delta(r(x-x');\epsilon+\epsilon')\phi_{r, \nu}(x;\epsilon),
\end{align}
where we have used that
\begin{align}\label{Deltasum}
\Delta_-(z;\epsilon)+\Delta_+(z;\epsilon)= 2\pi\delta(z;\epsilon). 
\end{align}
Integrating \eqref{k1_commutators} with respect to $x'$ from $-\ell$ to $\ell$ and using
\begin{equation}\label{eq:deltaint}
\int_{-\ell}^{\ell} \delta(r(x-x');\epsilon)\,\mathrm{d}x'=1,
\end{equation}
we obtain \eqref{W1commutators} in the limit $\epsilon'\to 0^+$.

We now consider the case $k=2$. We differentiate \eqref{exp_commutators} twice with respect to $t$ before setting $t=0$. Employing (\ref{Deltasum}) and the identity
\begin{align}\label{Deltaminus2Deltaplus2}
\Delta_-(z;\epsilon)^2-\Delta_+(z;\epsilon)^2= (4\pi)^2\ii\delta(z;\epsilon)j(z;\epsilon)-2\pi\ii\delta'(z;\epsilon),
\end{align}
we obtain
\begin{equation}
\label{k2_commutators}
\begin{split}
&[\phi_{r,\nu}(x;\epsilon),\xxa\rho_{r'}(x';\epsilon')^2\xxe]= \\
&\delta_{r,r'} \nu^2\big(\Delta_-(r(x-x');\epsilon+\epsilon')^2 - \Delta_+(r(x-x');\epsilon+\epsilon')^2\big)\xxa\phi_{r, \nu}(x;\epsilon)\xxe
	\\
& -2 \delta_{r,r'} \nu \big(\Delta_-(r(x-x');\epsilon+\epsilon') + \Delta_+(r(x-x');\epsilon+\epsilon')\big)\xxa\phi_{r, \nu}(x;\epsilon)\rho_{r'}(x';{\epsilon'})\xxe \; =
	\\
	& \delta_{r,r'} \nu^2 \left[(4\pi)^2 \ii \delta(r(x-x');\epsilon+\epsilon')j(r(x-x');\epsilon+\epsilon')-2\pi \ii\delta'(r(x-x');\epsilon+\epsilon')\right]\phi_{r,\nu}(x;\epsilon)
	\\
& -4\delta_{r,r'}\pi\nu\delta(r(x-x');\epsilon+\epsilon')\xxa\rho_r(x';\epsilon')\phi_{r,\nu}(x;\epsilon)\xxe.
\end{split}
\end{equation}
Again integrating $x'$ between $-\ell$ and $\ell$, we use the integrals
\begin{equation}\label{integrals2}
\begin{split}
&\int_{-\ell}^{\ell} \delta(r(x-x');\epsilon+\epsilon')j(r(x-x');\epsilon+\epsilon')\,\mathrm{d}x'=0,
	\\
&\int_{-\ell}^{\ell} {\delta'(r(x-x');\epsilon +\epsilon')}\,\mathrm{d}x'=0, 
	\\
& \int_{-\ell}^{\ell} \delta(r(x-x');\epsilon+\epsilon')\rho_r(x';\epsilon')\,\mathrm{d}x'= \rho_r(x; \epsilon+2\epsilon'),
\end{split}
\end{equation}
to obtain 
\begin{equation}
\begin{split}
\lim_{\epsilon' \to 0} \frac{1}{4\pi} \int_{-\ell}^{\ell} [\phi_{r,\nu}(x;\epsilon), \xxa\rho_{r'}(x';\epsilon')^2\xxe] \,\mathrm{d}x'
&=-\delta_{r,r'} \nu \lim_{\epsilon' \to 0}  \xxa\rho_r(x;\epsilon + 2\epsilon')\phi_{r,\nu}(x;\epsilon)\xxe
\\
&=-\delta_{r,r'} \nu \xxa\rho_r(x;\epsilon)\phi_{r,\nu}(x;\epsilon)\xxe.
\end{split}
\end{equation}
Recalling (\ref{Kprojectionderivatives}), this becomes
$$[\phi_{r,\nu}(x;\epsilon), W_{2,r'}] =-\ii r\delta_{r,r'} (\phi_{r,\nu})'(x;\epsilon),$$
which is \eqref{W2commutators}. 

We lastly consider the case $k=3$. Differentiating \eqref{exp_commutators} three times with respect to $t$ and setting $t=0$, we obtain
\begin{multline}\label{X1X2X3}
[{\phi}_{r,\nu}(x),\xxa\rho_{r'}(x';\epsilon')^3\xxe]= 
- \nu^3 \delta_{r,r'} \big(\Delta_-(r(x-x');\epsilon+\epsilon')^3 + \Delta_+(r(x-x');\epsilon+\epsilon')^3\big)\xxa\phi_{r, \nu}(x;\epsilon) \xxe
	 \\
 + 3\nu^2\delta_{r,r'} \big(\Delta_-(r(x-x');\epsilon+\epsilon')^2 - \Delta_+(r(x-x');\epsilon+\epsilon')^2\big)\xxa\phi_{r, \nu}(x;\epsilon)\rho_{r'}(x';{\epsilon'})\xxe
	\\
 -3\nu \delta_{r,r'} \big(\Delta_-(r(x-x');\epsilon+\epsilon') + \Delta_+(r(x-x');\epsilon+\epsilon')\big)\xxa\phi_{r, \nu}(x;\epsilon)\rho_{r'}(x';{\epsilon'})^2\xxe
	\\
=:  X_1+X_2+ X_3.
\end{multline}
Applying the identity \cite{langmann2004}
\begin{align}
\Delta_-(z;\epsilon)^3+\Delta_+(z;\epsilon)^3
= -96\pi^3\delta(z;\epsilon)j(z;\epsilon)^2
+ 24\pi^2 \delta'(z;\epsilon)j(z;\epsilon)
- \pi \delta''(z;\epsilon)
+ 2\pi \kappa^2 \delta(z;\epsilon),
\end{align}
and (\ref{Deltaminus2Deltaplus2}) and (\ref{Deltasum}) in $X_1, X_2$, and $X_3$, respectively, we can write
\begin{equation}
\label{k3_commutators}
\begin{split}
X_1 = & - \nu^3 \delta_{r,r'} \Big(-96\pi^3\delta(r(x-x'); \epsilon + \epsilon')j(r(x-x'); \epsilon + \epsilon')^2
	\\
& + 24\pi^2 \delta'(r(x-x'); \epsilon + \epsilon')j(r(x-x'); \epsilon + \epsilon'),
	\\
&
 - \pi \delta''(r(x-x'); \epsilon + \epsilon')
+ 2\pi \kappa^2 \delta(r(x-x'); \epsilon + \epsilon')\Big) \phi_{r, \nu}(x;\epsilon) ,
	\\
X_2 = &\; 3\nu^2\delta_{r,r'} \big((4\pi)^2\ii\delta(r(x-x');\epsilon+\epsilon')j(r(x-x');\epsilon+\epsilon')
	\\
& -2\pi\ii\delta'(r(x-x');\epsilon+\epsilon')\big)\xxa\phi_{r, \nu}(x;\epsilon)\rho_{r}(x';{\epsilon'})\xxe
	\\
X_3 = & -3\nu \delta_{r,r'} 2\pi\delta(r(x-x');\epsilon+\epsilon')\xxa\phi_{r, \nu}(x;\epsilon)\rho_{r}(x';{\epsilon'})^2\xxe.
\end{split}
\end{equation}
Integrating $X_1$ with respect to $x'$ from $-\ell$ to $\ell$ using the integrals 
\begin{equation}
\begin{split}
&\begin{multlined} \int_{-\ell}^{\ell} \delta(r(x-x');\epsilon + \epsilon')j(r(x-x');\epsilon + \epsilon')^2\,\mathrm{d} {x'} \\
= \frac{1}{2(2\ell)^2}\sum_{n,m=1}^{\infty}s_n^2s_m^2\ee^{-2\kappa(n+m)(\epsilon+\epsilon')}\big( \ee^{-2\kappa |n-m|(\epsilon+\epsilon')}-\ee^{-2\kappa(n+m)(\epsilon+\epsilon')}  \big), 
\end{multlined}	\\
&\int_{-\ell}^{\ell}\delta'(r(x-x');\epsilon+\epsilon')j(r(x-x');\epsilon+\epsilon')\,\mathrm{d}x'
=-j'(0;2(\epsilon+\epsilon')), 
	\\
&\int_{-\ell}^{\ell} \delta''(r(x-x');\epsilon + \epsilon')\,\mathrm{d}x'=0, 
	\\
&\int_{-\ell}^{\ell} \delta(r(x-x');\epsilon)\,\mathrm{d}x'=1,
\end{split}
\end{equation}
we obtain
\begin{align*}
\lim_{\epsilon' \to 0}\int_{-\ell}^\ell X_1 \,\dd{x'} = & 
- \nu^3 \delta_{r,r'} \Big(-96\pi^3\frac{1}{2(2\ell)^2}\sum_{n,m=1}^{\infty}s_n^2s_m^2\ee^{-2\kappa(n+m)\epsilon}\big( \ee^{-2\kappa |n-m|\epsilon}-\ee^{-2\kappa(n+m)\epsilon}  \big)
	\\
& - 24\pi^2j'(0;2\epsilon)
+ 2\pi \kappa^2 \Big) \phi_{r, \nu}(x;\epsilon).
\end{align*}
Integrating $X_2$ with respect to $x'$ from $-\ell$ to $\ell$ using the integrals
\begin{equation}
\begin{split}
&\lim_{\epsilon' \to 0}\int_{-\ell}^{\ell}\delta(r(x-x');\epsilon+\epsilon')j(r(x-x');\epsilon+\epsilon')\rho_{r}(x';\epsilon')\,\mathrm{d}x' 
	\\
& = 
- \frac{\ii \pi}{(2\ell)^2}\sum_{n,m=1}^{\infty} s_n^2\left(\ee^{2\kappa \ii m r x} b_{r,m}-\ee^{-2\kappa \ii m r x} b_{r,-m}\right)\left(\ee^{-2\kappa |n-m|\epsilon}-\ee^{-2\kappa (n+m)\epsilon} \right)\ee^{-2\kappa n\epsilon}
\end{split}
\end{equation}
and
$$\int_{-\ell}^{\ell} \delta'(r(x-x');\epsilon+\epsilon') \rho_{r}(x';\epsilon') \,\mathrm{d}x' = r\rho_r'(x;\epsilon + 2\epsilon'),$$
we obtain
\begin{align*}
\lim_{\epsilon' \to 0}\int_{-\ell}^{\ell} X_2\, \dd{x'} = &\; 3\nu^2\delta_{r,r'} (4\pi)^2\ii 
\xxa \bigg(- \frac{\ii \pi}{(2\ell)^2}\sum_{n,m=1}^{\infty} s_n^2\left(\ee^{2\kappa \ii m r x} b_{r,m}-\ee^{-2\kappa \ii m r x} b_{r,-m}\right)
	\\
&\times \left(\ee^{-2\kappa |n-m|\epsilon}-\ee^{-2\kappa (n+m)\epsilon} \right)\ee^{-2\kappa n\epsilon}\bigg)\phi_{r, \nu}(x;\epsilon) \xxe
	\\
& - 3\nu^2\delta_{r,r'} 2\pi\ii  \xxa r \rho_r'(x;\epsilon) \phi_{r, \nu}(x;\epsilon) \xxe. 
\end{align*}
Integrating $X_3$ with respect to $x'$ from $-\ell$ to $\ell$ using the following identity which is a consequence of \eqref{epsepsp2}: 
\begin{equation}
\lim\limits_{\epsilon'\rightarrow 0^+}\int_{-\ell}^{\ell}\delta(r(x-x');\epsilon+\epsilon')\rho_r(x';\epsilon')^2\,\mathrm{d}x'
=(2\kappa)^2 \sum_{n,m\in\mathbb{Z}}\ee^{2\kappa \ii(n+m) r x - 2\kappa |n+m|\epsilon} \tilde{b}_{r,n} \tilde{b}_{r,m}
\end{equation}
with
\begin{equation}\label{btilde}
\tilde{b}_{r,n} := \begin{cases} b_{r,n}, & n \neq 0, \\ Q_r = \nu_0 b_{r,0}, & n = 0, \end{cases}
\end{equation}
we obtain
\begin{align*}
\lim_{\epsilon' \to 0}\int_{-\ell}^{\ell}  X_3\, \dd{x'} = & -3\nu \delta_{r,r'} 2\pi (2\kappa)^2 \xxa\bigg(\sum_{n,m\in\mathbb{Z}}\ee^{2\kappa \ii(n+m)r x - 2\kappa |n+m|\epsilon}   \tilde{b}_{r,n} \tilde{b}_{r,m}\bigg) \phi_{r, \nu}(x;\epsilon) \xxe.
\end{align*}
Combining the above and using the identity
\begin{equation*}
j'(0;2\epsilon) = \frac{2\kappa^2}{\pi} \sum_{n=1}^{\infty}n s_n^2 \ee^{-4\kappa n\epsilon},
\end{equation*}
we arrive at
\begin{align}\nonumber \label{W3rphicommutatorsimplified}
& [W_{3,r'},\phi_{r,\nu}(x;\epsilon)]
=  -\frac{1}{6\pi} \lim_{\epsilon' \to 0}\int_{-\ell}^{\ell} (X_1+X_2+ X_3)\, \dd{x'} = Y_1 + Y_2 + Y_3,
\end{align}
where
\begin{align*}
Y_1 \coloneqq &\; \kappa^2 \nu^3  \delta_{r,r'} \Bigg(-8 \sum_{n,m=1}^{\infty}s_n^2s_m^2\ee^{-2\kappa(n+m)\epsilon}\big( \ee^{-2\kappa |n-m|\epsilon}-\ee^{-2\kappa(n+m)\epsilon}  \big)
	\\
& - 8 \sum_{n=1}^{\infty}n s_n^2 \ee^{-4\kappa n\epsilon}
+ \frac{1}{3} \Bigg) \phi_{r, \nu}(x;\epsilon)
	\\\nonumber
Y_2 \coloneqq & - 8 \kappa^2\nu^2\delta_{r,r'}  
\xxa  \sum_{n,m=1}^{\infty} s_n^2\left(\ee^{2\kappa \ii m r x} b_{r,m}-\ee^{-2\kappa \ii m r x} b_{r,-m}\right)
	\\
&\times \left(\ee^{-2\kappa |n-m|\epsilon}-\ee^{-2\kappa (n+m)\epsilon} \right)\ee^{-2\kappa n\epsilon} \phi_{r, \nu}(x;\epsilon) \xxe
	\\
Y_3 \coloneqq & \; \ii r \nu^2\delta_{r,r'} \xxa \rho_r'(x;\epsilon) \phi_{r, \nu}(x;\epsilon) \xxe
	 + \delta_{r,r'} \nu  \xxa\bigg((2\kappa)^2\sum_{n,m\in\mathbb{Z}}\ee^{2\kappa \ii(n+m)rx - 2\kappa |n+m|\epsilon}   \tilde{b}_{r,n} \tilde{b}_{r,m}\bigg) \phi_{r, \nu}(x;\epsilon) \xxe.
\end{align*}
Clearly, $Y_1 = \nu^3 \delta_{r,r'} c_\epsilon  \phi_{r, \nu}(x;\epsilon)$, where $c_\epsilon$ is given by (\ref{ceps}).

The next step is to simplify the expression for $Y_3$. 
Using the definition \eqref{btilde} of $\tilde{b}_{r,n}$ and (\ref{def:chiralbosons}), we write
$$\rho_r (x;\eps) =  \sum\limits_{n\in\Z} 2\kappa \ee^{2\kappa(\ii r nx-|n|\epsilon)} \tilde{b}_{r,n}  \quad (r=\pm).$$
Thus,
\begin{equation}\label{Edef}
\xxa (2\kappa)^2 \sum_{n,m\in \Z} \ee^{2\kappa( \ii (n+m)x-|n+m|\epsilon)} \tilde{b}_{r,n}\tilde{b}_{r,m}-\rho_r(x;\epsilon)^2\xxe \; =E, 	
\end{equation}
where $E$ is short-hand notation for
\begin{align*}
E & \coloneqq (2\kappa)^2  \sum_{n,m\in\mathbb{Z}} \ee^{2\kappa \ii(n+m)r x}\left(\ee^{-2\kappa |n+m|\epsilon}-\ee^{-2\kappa (|n|+|m|)\epsilon}\right)\xxa \tilde{b}_{r,n} \tilde{b}_{r,m}\xxe  
	\\ 
 & = - 2 (2\kappa)^2\sum_{n,m=1}^{\infty} \ee^{-2\kappa \ii(n-m) r x}\left(\ee^{-2\kappa (n+m)\epsilon}-\ee^{-2\kappa |n-m|\epsilon}\right)\xxa b_{r,-n} b_{r,m}\xxe.
\end{align*}
Utilizing \eqref{Kprojectionderivatives2} and \eqref{Edef} and adding and subtracting $\ii r \delta_{r,r'}\xxa \rho_{r}'(x;\epsilon)\phi_{r,\nu}(x;\epsilon)\xxe $, we may write
\begin{align*}\nonumber
Y_3 = &\; \ii r \nu^2\delta_{r,r'} \xxa \rho_r'(x;\epsilon) \phi_{r, \nu}(x;\epsilon) \xxe
 + \delta_{r,r'} \nu  \xxa \bigg((2\kappa)^2\sum_{n,m\in\mathbb{Z}}\ee^{2\kappa \ii(n+m)x - 2\kappa |n+m|\epsilon}   \tilde{b}_{r,n} \tilde{b}_{r,m}\bigg) \phi_{r, \nu}(x;\epsilon) \xxe
 	\\
= &\; \ii r (\nu^2 - 1)\delta_{r,r'} \xxa \rho_r'(x;\epsilon) \phi_{r, \nu}(x;\epsilon) \xxe
+ \ii r\delta_{r,r'} \xxa \rho_r'(x;\epsilon) \phi_{r, \nu}(x;\epsilon) \xxe
	\\
&  + \delta_{r,r'} \nu  \xxa \rho_r(x;\epsilon)^2  \phi_{r, \nu}(x;\epsilon) \xxe
 + \delta_{r,r'} \nu  \xxa E \phi_{r, \nu}(x;\epsilon) \xxe
 	\\
= &\; 
\ii r (\nu^2 - 1)\delta_{r,r'} \xxa \rho_r'(x;\epsilon) \phi_{r, \nu}(x;\epsilon) \xxe
- \frac{1}{\nu} \delta_{r,r'} (\phi_{r,\nu})''(x;\epsilon) 
 + \delta_{r,r'} \nu  \xxa E \phi_{r, \nu}(x;\epsilon) \xxe.
\end{align*}

To establish (\ref{W3commutators}), it only remains to show that $Y_2$ together with the above term $\delta_{r,r'} \nu  \xxa E \phi_{r, \nu}(x;\epsilon) \xxe$ combine to give the term with $\mathcal{R}_{r,\nu}$ in (\ref{W3commutators}); more precisely, it remains to show that
\begin{align*}
& 2\nu\xxa \mathcal{R}_{r,\nu}(x;\epsilon)\phi_{r,\nu}(x;\epsilon)\xxe
\;  =  \nu  \xxa E \phi_{r, \nu}(x;\epsilon) \xxe  + Y_2.
\end{align*}
But this identity holds provided that
\begin{align*}
  \mathcal{R}_{r,\nu}(x;\epsilon)
 = &\;  \frac{1}{2}  E
 - 4 \kappa^2\nu   \sum_{n,m=1}^{\infty} s_n^2\left(\ee^{2\kappa \ii mr x} b_{r,m}-\ee^{-2\kappa \ii m r x} b_{r,-m}\right) \left(\ee^{-2\kappa |n-m|\epsilon}-\ee^{-2\kappa (n+m)\epsilon} \right)\ee^{-2\kappa n\epsilon} 
 	\\
  = &\;  (2\kappa)^2\sum_{n,m=1}^{\infty} \bigg( \nu  s_n^2\left(\ee^{2\kappa \ii mr x} b_{r,m}-\ee^{-2\kappa \ii m r x} b_{r,-m}\right) \ee^{-2\kappa n\epsilon}
-\ee^{-2\kappa \ii(n-m) r x}\xxa b_{r,-n} b_{r,m}\xxe \bigg)
	\\
& \times \left(\ee^{-2\kappa (n+m)\epsilon} - \ee^{-2\kappa |n-m|\epsilon} \right),
\end{align*}
which agrees with the expression \eqref{Rrnuxepsilondef} for $\mathcal{R}_{r,\nu}(x;\epsilon)$. This completes the proof of the lemma.

\subsection{Proof of Lemma \ref{Clemma}}\label{Clemmaapp}
We first establish the following:

\textit{For all sesquilinear forms $\Phi$ of the form (\ref{Phidef}), we have}
\begin{equation}\label{lemma3result}
\mathcal{C}\Phi+\Phi\mathcal{C}=2 \xxa\Phi \mathcal{C}\xxe-\mathrm{i}\xxa J''(\alpha,\beta)\Phi\xxe,
\end{equation}
\textit{where}
\begin{equation}
J''(\alpha) {\coloneqq} -
(2\kappa)^2  \sum_{r=\pm}\sum_{n\in\mathbb{Z}}   
n^2 \alpha_{r, n}b_{r,-n}.
\end{equation}

This is proved as follows. Let $\Phi$ be of the form (\ref{Phidef}). By \eqref{Phinormalordered},
\begin{equation}
\Phi = \ee^{\ii \sum_{r=\pm}\alpha_{r,0} Q_r/2} R_+^{\mu_+}R_-^{\mu_-} \ee^{\ii \sum_{r=\pm}\alpha_{r,0} Q_r/2}\ee^{\ii J^{+}(\alpha)}\ee^{\ii J^{-}(\alpha)},
\end{equation}
where $J^{\pm}(\alpha)$ are defined in (\ref{projectionsdefinition}).
We now assume $r \in \{\pm\}$ and $n>0$. By (\ref{Phinormalordered}) and (\ref{BCH}), we have $e^{t a_{r,\pm n}}\Phi = e^{\mathrm{i}t[a_{r,\pm n},J^{\pm}(\alpha)]}\Phi e^{ta_{r,\pm n}}$, and so
\begin{equation}
a_{r,\pm n}\Phi= \partial_t e^{t a_{r,\pm n}}\Phi |_{t=0}=\partial_te^{\mathrm{i}t[a_{r, \pm n},J^{\pm}(\alpha)]}\Phi e^{ta_{r,\pm n}} |_{t=0}
= \Phi a_{r,\pm n}+\mathrm{i}[a_{r, \pm n},J^{\pm}(\alpha)] \Phi.
\end{equation}
Using the commutators
\begin{align*}
[a_{r,\pm n},J^{\pm}(\alpha)]
& = \bigg[a_{r,\pm n}, \sum_{r'=\pm}\sum_{m=1}^{\infty} ( \alpha_{r', \pm m}c_m - \alpha_{-r',\mp m}s_m) a_{r', \mp m}\bigg]
 = \pm n ( \alpha_{r, \pm n}c_n - \alpha_{-r,\mp n}s_n),
\end{align*}
we find
\begin{align}\label{aPhicommutators}
[a_{r,\pm n}, \Phi ] = \pm \mathrm{i}  n ( \alpha_{r, \pm n}c_n - \alpha_{-r,\mp n}s_n)  \Phi. \end{align}
Recalling the definition of normal ordering and the definition \eqref{cC} of $\mathcal{C}$, we have
\begin{equation}
\xxa \mathcal{C}\Phi \xxe
\; =
(2\kappa)^2 \sum_{r=\pm}\sum_{n=1}^{\infty} na_{r,-n}\Phi a_{r,n},
\end{equation}
which together with \eqref{cC} and (\ref{aPhicommutators}) implies 
\begin{align*}
\mathcal{C}\Phi+\Phi\mathcal{C}-2\xxa\mathcal{C}\Phi\xxe \,
= &\; (2\kappa)^2 \sum_{r=\pm}\sum_{n=1}^{\infty} n \Big(a_{r,-n} a_{r,n} \Phi + \Phi a_{r,-n} a_{r,n} - 2 a_{r,-n} \Phi a_{r,n}  \Big)
	\\
= &\; (2\kappa)^2 \sum_{r=\pm}\sum_{n=1}^\infty n\Big(a_{r,-n}[a_{r,n}, \Phi] 
 + [\Phi, a_{r,-n}]a_{r,n} \Big)
	\\
= &\;  (2\kappa)^2  \sum_{r=\pm}\sum_{n=1}^\infty \mathrm{i} 
n^2 \Big(a_{r,-n} ( \alpha_{r, n}c_n - \alpha_{-r, -n}s_n)  \Phi + ( \alpha_{r, - n}c_n - \alpha_{-r, n}s_n)  \Phi a_{r,n} \Big)
	\\
%= &\;  (2\kappa)^2  \sum_{r=\pm}\sum_{n=1}^\infty \mathrm{i} n^2 \xxa \Big(\alpha_{r, n}c_n a_{r,-n}  - \alpha_{-r, -n}s_n a_{r,-n}  +  \alpha_{r, - n} c_n a_{r,n} - \alpha_{-r, n}s_n a_{r,n} \Big)\Phi\xxe
%	\\
%= &\;  (2\kappa)^2  \sum_{r=\pm}\sum_{n\in\mathbb{Z}} \mathrm{i} n^2 \xxa \Big(\alpha_{r, n} c_n a_{r,-n} - \alpha_{-r, n}s_n a_{r,n} \Big)\Phi\xxe
%	\\
= &\;  (2\kappa)^2  \sum_{r=\pm}\sum_{n\in\mathbb{Z}}\mathrm{i} 
n^2 \xxa \Big(- \alpha_{r, n}s_n a_{-r, n}  +  \alpha_{r, n} c_n a_{r,-n}  \Big)\Phi\xxe
	\\
= &\;  (2\kappa)^2 \sum_{r=\pm} \sum_{n\in\mathbb{Z}}  \mathrm{i} 
n^2 \xxa \alpha_{r, n}b_{r,-n}  \Phi \xxe,
\end{align*}
which is equivalent to \eqref{lemma3result}.

Now let $\Phi \coloneqq \; \xxa \phi^N_{\vr,\vm}(\vx;\eps) \xxe $.
As in \eqref{Phidef}, we can write
$$\Phi = \; \xxa R_+^{\mu_+}R_-^{{\mu_-}}\ee^{\ii J}\xxe$$
where 
$$\mu_\pm = \sum_{j = 1}^N \delta_{\pm,r_j} m_j \nu/\nu_0$$ 
and
\begin{align*}
J &= -\sum_{j=1}^Nm_j\nu (2r_j \kappa Q_{r_j} x_j+K_{r_j}(x_j;\epsilon))
	\\
& = -\sum_{j=1}^Nm_j\nu 2r_j \kappa Q_{r_j} x_j
-\sum_{j=1}^Nm_j\nu \sum_{n\in\Z_{\neq 0}}\frac{1}{\ii n} \ee^{2\kappa(\ii  n r_j x_j -|n|\epsilon)} b_{r_j,n}.
\end{align*}
In this case, since
$$
\rho_r'(x;\eps) = (2\kappa)^2 \sum_{n\in\Z_{\neq 0}} \ii r n \ee^{2\kappa(\ii r nx-|n|\epsilon)} b_{r,n},$$
we have
\begin{align}
J''=& -\ii (2\kappa)^2 \sum_{j=1}^Nm_j\nu \sum_{n\in\Z_{\neq 0}} n \ee^{2\kappa(\ii  n r_j x_j -|n|\epsilon)} b_{r_j,n}
= -\sum_{j=1}^N m_j\nu r_j \rho_{r_j}'(x_j;\epsilon),
\end{align}
so (\ref{lemma3result}) can be written as
\begin{align}\label{CPhiPhiC}
\mathcal{C}\Phi + \Phi\mathcal{C} 
= 2\xxa \Phi \mathcal{C} \xxe+ \ii \sum_{j=1}^N m_j\nu r_j \xxa \rho_{r_j}'(x_j;\epsilon) \Phi \xxe.
\end{align}

By (\ref{normalorderanyons}), we have
\begin{align}\label{phiBPhi}
\phi^N_{\vr, \vm} = B \Phi,
\end{align}
where we use the shorthand notation
$$\phi^N_{\vr, \vm} \equiv \phi^N_{\vr, \vm}(\vx;\eps), \qquad B \equiv B^N_{\vr, \vm}(\vx;\eps) := \prod_{1\leq j<k\leq N}\tet_{r_j,r_k}(x_j-x_k;\eps_j+\eps_k)^{m_jm_k \nu^2}.$$
Hence, using \eqref{CPhiPhiC}, we find
\begin{align*}\nonumber
[\mathcal{C}, \phi^N_{\vr,\vm}]
= & -2\phi^N_{\vr,\vm}\mathcal{C} + \mathcal{C}\phi^N_{\vr,\vm}
+ \phi^N_{\vr,\vm} \mathcal{C} 
	\\
=&  -2\phi^N_{\vr,\vm} \mathcal{C} 
+ B\big(\mathcal{C} \Phi + \Phi \mathcal{C} \big)
	\\
=&  -2\phi^N_{\vr,\vm} \mathcal{C} 
+ 2 \xxa B \Phi \mathcal{C} \xxe + \ii \sum_{j=1}^N m_j\nu r_j \xxa \rho_{r_j}'(x_j;\epsilon) B\Phi \xxe.
\end{align*}
Thus, using (\ref{phiBPhi}) again,
\begin{align*}\nonumber
[\mathcal{C}, \phi^N_{\vr,\vm}]
=&-2\phi^N_{\vr,\vm}\mathcal{C}+2\xxa\mathcal{C}\phi^N_{\vr,\vm}\xxe
	\\ \nonumber
&+\mathrm{i} \sum\limits_{j=1}^N m_j\nu r_j \left[(\rho_{r_j}^+)'(x_j;\epsilon)\phi^N_{\vr,\vm}
+\phi^N_{\vr,\vm} (\rho_{r_j}^-)'(x_j;\epsilon)\right].
\end{align*}
Upon right-multiplication by $\Omega$, only the second line survives. Hence,
\begin{align}\label{CPhiOmega}
(1-\nu^2)[\mathcal{C}, \phi^N_{\vr,\vm}]\Omega=& \; (1-\nu^2) \mathrm{i} \sum\limits_{j=1}^N m_j\nu r_j \left[(\rho_{r_j}^+)'(x_j;\epsilon)\phi^N_{\vr,\vm}
+\phi^N_{\vr,\vm} (\rho_{r_j}^-)'(x_j;\epsilon)\right] \Omega.
\end{align}
Differentiation of (\ref{rhoKcommutator}) with respect to $x$ gives
\begin{equation}\label{rhoprimeKcommutator}
 r[(\rho_{r}^{\mp})'(x;\epsilon),K_{r'}^{\pm}(x';\epsilon')]=\begin{cases}
\pm C''(\pm r(x-x');\epsilon+\epsilon') & r=r' \\
\pm \tilde{C}''(\pm(x-x');\epsilon+\epsilon') & r=-r'.	
\end{cases}	
\end{equation}
By (\ref{expC}) and (\ref{expCt}),
$$C(rx;\epsilon) =  -\ii \kappa r x - \log \ttet_1(\kappa r x,q; \kappa \epsilon), 
\qquad \tilde{C}(x;\eps) = -\log \ttet_4(\kappa x,q; \kappa \epsilon),$$
and so, by (\ref{tetrr}) and Definition \ref{def:wprreps},
$$C''(r x;\epsilon) =  - \partial_x^2 \log \theta_{r,r}(x; \epsilon) = \wp_{r,r}(x; \epsilon), 
\qquad \tilde{C}''(x;\eps) = -\partial_x^2 \log \theta_{r,-r}(x; \epsilon) = \wp_{r,-r}(x; \epsilon).$$
Hence (\ref{rhoprimeKcommutator}) can be written as
\begin{equation}\label{rhoprimeKcommutator2}
 r[(\rho_{r}^{\mp})'(x;\epsilon),K_{r'}^{\pm}(x';\epsilon')] = \pm \wp_{r,r'}(\pm(x-x'); \epsilon + \epsilon').
\end{equation}
In view of (\ref{anyondefinition}), it follows that
\begin{equation}
 r[(\rho_{r}^{\mp})'(x;\epsilon),\phi_{r',\nu'}(x';\epsilon')] = \mp \mathrm{i}\nu' \wp_{r,r'}(\pm(x-x'); \epsilon + \epsilon')\phi_{r',\nu'}(x';\epsilon')
\end{equation}
and from here we compute
\begin{align*}
&r_j \Big[ (\rho_{r_j}^+)'(x_j;\epsilon)\phi^N_{\vr,\vm}
+\phi^N_{\vr,\vm} (\rho_{r_j}^-)'(x_j;\epsilon)\Big]
	\\
= &\;r_j \phi_{r_1, m_1\nu}(x_1;\epsilon)\cdots\xxa  \rho_{r_j}'(x_j;\epsilon)\phi_{r_j, m_j\nu}(x_j;\epsilon)\xxe\cdots\phi_{r_N, m_N\nu}(x_N;\epsilon)
	\\
&+ \mathrm{i}\bigg(\sum\limits_{k=1}^{j-1}  m_k \nu \wp_{r_j, r_k}(x_k-x_j; 2\epsilon)
+\sum\limits_{k=j+1}^{N} m_k \nu \wp_{r_j, r_k}(x_j-x_k; 2\epsilon)
\bigg)\phi^N_{\vr,\vm}.
\end{align*}
Substituting the above expression into (\ref{CPhiOmega}) and using the identities
\begin{align*}
& (1-\nu^2)m_j = -(m_j^2\nu^2 -1), \qquad j = 1, \dots, N,
\end{align*}
we find
\begin{align}\nonumber
& (1-\nu^2)[\mathcal{C}, \phi^N_{\vr,\vm}]\Omega
	\\ \nonumber
=& \; - \nu\sum_{j=1}^N \mathrm{i}r_j(m_j^2\nu^2 -1)
\phi_{r_1, m_1\nu}(x_1;\epsilon)\cdots\xxa  \rho_{r_j}'(x_j;\epsilon)\phi_{r_j, m_j\nu}(x_j;\epsilon)\xxe\cdots\phi_{r_N, m_N\nu}(x_N;\epsilon) \Omega
	\\\nonumber
&+ (1-\nu^2) \mathrm{i} \sum_{j=1}^N m_j\nu \mathrm{i}\bigg(\sum\limits_{k=1}^{j-1}  m_k \nu \wp_{r_j, r_k}(x_k-x_j; 2\epsilon)
+\sum\limits_{k=j+1}^{N} m_k \nu \wp_{r_j, r_k}(x_j-x_k; 2\epsilon)
\bigg)\phi^N_{\vr,\vm}\Omega
	\\ 
= &  -2 \mathcal{X} \Omega
- 2(1-\nu^2) \nu^2 \sum_{1 \leq j < k \leq N} m_j m_k \wp_{r_j, r_k}(x_j-x_k; 2\epsilon) \phi^N_{\vr,\vm}\Omega.
\end{align}
Since $\wp_{r, r'}(x; \epsilon)$ is an even function of $x$, equation (\ref{Ccommutator}) follows. This completes the proof.

\subsection{Proof of Lemma \ref{lemma6.5}}\label{C4app}
It follows from \eqref{Phiexpectation} that $\langle\Omega,  [\cH_{3,\nu},\Phi_{\mu'}(\alpha)] \Omega\rangle=0$ unless $\mu_+' = \mu_-' = 0$. We will therefore henceforth assume that $\Phi = \Phi_0(\alpha) = \; \xxa \ee^{\ii J(\alpha)}\xxe$.
Since $\mathcal{C} \Omega = 0$, we have
$$\cH_{3,\nu} \Omega = \frac{\nu}{2} \sum_{r=\pm} W_{3, r} \Omega.$$
Let
$$W_{3,r}^\pm \coloneqq \lim_{\eps\to 0^+}  \frac{1}{6\pi} \int_{-\ell}^{\ell} \rho_r^\pm(x;\eps)^3\, \dd{x}.$$ 
By (\ref{W3pm2}) and the definition of normal-ordering, for $r=\pm$, we obtain
\begin{align*}
W_{3,r}\Omega 
& =\lim_{\eps\to 0^+}  \frac1{6\pi} \int_{-\ell}^{\ell} \xxa (2\kappa Q_r + \rho_r^+(x;\eps) + \rho_r^-(x;\eps))^3\xxe \dd{x} \,\Omega
	\\
& =\lim_{\eps\to 0^+}  \frac{1}{6\pi} \int_{-\ell}^{\ell} \xxa \rho_r^+(x;\eps)^3 \xxe \dd{x}\, \Omega
= W_{3,r}^+ \Omega.
\end{align*}
Hence, since $(W_{3,r}^+)^\dagger = W_{3,r}^-$,
\begin{align*}
\langle\Omega,  [\cH_{3,\nu},\Phi] \Omega\rangle
& = \langle \Omega, \cH_{3,\nu} \Phi - \Phi \cH_{3,\nu}  \Omega\rangle
= \frac{\nu}{2} \sum_{r=\pm}  \langle \Omega, (W_{3,r}^- \Phi - \Phi W_{3,r}^+)\Omega\rangle,
\end{align*}
so it is enough to show that $ \sum_{r=\pm} \langle \Omega, (W_{3,r}^- \Phi - \Phi W_{3,r}^+)\Omega\rangle = 0$.

By a straightforward but lengthy calculation, it may be shown that 
\begin{align}\label{eq:W3pm}
W_{3,r}^\pm = &\; (2\kappa)^2
\sum_{n,m=1}^\infty 
 \bigg(- c_{n} c_{m}  s_{n + m} 
 a_{r,\mp n} 
a_{r,\mp m} 
a_{-r ,\mp (n + m)} 
 + s_{n}  s_{m} c_{n + m} 
a_{-r ,\mp n}   
a_{-r ,\mp m}   
a_{r,\mp (n + m)} \bigg).
\end{align}

By (\ref{aPhicommutators}), for $n>0$, we have
\begin{equation}[a_{r, - n}, \Phi ] = -\mathrm{i}  n ( \alpha_{r, - n}c_n - \alpha_{-r, n}s_n)  \Phi \end{equation}
and, consequently,
\begin{equation}a_{r, n} \Phi \Omega =  \mathrm{i}  n ( \alpha_{r, n}c_n - \alpha_{-r, -n}s_n)  \Phi \Omega
=  \mathrm{i}  n \beta_{r,n}  \Phi \Omega, \end{equation}
where
\begin{equation}\label{betadefinition}
\beta(\alpha)_{r,n}\coloneqq c_n \alpha_{r,n} - s_n \alpha_{-r ,-n} \quad (n\in\Z_{\neq 0}).
\end{equation}	
Together with \eqref{eq:W3pm}, this gives (with $\Phi = \Phi_{0}(\alpha)$)
\begin{align}\nonumber 
\langle \Omega, W_{3,r}^- \Phi_{0}(\alpha) \Omega \rangle
= &\; (2\kappa)^2
\sum_{n,m=1}^\infty 
\bigg\langle \Omega, 
 \bigg(- c_{n} c_{m}  s_{n + m} 
 a_{r, n} 
a_{r, m} 
a_{-r , (n + m)} 
	\\\nonumber
& + s_{n_1}  s_{n} c_{n + m} 
a_{-r , n}   
a_{-r , m}   
a_{r, (n + m)} \bigg)
 \Phi_{0}(\alpha) \Omega \bigg\rangle
 	\\\nonumber
= &\; (2\kappa)^2
\sum_{n,m=1}^\infty 
\mathrm{i}^3 n m (n+m)
 \bigg(- c_{n} c_{m}  s_{n + m} 
\beta(\alpha)_{r, n}
 \beta(\alpha)_{r, m}
 \beta(\alpha)_{-r, (n + m)}
 	\\ \label{OmegaWPhiOmega}
& + s_{n}  s_{m} c_{n + m} 
 \beta(\alpha)_{-r, n}
 \beta(\alpha)_{-r, m}
 \beta(\alpha)_{r, (n + m)}
 \bigg)
 \big\langle \Omega, \Phi_{0}(\alpha) \Omega \big\rangle.
\end{align}

We use symmetry to compute $\langle \Omega,\Phi_0(\alpha)W_{3,r}^+\Omega\rangle$ from \eqref{OmegaWPhiOmega}. By (\ref{Phiadjoint}),
\begin{align}\label{Phiadj}
\Phi_0(\alpha)^\dag = \Phi_{0}(-\alpha^*) 
\end{align}
so that
\begin{equation}\langle \Omega, \Phi_0(\alpha) W_{3,r}^+  \Omega \rangle
= \overline{\langle \Omega, W_{3,r}^- \Phi_0(\alpha)^\dagger  \Omega \rangle}
= \overline{\langle \Omega, W_{3,r}^- \Phi_{0}(-\alpha^*) \Omega \rangle}.
\end{equation}
Thus, replacing $\alpha \to -\alpha^*$ in \eqref{OmegaWPhiOmega} yields
\begin{align*}
\langle \Omega, W_{3,r}^- \Phi_{0}(-\alpha^*) \Omega \rangle
= &
(2\kappa)^2
\sum_{n,m=1}^\infty 
\mathrm{i}^3 nm (n+m)
 \bigg(- c_{n} c_{m}  s_{n + m} 
\beta(-\alpha^*)_{r, n}
 \beta(-\alpha^*)_{r, m}
 \beta(-\alpha^*)_{-r, (n + m)}
 	\\
& + s_{n}  s_{m} c_{n + m} 
 \beta(-\alpha^*)_{-r, n}
 \beta(-\alpha^*)_{-r, m}
 \beta(-\alpha^*)_{r, (n + m)}
 \bigg)
 \big\langle \Omega, \Phi_{0}(-\alpha^*) \Omega \big\rangle.
\end{align*}
%where     
%$$\beta(-\alpha^*)_{r,n} = -c_n \overline{\alpha_{r,-n}} + s_n \overline{\alpha_{-r ,n}}.$$
Taking the complex conjugate and using that
\begin{equation*}\overline{\beta(-\alpha^*)_{r,n}} = -c_n \alpha_{r,-n} + s_n \alpha_{-r ,n} = -\beta(\alpha)_{r,-n},\end{equation*}
which follows from \eqref{betadefinition} and \eqref{eq:star}, 
we arrive at
\begin{align}\nonumber
& \langle \Omega, \Phi_0(\alpha) W_{3,r}^+  \Omega \rangle = \overline{\langle \Omega, W_{3,r}^- \Phi_{0}(-\alpha^*) \Omega \rangle}
	\\ \nonumber
&= (2\kappa)^2
\sum_{n,m=1}^\infty 
\mathrm{i}^3 nm (n+m)
 \bigg(- c_{n} c_{m}  s_{n + m} 
\beta(\alpha)_{r, -n}
\beta(\alpha)_{r, -m}
\beta(\alpha)_{-r, -(n + m)}
 	\\ \label{OmegaPhiWOmega}
&\phantom{=\;} + s_{n}  s_{m} c_{n + m} 
\beta(\alpha)_{-r, -n}
\beta(\alpha)_{-r, -m}
\beta(\alpha)_{r, -(n + m)}
 \bigg)
 \big\langle \Omega, \Phi_{0}(-\alpha^*) \Omega \big\rangle.
\end{align}
Note that $\big\langle \Omega, \Phi_{0}(\alpha) \Omega \big\rangle = 1 = \big\langle \Omega, \Phi_{0}(-\alpha^*) \Omega \big\rangle$ by \eqref{Phiexpectation}.
Consequently, combining (\ref{OmegaWPhiOmega}) and (\ref{OmegaPhiWOmega}) gives
\begin{align*}
& \langle \Omega, W_{3,r}^- \Phi_{0}(\alpha) \Omega \rangle
- \langle \Omega, \Phi_0(\alpha) W_{3,r}^+  \Omega \rangle
	\\
& =  
(2\kappa)^2
\sum_{n,m=1}^\infty 
\mathrm{i}^3 nm (n+m)
	\\
&\phantom{=\;}\times \bigg(-c_{n} c_{m}  s_{n + m} 
\Big(\beta(\alpha)_{r, n}
 \beta(\alpha)_{r, m}
 \beta(\alpha)_{-r, (n + m)}
 - \beta(\alpha)_{r, -n}
\beta(\alpha)_{r, -m}
\beta(\alpha)_{-r, -(n + m)}\Big)
 	\\
& \phantom{=\;} + s_{n}  s_{m} c_{n + m} 
\Big( \beta(\alpha)_{-r, n}
 \beta(\alpha)_{-r, m}
 \beta(\alpha)_{r, (n + m)}
 -
\beta(\alpha)_{-r, -n}
\beta(\alpha)_{-r, -m}
\beta(\alpha)_{r, -(n + m)}
\Big) \bigg).
 \end{align*}
 Summing over $r = \pm$ and then letting $r \to -r$ in some terms, we obtain
 \begin{align*}
& \sum_{r=\pm} \Big(\langle \Omega, W_{3,r}^- \Phi_{0}(\alpha) \Omega \rangle
- \langle \Omega, \Phi_0(\alpha) W_{3,r}^+  \Omega \rangle\Big)
	\\
& = 
(2\kappa)^2
\sum_{n,m=1}^\infty 
\mathrm{i}^3 nm (n+m)
 \Big(-c_{n} c_{m}  s_{n + m} 
+ s_{n}  s_{m} c_{n + m} \Big)
	\\
&\phantom{=\;}\times\Big(\beta(\alpha)_{r, n}
 \beta(\alpha)_{r, m}
 \beta(\alpha)_{-r, (n + m)}
 - \beta(\alpha)_{r, -n}
\beta(\alpha)_{r, -m}
\beta(\alpha)_{-r, -(n + m)}\Big) = 0,
 \end{align*}
 where we have used the identity
 $$c_{n} c_{m}  s_{n + m} 
= s_{n}  s_{m} c_{n + m}$$
in the last step. This completes the proof of the lemma.

\section{Fermion representation of the model}\label{app:fermions} 
We make precise the discussion in Section~\ref{sec:conclusions} that $\cH_{3,\nu}$ in Definition~\ref{def:cH} can be interpreted as a Hamiltonian in a quantum field theory of fermions.

We start by recalling how the quantum field theory formally defined by the canonical anticommutator relation \eqref{CAR} and the fermion Hamiltonian $H_0$ in \eqref{H0} can be made mathematically precise, following \cite{langmann2015}. The pertinent quantum field algebra, $\cA_{\F}$, is a $*$-algebra with star operation $\dag$ generated by fermion field operators $\psi^\dag_{r,n}$ and $\psi_{r,n}\coloneqq (\psi^\dag_{r,n})^\dag$ ($r=\pm$, $n\in\Z +\frac{1}{2}$) satisfying the canonical anti-commutator relations 
\begin{equation}\label{CAR3}  
\{\psi_{r,n},\psi^\dag_{r',m}\} = \delta_{r,r'}\delta_{n,m},\quad \{\psi_{r,n},\psi_{r',m}\} =0 
\end{equation} 
for all $r,r'=\pm$ and $n,m\in\Z+\frac{1}{2}$. This algebra $\cA_{\F}$ has a natural representation on a fermion Fock space $\cF_{\F}$ which is fully determined by the following conditions: (i) for $A\in\cF_{\F}$, $A^\dag$ is the Hilbert space adjoint, (ii) the following highest weight conditions are fulfilled, 
\begin{equation}\label{HWC} 
\psi_{\pm,n}\tilde\Omega = \psi_{\pm,-n-1}^\dag\tilde\Omega=0\quad (n\in\Z_{\geq 0}), 
\end{equation} 
for some $\tilde\Omega\in\cF_{\F}$, (iii) the Hilbert space product $\langle\cdot,\cdot\rangle$ is determined by the condition $\langle\tilde\Omega,\tilde\Omega\rangle=1$. 
Using well-known arguments, one can construct from this the fermion Fock space $\cF_{\F}$; see \cite[Section~II.A]{langmann2015}. 

\begin{remark}\label{rem:fermionnotation} 
Note that $L$, $R_+$, $R_-$, $\Omega$,  $\hat J_{r}(2\pi r n/L)$,  and $\hat\psi_{r}( 2\pi r n/L)$ in  \cite{langmann2015} correspond to $2\ell$, $R_+$, $R_-^{-1}$, $\tilde\Omega$, $b_{r,n}$ and $\psi_{r,n}$ here (some of these identifications are suggested by comparing Eqs.~(2.20), (2.21c) and (2.38) in \cite{langmann2015} with our Eqs.\ \eqref{bR1R2} and \eqref{bR3}, other motivations are given in a footnote below). 
\end{remark} 

The fermions $\psi_r(x)$ characterized by the anti-commutator relations \eqref{CAR} can be obtained as limits $\eps\to 0^+$ of two different regularized fermions, $\psi_r(x;\eps)$ and $\tilde\psi_r(x;\eps)$, which both are useful in different ways. We start with the latter (the former is given further below),  
\begin{equation}\label{psirxeps} 
\tilde\psi_r(x;\eps)\coloneqq  \frac1{\sqrt{2\ell}} \sum_{n\in\Z + \frac{1}{2}} \psi_{r,n}\ee^{2\kappa( \ii r n x-|n|\eps)}\quad (r=\pm, \; x\in[-\ell,\ell])  
\end{equation} 
and $\tilde\psi^\dag_r(x;\eps)\coloneqq \tilde\psi_r(x;\eps)^\dag$ ($\eps>0$). These regularized fermions are analogous to the regularized bosons in \eqref{rhor}. In particular, for all $r,r'\in\{\pm\}$, $x,x'\in[-\ell,\ell]$, $\eps,\eps'>0$, they obey the anti-commutator relations 
\begin{equation} 
\{\tilde\psi_r(x;\eps),\tilde\psi^\dag_{r'}(x';\eps')\} = \delta_{r,r'}\delta_{\F}(x-x';\eps+\eps'),\quad \{\tilde\psi_r(x;\eps),\tilde\psi_{r'}(x';\eps')\} =0,
\end{equation} 
with the regularized antiperiodic Dirac delta $\delta_{\F}(x;\eps)$ given by
$$\delta_\F(x;\eps) \coloneqq \frac1{2\ell}\sum_{n\in\Z + \frac{1}{2}} \ee^{2\kappa (\ii n x-|n|\eps)}\quad (x\in\R,\eps>0);$$
this can be proved similarly as Lemma~\ref{lem:bosons}. 
Using additive normal ordering,\footnote{Note that we normal order with respect to $\Omega$ and not with respect to $\tilde{\Omega}$; one can show that this difference is of minor significance for additive normal ordering.}
\begin{equation}\label{nn}
\nna A\nne \, \coloneqq A-\langle \Omega,A\Omega\rangle , 
\end{equation} 
one can show by straightforward computations that
\begin{equation}\label{WFkr} 
W^{\F}_{k,r} \coloneqq \lim_{\eps\to 0}\int_{-\ell}^\ell \nna \tilde\psi^\dag_r(x;\eps)(-r\ii \partial_x)^{k-1}\tilde\psi_r(x;\eps)\nne \dd{x}\quad (r=\pm, \; k=1,2,3)  
\end{equation} 
is equal to 
\begin{equation}\label{WFkr1}  
W^{\F}_{k,r} = \sum_{n\in\Z+\frac{1}{2}} (2\kappa n)^{k-1}\nna\psi^\dag_{r,n}\psi_{r,n}\nne \quad (r=\pm, \; k=1,2,3). 
\end{equation} 
The representation in \eqref{WFkr1} makes manifest that $W^{\F}_{k,r}$ are  well-defined self-adjoint operators on $\cF_{\F}$ for $k=1,2,3$. 
For $k=2$,  this implies in particular that 
\begin{equation} 
\cH_2^{\F} \coloneqq \lim_{\eps\to 0}\sum_{r=\pm} \int_{-\ell}^\ell \nna \tilde\psi^\dag_r(x;\eps)r(-\ii\partial_x)\tilde\psi_r(x;\eps)\nne \dd{x} = 
\sum_{r=\pm}\sum_{n\in\Z+\frac{1}{2}} 2\kappa n\nna\psi^\dag_{r,n}\psi_{r,n}\nne 
\end{equation} 
is a self-adjoint operator bounded from below, and $\tilde\Omega$ is the groundstate of $\cH_2^{\F}$; see \cite[Lemma~2.3]{langmann2015} for further details. 
Let us stress an important detail: as explained in Remark~\ref{rem:vacua}, we have two vacua $\Omega$ and $\tilde\Omega$, with the first characterized by \eqref{aR3}, and the second by \eqref{bR3}.

There is a well-known collection of mathematical facts known as {\em boson-fermion correspondence} which states, loosely speaking, that the quantum field theory of fermions above can be mapped bijectively to the boson quantum field theory defined in Section~\ref{sec:anyonseCS}. In particular, the fermion Fock space $\cF_{\F}$ is identical with the boson Fock space $\cF$ underlying our construction, and the fermions with the relations in \eqref{CAR3}--\eqref{HWC} can be obtained from anyon operators as in Definition~\ref{def:anyons} in the special case $\nu=-1$, $\nu_0=1$ as follows, 
\begin{equation} 
\psi_{r,n} \coloneqq \lim_{\eps\to 0^+} \int_{-\ell}^\ell \frac1{\sqrt{2\ell}}\ee^{-2\kappa\ii n rx}\frac1{\sqrt{2\ell}}\xxa R_r^{-1}\ee^{\ii (2\kappa b_{r,0} rx + K_r(x;\eps/2))}\xxe \dd{x} \quad (r=\pm,n\in\Z +\tfrac{1}{2}) 
\end{equation} 
see Propositions~2.12 and 3.7 in \cite{langmann2015} for precise statements.\footnote{Note that, with our notation and definition \eqref{anyondefinition} of anyons, \cite[Eq.\ (3.2)]{langmann2015} can be written as 
\begin{equation*} 
\psi_r(x;\eps) = \frac1{\sqrt{2\ell}}\xxa R_r^{-1}\exp\Big(2\ii r\kappa \hat{J}_r(0)x + \sum_{n\neq 0} \frac{1}{n} \hat J_r(2\kappa r n) \ee^{2\kappa (\ii r nx-\eps|n|/2)}\Big)\xxe =\frac1{\sqrt{2\ell}}\phi_{r,-1}(x;\eps/2) 
\end{equation*} 
provided $\nu_0=1$,  $Q_r= \hat{J}_r(0)$ and $b_{r,n}=J_r(2\kappa r n)$ for $n\in\Z_{\neq 0}$, as already pointed out in Remark~\ref{rem:fermionnotation}. 
} 

Since 
\begin{equation} 
\psi_{r,n} \coloneqq \lim_{\eps\to 0^+} \int_{-\ell}^\ell \frac1{\sqrt{2\ell}}\ee^{-2\kappa\ii n rx} \tilde\psi_{r}(x;\eps)\,\dd{x} 
\end{equation} 
by \eqref{psirxeps}, 
\begin{equation}\label{tpsireps} 
\psi_r(x;\eps)\coloneqq \frac1{\sqrt{2\ell}}\xxa R_r^{-1} \ee^{\ii (2\kappa b_{r,0} rx + K_r(x;\eps/2))}\xxe 
\end{equation} 
is a regularization of the fermion $\psi_r(x)$. This regularization is useful to find representations of the operators in Definition~\ref{def:cH} in terms of fermions. In fact, as shown at the end of this appendix, the boson operators $Q_r$, $W_{2,r}$, and $W_{3,r}$ in Definition~\ref{def:cH} can be written in terms of the fermion operators $W^{\F}_{k,r}$ in \eqref{WFkr} as follows, 
\begin{equation}\label{W1rW1rF} 
Q_r = G^2\nu_0  W^{\F}_{1,r} \quad (r=\pm), 
\end{equation} 
\begin{equation}\label{W2rW2rF} 
W_{2,r}=  G^2 \Big( W^{\F}_{2,r} - \kappa(1/\nu_0-1)(\nu_0+1)Q_r W^{\F}_{1,r}\Big)  \quad (r=\pm), 
\end{equation} 
\begin{equation}\label{W3rW3rF}
W_{3,r}= G^2 \bigg( W^{\F}_{3,r} - 4\kappa(1/\nu_0-1)Q_r W^{\F}_{2,r} +  (2\kappa)^2\frac{(1/\nu_0-1)^2(2+\nu_0)}{3} Q_r^2W^{\F}_{1,r} +c_0W^{\F}_{1,r} \bigg),
\end{equation} 
and the operator $\cC$ in \eqref{cC} can be represented as 
\begin{multline}\label{cCF} 
\cC = - \lim_{\eps\to 0^+} G^4 \int_{-\ell}^{\ell} \fpint{-\ell}{\ell}  \sum_{r=\pm} \xxa J_r(x;\eps)\wp_1(x'-x)J_{r}(x';\eps)  \xxe \dd{x'}\,\dd{x} \\ - 
\lim_{\eps\to 0^+} G^4 \int_{-\ell}^{\ell} \int_{-\ell}^{\ell} \sum_{r=\pm} \xxa J_{-r}(x;\eps)\wp_1(x'-x+\ii\delta)J_{r}(x';\eps)  \xxe \dd{x'}\,\dd{x},
\end{multline} 
where the regularized fermion densities $J_r(x;\eps)$ are defined by
\begin{equation}\label{Jreps}  
J_r(x;\eps) \coloneqq \nna \tilde\psi_r^\dag(x;\eps)\tilde\psi_r (x;\eps)\nne \quad (r=\pm).
\end{equation} 

Before deriving the identities \eqref{W1rW1rF}--\eqref{cCF}, let us explain how they lead to the relations (\ref{H2H2F}) and (\ref{H3H3F}).
By (\ref{cH2}), (\ref{W1rW1rF}), and (\ref{W2rW2rF}),
$$
\cH_{2} 
= G^2 \sum_{r=\pm} \Big( W^{\F}_{2,r} - \kappa(1/\nu_0-1)(\nu_0+1)Q_r W^{\F}_{1,r}\Big)
= G^2 \mathcal{H}_2^{\F} - \kappa (1/\nu_0^2-1) \sum_{r=\pm} Q_r^2,
$$
which is (\ref{H2H2F}).
To obtain (\ref{H3H3F}), we substitute the expressions (\ref{W3rW3rF}) and (\ref{cCF}) for $W_{3,r}$ and $\mathcal{C}$ into the definition \eqref{cH3} of $\cH_{3,\nu}$ and then use (\ref{WFkr}) to simplify. 
This yields
\begin{multline}\label{cH3nufermion1} 
\cH_{3,\nu} =  \frac{\nu}{2}G^2\lim_{\eps\to 0^+}\int_{-\ell}^{\ell} 
\sum_{r=\pm} \nna \psi_r^\dag(x;\eps) \bigg( -\partial_x^2 + 2\tilde\kappa Q_r r\ii\partial_x + \frac{2+\nu_0}{3}(\tilde\kappa Q_r)^2 +c_0 \bigg) \psi_r(x;\eps)\nne \dd{x}   \\ + 
 \frac{1}2(\nu^2-1)G^4 \lim_{\eps\to 0^+} \int_{-\ell}^\ell \fpint{-\ell}{\ell}  \sum_{r=\pm} \xxa \tilde{J}_r(x;\eps)\wp_1(x'-x)\tilde{J}_r(x';\eps) \xxe \dd{x} \,\dd{x'} 
 \\ + 
 \frac{1}2(\nu^2-1)G^4 \lim_{\eps\to 0^+}\int_{-\ell}^\ell \int_{-\ell}^{\ell}  \sum_{r=\pm} \xxa \tilde{J}_r(x;\eps)\wp_1(x'-x+\ii\delta)\tilde{J}_{-r}(x';\eps) \xxe \dd{x} \,\dd{x'},
\end{multline} 
where $\tilde\kappa = 2\kappa (1/\nu_0-1)$. Recalling the definition (\ref{cH3nufermion}) of $\cH^{\F}_{3,\nu}$, the relation (\ref{H3H3F}) follows.

\begin{remark} A comment on the mathematical status of the results \eqref{W1rW1rF}--\eqref{cH3nufermion1} is in order. To keep the rest of this appendix short, we skip mathematical details which would be needed to make our derivations mathematically precise. For this reason, these results have the status of conjectures. However, we are confident that they can be proved. 
\end{remark}

\begin{proof}[Derivation of \eqref{W1rW1rF}--\eqref{cCF}] Since $\psi_r(x;\eps)=\phi_{r,-1}(x;\eps/2)/\sqrt{2\ell}$ and $\psi^\dag_r(x;\eps)=\phi_{r,1}(x;\eps/2)/\sqrt{2\ell}$ for $\nu_0=1$ by \eqref{anyon_adjoints}, we can use Proposition~\ref{prop:anyons}(a) for $(\nu,\nu')=(-1,1)$ to compute,  for $x,x-a\in[-\ell,\ell]$ and $\eps>0$,
\begin{equation*} 
\psi_r^\dag(x;\eps)\psi_r(x-a;\eps) = \frac1{\ttet_1(\kappa ra,q;\kappa\eps)}\xxa \psi_r^\dag(x;\eps)\psi_r(x-a;\eps) \xxe \quad (r=\pm) 
\end{equation*} 
with $\ttet_1(x,q;\eps)$ in \eqref{ttet1ttet4}. Since $\langle\Omega,\xxa \psi_r^\dag(x;\eps)\psi_r(x-a;\eps) \xxe \Omega\rangle =1/2\ell$ by \eqref{Phiexpectation}, this and \eqref{tpsireps}--\eqref{nn} imply 
\begin{equation*}
\begin{split} 
\nna \psi_r^\dag(x;\eps)\psi_r(x-a;\eps) \nne \, =  & \frac{\xxa \psi_r^\dag(x;\eps)\psi_r(x-a;\eps) \xxe -1/2\ell}{\ttet_1(\kappa ra,q;\kappa\eps)} 
\\ = & \frac{\xxa \ee^{\ii (-2\kappa b_{r,0} ra + K_r(x-a;\eps/2)-K_r(x;\eps/2))}\xxe-I}{2\ell\ttet_1(\kappa ra,q;\kappa\eps)}.
\end{split} 
\end{equation*} 
The idea is to expand both sides of this identity in power series in $a$ and equate coefficients of $a^k$ for $k=0,1,2$. However, to get correct results, one has to take the limit $\eps\to 0^+$ {\em before} expanding in power series in $a$. Thus, in the following, we evaluate 
\begin{equation}\label{psirpsir}
\nna \psi_r^\dag(x)\psi_r(x-a) \nne \, =  \frac{\xxa \ee^{\ii (-2\kappa b_{r,0} ra +K_r(x-a)-K_r(x))}\xxe-I}{2\ell\ttet_1(\kappa ra,q)}
\end{equation} 
where the missing argument $\eps$ means that we set $\eps=0$. Since $\eps=0$, we do not claim that the following computations are mathematically precise.

Expanding the left-hand side in \eqref{psirpsir}, we find
\begin{equation}\label{LHS}
\begin{split} 
\nna \psi_r^\dag(x)\psi_r(x-a) \nne \; = &  \nna \psi_r^\dag(x)\psi_r(x) \nne - a\nna \psi_r^\dag(x)\partial_x\psi_r(x) \nne 
\\ & +\frac{a^2}{2} \nna \psi_r^\dag(x) \partial_x^2\psi_r(x) \nne + \; O(a^3).
\end{split} 
\end{equation} 
Next, we expand the denominator on the right-hand side in  \eqref{psirpsir}; using \eqref{ttet1ttet4} and the identity 
$$\sum_{m=1}^\infty q^{2m}/(1-q^{2m})^2=\sum_{n=1}^\infty nq^{2n}/(1-q^{2n}) \quad \text{for} \quad 0\leq q<1,$$ 
we obtain after straightforward computations that
\begin{multline}\label{tet1a}
  \frac{1}{\ttet_1(\kappa r a,q) }  = \frac{1}{(\ee^{-\ii \kappa ra}-\ee^{\ii \kappa ra})\prod_{m=1}^\infty \big(1-q^{2m}\ee^{2\ii\kappa r a}\big)\big(1-q^{2m}\ee^{-2\ii\kappa ra}\big)} 
  \\ = \frac1{G^2}\frac{\ii}{2\kappa ra}\big(1+\half c_0a^2 + O(a^4)\big) 
\end{multline} 
with $c_0$ in \eqref{constants} and $G$ in \eqref{G}.
To expand the numerator on the right-hand side in  \eqref{psirpsir} we compute, recalling from \eqref{rhor} that $
\rho_r (x) =  2\kappa Q_r  + r K_r'(x)$,  
$$
-2\kappa Q_r ra +K_r(x-a)-K_r(x)
= -ra\rho_r(x) +r\frac{a^2}{2}\rho_r'(x) -r\frac{a^3}{6}\rho_r''(x) + O(a^4).
$$
Using that $b_{r,0}=Q_r/\nu_0 = Q_r + (1/\nu_0-1)Q_r$ and introducing the notation 
\begin{equation}\label{checkrhor} 
\check{\rho}_r(x)\coloneqq \rho_r(x) + 2\kappa (1/\nu_0-1)Q_r, 
\end{equation} 
this yields 
\begin{align*} 
\xxa \ee^{\ii (-2\kappa b_{r,0} ra +K_r(x-a)-K_r(x))}\xxe  \;
=&\; I - \ii ra \check{\rho}_r(x) - \frac{a^2}{2}\xxa (\check{\rho}_r(x)^2 - \ii r\check{\rho}'_r(x))\xxe
	\\
 & + \ii r\frac{a^3}{6}\xxa (\check{\rho}_r(x)^3 - 3\ii r \check{\rho}_r(x)\check{\rho}'_r(x)-\check{\rho}''_r(x) )\xxe +O(a^4) .
\end{align*} 
Combining this with \eqref{tet1a} we obtain the following expansion of the right-hand side in \eqref{psirpsir},  
\begin{multline*} 
 \frac{\xxa \ee^{\ii (-2\kappa b_{r,0} ra +K_r(x-a)-K_r(x))}\xxe-I}{2\ell\ttet_1(\kappa ra,q)} 
 = \frac{1}{2\pi G^2}\xxa \Big(  \check{\rho}_r(x) - \frac{\ii ra}{2}(\check{\rho}_r(x)^2 -\ii r\check{\rho}'_r(x)) \\
 - \frac{a^2}{6}(\check{\rho}_r(x)^3 
 - 3\ii r \check{\rho}_r(x)\check{\rho}'_r(x)-\check{\rho}''_r(x)-3c_0\check{\rho}_r(x)) \Big)\xxe  + O(a^3). 
\end{multline*} 
By equating the coefficients of $a^k$ here and in \eqref{LHS} for $k=0,1,2$, we obtain the identities
\begin{equation}\label{ak0}
\nna \psi_r^\dag(x)\psi_r(x) \nne =  \frac{1}{2\pi G^2} \check{\rho}_r(x)  \quad (r=\pm) , 
\end{equation} 
\begin{equation}\label{ak1}  
 \nna \psi_r^\dag(x)(-r\ii \partial_x)\psi_r(x) \nne = \frac{1}{4\pi G^2}\xxa \check{\rho}_r(x)^2 -\ii r\check{\rho}'_r(x)\xxe  \quad (r=\pm) , 
 \end{equation} 
\begin{equation}\label{ak2}
 \nna \psi_r^\dag(x)(-\partial_x^2)\psi_r(x) \nne =  \frac{1}{6\pi G^2}\xxa \check{\rho}_r(x)^3 - 3\ii r\check{\rho}_r(x)\check{\rho}'_r(x)-\check{\rho}''_r(x)-3c_0\check{\rho}_r(x)\xxe   \quad (r=\pm) . 
\end{equation} 
We recall from \eqref{Wkdefinition} and \eqref{WFkr} that
\begin{equation} 
W_{k,r} = \frac1{2\pi k}\int_{-\ell}^\ell \xxa \rho_r(x)^k\xxe\dd{x},\quad W_{k,r}^{\F} = \int_{-\ell}^\ell  \nna \psi_r^\dag(x)(-r\ii \partial_x)^{k-1}\psi_r(x) \nne  \dd{x} \quad (k=1,2,3).
\end{equation} 
In particular, $W_{1,r}=Q_r$. Thus, integrating both sides of \eqref{ak0}, we obtain 
\begin{multline}\label{WF1r1}
W^{\F}_{1,r} = \frac1{2\pi G^2}\int_{-\ell}^\ell \check{\rho}_r(x) \dd{x} 
=  \frac1{2\pi G^2}\int_{-\ell}^\ell \left(  \rho_r(x) + 2\kappa (1/\nu_0-1)Q_r\right)\dd{x} \\
=  \frac1{G^2} \left( W_{1,r} + (1/\nu_0-1)Q_r \right) = \frac1{G^2\nu_0} Q_r
\end{multline} 
using $4\kappa\ell=2\pi$, which gives \eqref{W1rW1rF}. 
Similarly, integrating \eqref{ak1} and dropping the total derivative term, we obtain
\begin{multline}\label{WF2r1} 
W^{\F}_{2,r} =  \frac1{4\pi G^2}\int_{-\ell}^\ell \xxa \check{\rho}_r(x)^2\xxe  \dd{x} =  \frac1{4\pi G^2}\int_{-\ell}^\ell \xxa  \rho_r(x)^2 + 4\kappa(1/\nu_0-1)Q_r\rho_r(x) \\ + (2\kappa (1/\nu_0-1)Q_r)^2 \xxe  \dd{x} 
=  \frac1{G^2}W_{2,r} +  2\kappa(1/\nu_0-1)Q_r \frac1{G^2}W_{1,r} \\ + \frac1{G^2}\kappa(1/\nu_0-1)^2Q_r^2 
=  \frac1{G^2}W_{2,r}+ \frac1{G^2}\kappa(1/\nu_0^2-1)Q_r^2
\end{multline} 
since  $2\ell(2\kappa)^2 =4\pi\kappa$, which gives  \eqref{W2rW2rF}. 
Finally, by integrating \eqref{ak2}, 
\begin{multline} 
W^{\F}_{3,r} =  \frac1{6\pi G^2}\int_{-\ell}^\ell\left(  \xxa \check{\rho}_r(x)^3\xxe -3c_0\check{\rho}_r(x) \right)\dd{x} =  
\frac1{6\pi G^2}\int_{-\ell}^\ell \xxa  \rho_r(x)^3 + 6\kappa (1/\nu_0-1)Q_r  \rho_r(x)^2 \\ 
+ 3(2\kappa (1/\nu_0-1)Q_r)^2 \rho_r(x) + (2\kappa (1/\nu_0-1)Q_r)^3 \xxe  \dd{x}  - c_0W^{\F}_{1,r} \\
=  \frac1{G^2}W_{3,r} +  4\kappa(1/\nu_0-1)Q_r \frac1{G^2}W_{2,r} +(2\kappa)^2 \nu_0(1/\nu_0-1)^2Q_r^2 W^{\F}_{1,r} \\
+ \frac{1}{3G^2}(2\kappa)^2(1/\nu_0-1)^3Q_r^3 - c_0W^{\F}_{1,r}, 
\end{multline} 
using again $4\kappa\ell=2\pi$; inserting \eqref{W1rW1rF}--\eqref{W2rW2rF} and \eqref{WF1r1}, we obtain \eqref{W3rW3rF} by straightforward computations.
To justify \eqref{cCF}, note that  \eqref{checkrhor} and \eqref{ak0} suggest that 
\begin{equation}\label{rhorJr}
\rho_{r}(x;\eps) = 2\pi G^2 J_r(x;\eps)-2\kappa(1/\nu_0-1)Q_r + O(\eps)
\end{equation} 
with $J_r(x;\eps)$ given in \eqref{Jreps}. Inserting this into \eqref{cC2} we get  
\begin{equation} 
\label{cC4} 
\cC = -\lim_{\eps\to 0^+} G^4 \pi\int_{-\ell}^{\ell} \sum_{r=\pm} \xxa J_r(x;\eps)(TJ_{r,x})(x;\eps) +  J_{-r}(x;\eps)(\tilde{T}J_{r,x})(x;\eps)  \xxe \dd{x} 
\end{equation} 
(the term proportional to $Q_r$ on the right-hand side of \eqref{rhorJr} drops out because 
$$\int_{-\ell}^\ell (TJ_{r,x})(x;\eps)\,\dd{x} =0$$ 
as a consequence of Lemma \ref{lem:TT}, and similarly for $\tilde{T}$). Inserting the definitions \eqref{TT} of $T$, $\tilde{T}$, we get
\begin{multline*} 
\cC = -\lim_{\eps\to 0^+} G^4 \int_{-\ell}^{\ell} \fpint{-\ell}{\ell}  \sum_{r=\pm} \xxa J_r(x;\eps)\zeta_1(x'-x)\partial_{x'}J_{r}(x';\eps)\xxe \dd{x'}\,\dd{x}   \\
-  \lim_{\eps\to 0^+} G^4 \int_{-\ell}^{\ell}\int_{-\ell}^{\ell} \sum_{r=\pm} \xxa J_{-r}(x;\eps)\zeta_1(x'-x+\ii\delta)\partial_{x'}J_{r}(x';\eps)  \xxe \dd{x'}\,\dd{x}.
\end{multline*} 
By partial integration, using the $2\ell$-periodicity of the integrand and $\partial_z\zeta_1(z)=-\wp_1(z)=-\wp_{1}(-z)$, we obtain \eqref{cCF}. 
\end{proof}

\bibliographystyle{unsrt}	
\bibliography{CFTeCS.bib}

\end{document}